\documentclass[a4paper,USenglish,cleveref,autoref,thm-restate]{lipics-v2021}

  \newif\iflong
\newif\ifshort

\longtrue

\iflong
\else
 \shorttrue
\fi

\iflong
\hideLIPIcs
\nolinenumbers
\fi

 \usepackage{amsmath,stmaryrd,graphicx}
 \newcommand{\directions}{\{\uparrow,\downarrow,\leftarrow,\rightarrow\}}
  \newcommand{\bidirections}{\{\updownarrow,\leftrightarrow\}}
\newcommand{\pred}{\textsf{pred}}

\makeatletter
\newcommand{\fixed@sra}{$\vrule height 2\fontdimen22\textfont2 width 0pt\shortrightarrow$}
\newcommand{\shortarrow}[1]{   \mathrel{\text{\rotatebox[origin=c]{\numexpr#1*45}{\fixed@sra}}}
}

\usepackage{verbatim}
\usepackage[textsize=footnotesize]{todonotes}
\usepackage{xcolor}
\usepackage[textsize=footnotesize]{todonotes}

\DeclareMathOperator{\bdist}{\textsf{bd}}
\DeclareMathOperator{\bvect}{\textsf{vect}}

\usepackage{tcolorbox}
\usepackage{cite}

\usepackage{etex}
\usepackage{enumerate}
\usepackage{graphics}
\usepackage{xspace}

\usepackage{caption}
\usepackage{subcaption}
\usepackage{epsfig}
\usepackage{arcs}
\usepackage[T1]{fontenc}
\usepackage{alltt}
\usepackage{amssymb,amsfonts,epsf,url}
\usepackage{graphicx}
\usepackage{pstricks,pstricks-add}
\usepackage{pst-node}
\usepackage{pst-coil}
\usepackage{amsmath}
\usepackage{amsthm}
\usepackage{verbatim}
\usepackage[textsize=footnotesize]{todonotes}
\usepackage{xcolor}
\usepackage[textsize=footnotesize]{todonotes}
\usepackage{fancyhdr}
\usepackage{tikz}
\usetikzlibrary{patterns}

\usetikzlibrary{arrows}

\newcommand{\cmpl}{\textsc{\textup{GCMP}}\xspace} 
\newcommand{\gcmp}{\textsc{\textup{CMP}}\xspace} 
\newcommand{\dcmp}{\textsc{\textup{CMP-D}}\xspace}

\theoremstyle{plain}
 
\RequirePackage{etex}

\newcommand{\bigoh}{\mathcal{O}}

\newtheorem{assumption}[theorem]{Assumption}
 
\newtheorem{reduction}{Reduction Rule}[section]

\theoremstyle{definition}

\newlength{\alginputwidth}

\newcommand{\YES}{\textup{\textsf{YES}}}

\newcommand{\Oh}{{\mathcal O}}
\newcommand{\nat}{\mathbb{N}}

\newcommand{\F}{\mathcal{F}}
\newcommand{\M}{\mathcal{M}}

\newcommand{\NP}{\mbox{{\sf NP}}}
\newcommand{\tw}{\mbox{{\sf tw}}}

\newcommand{\FPT}{\mbox{{\sf FPT}}}

\newcommand{\R}{\mathcal{R}}
\newcommand{\SSS}{\mathcal{S}}
\newcommand{\CCC}{\mathcal{C}}
\newcommand{\dist}{\emph{\textnormal{dist}}}
\newcommand{\sect}{\texttt{\textup{Sect}}}

\newcommand{\T}{\mathcal{T}}

\newcommand{\I}{\mathcal{I}}

\newenvironment{tightcenter}
 {\parskip=0pt\par\nopagebreak\centering}
 {\par\noindent\ignorespacesafterend}

\usepackage{tikz}
\usetikzlibrary{calc}
\usepackage{xargs}
\usepackage{xifthen}
\usepackage{framed}

\usepackage{ctable}

\newlength{\RoundedBoxWidth}
\newsavebox{\GrayRoundedBox}
\newenvironment{GrayBox}[1]%
   {\setlength{\RoundedBoxWidth}{\textwidth-10.5ex}
    \def\boxheading{#1}
    \begin{lrbox}{\GrayRoundedBox}
       \begin{minipage}{\RoundedBoxWidth}%
   }{%
       \end{minipage}
    \end{lrbox}%
    \begin{tightcenter}%
    \begin{tikzpicture}%
       \node(Text)[draw=black!90,fill=white,rounded corners,%
             inner sep=2ex,text width=\RoundedBoxWidth]%
             {\usebox{\GrayRoundedBox}};
        \coordinate(x) at (current bounding box.north west);
        \node [draw=white,rectangle,inner sep=3pt,anchor=north west,fill=white] 
        at ($(x)+(6pt,.75em)$) {\boxheading};
    \end{tikzpicture}
    \end{tightcenter}\vspace{0pt}%
    \ignorespacesafterend
}    

\newenvironment{problem}[2][]{\noindent\ignorespaces%
                                \FrameSep=8pt%
                                \parindent=0pt%
                \vspace*{-.5em}
                \ifthenelse{\isempty{#1}}{%
                  \begin{GrayBox}{\textsc{#2}}%
                }{%
                }
                \newcommand\Prob{{Problem:}}%
                \newcommand\Input{{Input:}}%
                          
                \begin{tabular*}{\textwidth}{@{\hspace{.1em}} >{\itshape} p{1.2cm} p{0.85\textwidth} @{}}%
            }{
                \end{tabular*}%
                \end{GrayBox}%
                \vspace*{-.5em}
                \ignorespacesafterend
            } 
 
       \title{Coordinated Motion Planning is \FPT{} on Discretized Simple Polygons}

 \author{Argyrios Deligkas}{Department of Computer Science, Royal Holloway, University of London, Egham, UK}{Argyrios.Deligkas@rhul.ac.uk}{https://orcid.org/0000-0002-6513-6748}{Supported by Engineering and Physical Sciences Research Council (EPSRC) grant EP/X039862/1.}

 \author{Eduard Eiben}{Department of Computer Science, Royal Holloway, University of London, Egham, UK}{eduard.eiben@rhul.ac.uk}{https://orcid.org/0000-0003-2628-3435}{ Supported by Engineering and Physical Sciences Research Council (EPSRC) grant UKRI4530: Exploring Parameterized Complexity in Blockchain Systems.}

 \author{Robert Ganian}{Algorithms and Complexity Group, TU Wien, Vienna, Austria}{rganian@gmail.com}{https://orcid.org/0000-0002-7762-8045}{Project No. Y1329 of the Austrian Science Fund (FWF), Project No. ICT22-029 of the Vienna Science Foundation (WWTF).}

 \author{Iyad Kanj}{School of Computing, DePaul University, Chicago, USA}{ikanj@cdm.depaul.edu}{0000-0003-1698-8829}{DePaul URC Grants 606601 and 350130.}

\authorrunning{Argyrios Deligkas, Eduard Eiben, Robert Ganian, Iyad Kanj}  
 \Copyright{Argyrios Deligkas, Eduard Eiben, Robert Ganian, Iyad Kanj}  
\ccsdesc[300]{Theory of computation~Parameterized complexity and exact algorithms}

\keywords{coordinated motion planning, multi-agent path finding, parameterized complexity.}  
\category{}  
\iflong
\relatedversion{A short version of this manuscript appears in the proceedings of ICALP 2026.}   
\fi
\ifshort
\relatedversion{Full version: *****arxiv*****}   
\fi

\acknowledgements{}

 \EventEditors{Sayan Bhattacharya, Danupon Nanongkai, Michael Benedikt, and Gabriele Puppis}
\EventNoEds{4}
\EventLongTitle{53rd International Colloquium on Automata, Languages, and Programming (ICALP 2026)}
\EventShortTitle{ICALP 2026}
\EventAcronym{ICALP}
\EventYear{2026}
\EventDate{July 7--10, 2026}
\EventLocation{Royal Holloway, University of London, Egham, United Kingdom}
\EventLogo{}
\SeriesVolume{374}
\ArticleNo{42}

\begin{document}

\maketitle
 
\begin{abstract}
In the coordinated motion planning problem, we are given a graph together with the starting and destination vertices of 
$k$ robots. At each time step, any subset of robots may move, each traversing an edge of the graph, provided that no two robots collide. 
The goal is to compute a schedule that routes all robots to their destinations while minimizing some objective function. In this paper, we focus on the well-studied objective of minimizing the total travel length of all robots. This problem is known to be 
\NP-hard, and it has been shown to be fixed-parameter tractable (\FPT), when parameterized by the number $k$
of robots, on full grids (SoCG 2023) and on bounded-treewidth graphs (ICALP 2024). 

We present a fixed-parameter algorithm for coordinated motion planning, parameterized by the number $k$ of robots, on graphs arising from discretizations of simple polygons.
Such graphs are of particular interest in real-world applications, where planar motion is often constrained to discretized representations of polygonal environments. Moreover, these graphs generalize rectangular grids; consequently, our result constitutes a significant step toward resolving the parameterized complexity of coordinated motion planning on subgrids and, ultimately, planar graphs---two prominent open problems in the field.
 \end{abstract}
  
  \newpage

\section{Introduction}
 In the \textsc{Coordinated Motion Planning} problem (CMP)---also known as \textsc{Multi-Agent Pathfinding}---we are given a graph $G$ and a set $\mathcal{R}$ of robots, each with a designated start vertex and destination vertex in $G$. 
 The goal is to compute a \emph{schedule} that routes all robots to their destinations while minimizing a given objective. Such a schedule consists of a sequence of discrete time steps in which, at each step, every robot may move to an adjacent vertex, subject to the constraint that no vertex or edge is occupied by more than one robot at the same time. The two by far most prominent and well-studied objectives for \gcmp\ are the \emph{makespan} (i.e., the total number of time steps)~\cite{banfi,EibenGK23,FioravantesKKMO24,DeligkasEGK025} and the total distance traveled by the robots~\cite{PapadimitriouRST94,GeftHalperin,icalp,DeligkasEGKLR26}. In this work, we focus exclusively on the latter objective. 

From an algorithmic standpoint, CMP has attracted substantial attention due to its intrinsic computational difficulty and its close connections to graph search, scheduling, and constraint satisfaction. The problem is further motivated by numerous real-world applications, including automated warehouse systems~\cite{kiva,0001TKDKK21}, traffic management and control~\cite{MorrisPLMMKK16}, and robotics~\cite{VelosoBCR15}. A prominent example is the coordination of hundreds or thousands of mobile robots in Amazon fulfillment centers, where systems such as the Kiva robots rely on efficient and reliable multi-robot path planning to move inventory pods while avoiding collisions~\cite{kiva}. The \gcmp{} problem is known to be \NP-hard~\cite{nphard1,nphard2,lavalle} and was posed as the SoCG 2021 Computational Challenge~\cite{socg2021}. 

While the classical complexity of CMP was already studied in the 1990s, investigations from the more refined perspective of parameterized complexity~\cite{DowneyFellows13,CyganFKLMPPS15} have only begun recently. Most existing work considers the number $k=\mathcal{R}$ of robots as the natural parameter. In particular, in 2023 CMP was shown to be fixed-parameter tractable\footnote{That is, admits an algorithm running in time $f(k)\cdot |V(G)|^{\bigoh(1)}$, for some computable function $f$ of $k$.} on full rectangular grids~\cite{EibenGK23}. One year later, a follow-up work established the fixed-parameter tractability w.r.t.\ $k$ and the treewidth $\tw(G)$ of the underlying graph $G$, without restricting to grids~\cite{icalp}. In fact, that work considered a generalization of CMP, termed \cmpl, in which a subset of the robots do not have destinations. In both works, the fixed-parameter (in-)tractability of CMP on subgrids and on planar graphs are identified as significant open problems in the field.

\subparagraph{Contributions.}
In most practical settings, robot motion takes place in a continuous geometric environment. While there exists a large and largely separate body of work on heuristics and approximation algorithms for robot motion in the plane~\cite{heuristic3,heuristic2,heuristic1,heuristic4,lavalle,lavalle1,halprinunlabeled,demaine,survey,alagar,sharir,sharir1,sharir2}, a common and highly successful paradigm is to discretize the environment, yielding a graph on which the CMP problem is solved. Typically, the free space of a polygonal region is discretized using grids, after which CMP algorithms are applied to the resulting graph~\cite{grid2,grid1}. 

This paradigm underlies a wide range of CMP algorithms, including classical grid-based formulations, conflict-based search and its variants, and hybrid methods that combine sampling-based motion planning with discrete multi-agent coordination~\cite{banfi,grid1,grid2,heuristic3,heuristic2,heuristic1,heuristic4,lavalle,lavalle1,halprinunlabeled,survey,FioravantesKKMO24}. Discretization has also played a key role in recent approximation algorithms for the problem~\cite{SODA26new}. Consequently, the graphs arising in many CMP applications can naturally be viewed as discretizations of simple polygons.

In this paper, we establish the fixed-parameter tractability of \textsc{Coordinated Motion Planning}, as well as its aforementioned generalization \cmpl, on graphs arising from discretizations of simple polygons (see Figure~\ref{fig:discretizedpolygon} for an illustration and Section~\ref{sec:prelims} for a formal definition).

\begin{restatable}{theorem}{thmmain}
    \label{thm:main}
    \cmpl (and hence also \gcmp) is fixed-parameter tractable w.r.t.\ $k$ on discretizations of simple polygons.
\end{restatable}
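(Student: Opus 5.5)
The plan is to reduce to the bounded-treewidth case from~\cite{icalp}: \cmpl{} is \FPT{} parameterized by $k+\tw(G)$. A discretized simple polygon can contain arbitrarily large full grids, so it does not have bounded treewidth. We therefore want to cut $G$ down to a subgraph $G'$ with $\tw(G')\le g(k)$ on which an optimal (or equally good) schedule exists. To do this we will prove a \emph{locality/routing lemma}. There is a function $h(k)$ such that some optimal schedule keeps every robot within distance $h(k)$ of a ``skeleton'' $S$. Here $S$ is the union of the start and destination cells of all robots together with fixed shortest paths between them, for instance geodesic shortest paths inside the polygon.

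\textbf{Step 1 (structure of discretized polygons).} We will show that every vertex that is ``far from the boundary'' lies in a large full-grid region of the discretization. We will also show that the parts of $G$ near the boundary behave like thin corridors and pockets. Since the polygon is simple, $G$ is a subgrid with no holes. Two consequences follow. First, shortest paths between cells are essentially geodesic staircases. Second, every vertex within distance $h(k)$ of $S$ lies either in a full grid patch around $S$ or in a boundary strip.

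\textbf{Step 2 (local rerouting).} Take an optimal schedule that minimizes, as a secondary criterion, total deviation from $S$. We will show that no robot ever goes farther than $h(k)$ from $S$. The argument follows the grid technique of~\cite{EibenGK23}. Wherever robots interact, only $k$ robots are present, so there is free space of size $\Theta(k)$ around the interaction. Within this free space we can perform a detour and swap gadget that costs only $O(k)$ extra moves, or even zero extra moves when we compare against a detour. Any excursion beyond distance $h(k)$ is therefore strictly costlier than some local replacement. Simplicity of the polygon is used here: an excursion around a far obstacle cannot be beneficial because no obstacle separates the region. Any boundary pocket the robots might use for parking can be replaced by a pocket closer to $S$, or else it is already within the $h(k)$-neighbourhood.

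\textbf{Step 3 (bounding treewidth).} Let $G'$ be the $h(k)$-neighbourhood of $S$. The $O(k^2)$ geodesic paths in $S$ form a planar configuration whose pairwise crossings and overlaps can be organized into $k^{O(1)}$ pieces. This gives a decomposition of $G'$ along each path with bags of width $O(h(k)\cdot k^{2})$. The width bound holds because a width-$h$ thickened geodesic in a grid has treewidth $O(h)$, and unions of $p$ such strips have treewidth $O(p\,h)$. Running the algorithm of~\cite{icalp} on $G'$ then gives running time $f(k)\cdot|V(G)|^{O(1)}$. For \cmpl{}, robots without destinations are handled by including only their start cells in $S$: their optimal displacement is also bounded by $h(k)$ via the same rerouting.

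The main obstacle is Step 2. Robots can legitimately need to travel far from all geodesics, for example to move into a side pocket so that other robots can pass in a narrow corridor. A naive ``distance to $S$'' bound may therefore fail where $S$ runs through a width-1 corridor. To handle this, I would first try a charging argument: every move off $S$ must be paid for by a blocking interaction, and narrow corridors with $k$ robots need only pockets within distance $O(k)$. If that fails, I would enlarge $S$ by adding, near each narrow part of $S$, the nearest $k$ ``parking'' cells, and then redo the rerouting argument relative to this enlarged skeleton.
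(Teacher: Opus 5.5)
Your overall plan (reduce the graph to bounded treewidth, then call the $k+\tw$ algorithm of~\cite{icalp}) matches the paper's final step. However, the core of your argument, Step~2, has a genuine gap. You justify locality by saying a far excursion is ``strictly costlier than some local replacement'', and this does not hold. In a discretized polygon, the vertices lying on shortest $s_i$--$t_i$ paths can fill an arbitrarily large rectangle, since every monotone staircase across a full grid is shortest. Waiting also costs nothing. So a robot can be arbitrarily far from your fixed geodesic with \emph{zero} extra length, and no cost comparison gives a contradiction.

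What you would actually need is an exchange argument. It must pull all robots' routes toward $S$ \emph{simultaneously}, without adding a single move and without creating new head-on conflicts. Crowding up to $k$ zero-slack routes into a thin neighbourhood of $S$ is exactly what creates such conflicts, for instance where $S$ hugs the boundary. That argument is the whole difficulty, and your ``main obstacle'' paragraph only lists fallbacks. You also never use the bounded-slack preprocessing (Theorem~\ref{thm:fptapx}) or the turn/wait bound $\mu$ (Theorem~\ref{thm:boundedturnsdistance}). These are what make every rerouting argument in the paper go through.

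The paper never confines robots near a fixed skeleton. It partitions $G$ into sectors by bend vectors to ports and uses slack $\lambda=\bigoh(k^5)$ in three ways:
\begin{itemize}
\item It deletes histogram vertices farther than $\lambda/2$ from the baseline.
\item It deletes staircase vertices farther than $(k+1)(\lambda+1)$ from the base $Q_S$, via rerouting through a free $(k+1)$-square $\Omega$.
\item For large rectangle sectors, it proves only a canonical form in which robots cross the second frame straight, without turning or waiting.
\end{itemize}
It then keeps \emph{every} row and column of the rectangle, replacing the interior by degree-2 weaving paths (Reduction Rule~\ref{red:rect}), so the result is not even a subgraph of $G$. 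This deliberately avoids deciding which rows or columns of a huge rectangle the robots use, which is precisely what your $h(k)$-neighbourhood of $S$ would have to pin down.

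Step~3 is also unsupported. ``Unions of $p$ strips have treewidth $\bigoh(ph)$'' is not a valid principle, because treewidth is not subadditive under union. For example, the $n\times n$ grid is the union of its row paths and its column paths, each a graph of treewidth $1$. You would need an actual argument controlling how $\bigoh(k^2)$ geodesics can interleave in the polygon. The paper devotes Section~\ref{sec:treewidth} to the analogous issue: bounded treewidth of $\Gamma$, a straight path meeting at most $3\kappa$ sectors, and special vertices. It explicitly warns that bounded-treewidth pieces chained along a bounded-treewidth pattern can still have large treewidth.

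Finally, your claim that free robots have displacement bounded by $h(k)$ is false in $G$. A free robot sitting in a long width-one corridor on another robot's geodesic may have to be pushed along the whole corridor. The paper handles $\F$ through the preprocessing of~\cite{icalp}, which bounds final positions within $ck^5$ of the start, followed by branching on those positions.
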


\begin{figure}[htbp]
\center
\includegraphics{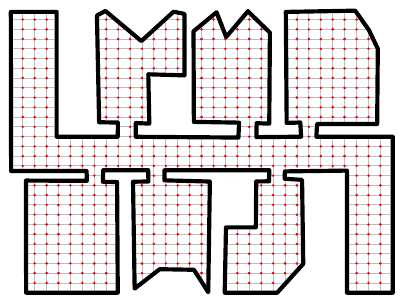}
\caption{An illustration of a discretization of a simple polygon.}
\label{fig:discretizedpolygon}
\end{figure}

Theorem~\ref{thm:main} extends existing tractability results beyond full rectangular grids to a broader and more realistic class of graphs. Although the gap between discretized simple polygons and full rectangular grids may appear modest at first glance, the presence of a potentially highly intricate boundary fundamentally changes the problem: any algorithm must optimally navigate robots through narrow corridors and  tight geometric constraints. In this setting, both core ingredients of the fixed-parameter algorithm for full rectangular grids---namely, bounding the number of turns~\cite[Theorem 17]{EibenGK23} and the use of ILP formulation~\cite[Theorem 18]{EibenGK23})---break down completely.

A second ``direct'' approach to proving Theorem~\ref{thm:main} would be to build on the recently established fixed-parameter tractability of \gcmp{} w.r.t.\ $k$ plus the treewidth of $G$~\cite{icalp}. To this end, one might hope to apply the \emph{irrelevant vertex technique}~\cite{AdlerKKLST17} to iteratively identify and remove vertices from $G$ until obtaining an equivalent instance whose underlying graph has bounded treewidth. 

While the irrelevant vertex technique has been successfully applied to the related \textsc{Planar Disjoint Paths} problem~\cite{AdlerKKLST17}, it does not preserve shortest paths, a property that is crucial for CMP. Indeed, fixed-parameter tractability for \textsc{Planar Shortest Disjoint Paths} was established only in a very recent breakthrough~\cite{SODA26paths}. Moreover, the approach developed there does not extend to \gcmp{}, as it relies fundamentally on the requirement that the paths be disjoint.

Instead of identifying and removing irrelevant vertices, we prove Theorem~\ref{thm:main} using a novel treewidth-reduction technique.
Our approach does \emph{not} necessarily yield a subgraph of the original graph; rather, it modifies the original graph by introducing ``weaving'' gadgets.
This creates a new graph that is neither planar nor smaller than the original graph; however, it produces a graph with bounded treewidth.

Our technique conceptually rests on the following four main ingredients:
\medskip
\begin{enumerate}
\item a decomposition of $G$ into so-called \emph{sectors};

\item a \emph{flattening} step that identifies and removes parts of $G$ that are ``too far'' to be useful; 

\item a \emph{weaving} reduction step that replaces certain interior parts of sectors with a collection of vertex-disjoint paths; and

\item a proof that exhaustive application of these steps results in a graph of bounded treewidth.
\end{enumerate}
\medskip

We then solve \gcmp{} using the fixed-parameter algorithm parameterized by $k+\tw(G)$~\cite{icalp}.
An overview of these four ingredients, together with a discussion of their relationship to previous work, is provided in Section~\ref{sec:overview}. 

\subparagraph*{Further Related Work.}
In addition to the travel-distance minimization objective studied in this paper, coordinated motion planning has been extensively investigated from a parameterized complexity perspective under the objective of minimizing the makespan~\cite{DeligkasEGK025,EibenGK23,FioravantesKKMO24,FKMMO25}.

Although the computational complexity of many fundamental coordinated motion planning variants was established several decades ago~\cite{PapadimitriouRST94}, recent years have witnessed renewed interest in these through the lens of parameterized-complexity. Beyond the earlier work on \cmpl~\cite{icalp}, this line of research has yielded parameterized analyses for settings involving high-speed robots~\cite{EGKS25SoCG}, fixed-parameter algorithms on solid grid graphs~\cite{EibenGK23}, lower-bound frameworks for tree-like and other structurally restricted graph classes~\cite{FioravantesKKMO24,FKMMO25}, as well as fixed-parameter approximation schemes for tree graphs~\cite{DeligkasEGK025}.

Importantly, several of these contributions—most notably the latter three—focus on makespan minimization in a parallel execution model, giving rise to complexity phenomena that differ markedly from those arising in energy-based objectives. For example, even deciding the existence of a constant-makespan schedule is \NP-hard on full grids~\cite{EibenGK23}. In contrast, when the objective is to minimize the total energy, constant-energy schedules can be computed in polynomial time; moreover, this result extends to fixed-parameter tractability for graph classes with bounded local treewidth~\cite[Theorem~5]{icalp}.

We also note recent work on the parameterized complexity of other variants of coordinated motion planning~\cite{DemaineHM14,halprinunlabeled}, which differ from the problem considered here in both their modeling assumptions and optimization objectives.

\section{Preliminaries}
\label{sec:prelims}
 
We use standard graph-theoretic terminology~\cite{Diestel12}. For $z \in \nat^*$, we write $[z]$ for $\{1,\dots,z\}$. 

Let $P$ be a simple polygon, that is, a non-self-intersecting polygon without holes, embedded in the Euclidean plane and containing a unit-length grid $H$. The \emph{discretization} of $P$ is the subgraph of $H$ consisting of all vertices and edges that are fully contained in $P$. A finite subgrid $G$ is called a \emph{discretized polygon} if it is the discretization of some simple polygon $P$\footnote{We note that the notion of discretized polygons considered here can also be seen as a graph representation of simply connected \emph{polyominos}.}. 

We assume that every discretized polygon is equipped with a fixed embedding that specifies whether each edge is vertical or horizontal. Accordingly, we say that a vertex $v$ is a $\rightarrow$-neighbor of a vertex $w$ if $v$ can be reached from $w$ by traversing an edge that travels rightward, and analogously for $\{\uparrow,\downarrow,\leftarrow\}$.

Given a path $P$ in a discretized polygon $G$, a vertex $v\in V(P)$ is called a \emph{bend} if it has two incident edges in $P$ and these edges do not leave $v$ in opposite directions. If $P$ has no bends, we call it a \emph{straight path}. Moreover, a vertex $v\in V(G)$ is called a \emph{boundary vertex} if it has degree at most $3$; intuitively, such vertices lie near the boundary of the polygon $P$ underlying $G$.

\subparagraph{The Motion Planning Problem.}
In our problem of interest, we are given a graph 
$G$ together with a set $\R=\{R_1, R_2, \ldots, R_k\}$ of $k$ robots. The set $\R$ is partitioned into two subsets $\M$ and $\F$; the former, $\M$, consists of robots with prescribed destinations, while $\F$ contains the remaining ``free'' robots. Each robot $R_i \in \M$ is associated with a starting vertex $s_i$ and a destination vertex $t_i$ in $V(G)$, whereas each robot $R_i \in \F$ is associated only with a starting vertex $s_i \in V(G)$.
The elements of the set $\{s_i \mid i \in [k]\} \cup \{t_i \mid R_i \in \M\}$ are called \emph{terminals}. We assume that all starting vertices $s_i$ are pairwise distinct and that all destination vertices $t_i$ are pairwise distinct.

Intuitively, at each time step a robot may either move to an adjacent vertex  or remain at its current vertex, and all robots may move simultaneously. To formalize this, a \emph{route} for a robot $R_i$ be a tuple $W_i=(u_0,\ldots,u_q)$ 
of vertices in $G$ satisfying: (1) $u_0=s_i$ and $u_q=t_i$, and (2) for all $j \in [q]$, either $u_{j-1}=u_j$ or $u_{j-1}u_j \in E(G)$. 

We use a discrete time interval $[0,q]$, where $t \in \nat$, to index the sequence of robot movements; at each time step $x \in [0,q]$, every robot either remains stationary or moves to an adjacent vertex.
  
Two routes $W_i=(u_0,\ldots,u_q)$ and $W_j=(v_0,\ldots,v_q)$, where $i \neq j \in [k]$, are said to be \emph{non-conflicting} if (i) for all $r \in \{0,\ldots,q\}$, $u_r \neq v_r$, and (ii) there does not exist an $r \in \{0,\ldots,t-1\}$ such that $v_{r+1}=u_r$ and $u_{r+1}=v_r$. Otherwise, the routes $W_i$ and $W_j$ are said to \emph{conflict}. Intuitively, two routes conflict if, at the same time, the corresponding robots either occupy the same vertex or traverse the same edge in opposite directions. 
A \emph{schedule} $\SSS$  is a set of pairwise non-conflicting routes $W_i$, $i \in [k]$, defined over a common time interval $[0,q]$. The (\emph{traveled}) \emph{length} of a route (or of its associated robot) in $\SSS$ is the number of time steps $j$ such that $u_j \neq u_{j+1}$. The \emph{total traveled length} of a schedule is the sum of the lengths of all its routes; this quantity is commonly referred to as the \emph{energy} in the literature (see, e.g.,~\cite{socg2021}). 
 
Using the terminology introduced above, we can formally define the problem 
of interest.

 \begin{problem}[]{{\cmpl}}
\Input & A tuple $(G, \R=(\M,\F), k, \ell)$, where $G$ is an undirected graph, $k, \ell \in \nat$, and $\R=\{R_i \mid i \in [k]\}$ is a set of robots partitioned into sets $\cal M$ and $\cal F$, where each robot in $\cal M$ is given as a pair of vertices $(s_i, t_i)$  and each robot in $\cal F$ as a single vertex $s_i$.\\
 \Prob & Is there a schedule for $\R$ of total traveled length at most $\ell$?
\end{problem}

The {\sc Coordinated Motion Planning Problem} (\gcmp{}) is the restriction of \cmpl{} to instances with $\F= \emptyset$. Throughout the paper, we use \dcmp\ as shorthand for \gcmp\ restricted to graphs which are discretized polygons. We note that our result (Theorem~\ref{thm:main}) is constructive: it also outputs a schedule of length at most $\ell$, if one exists.

\iflong
\subparagraph*{Parameterized Complexity.}
 
 A {\it parameterized problem} $Q$ is a subset of $\Omega^* \times
\nat$, where $\Omega$ is a fixed alphabet. Each instance of $Q$ is a pair $(I, \kappa)$, where $\kappa \in \nat$ is called the {\it
parameter}. A parameterized problem $Q$ is
{\it fixed-parameter tractable} (\FPT)~\cite{CyganFKLMPPS15,DowneyFellows13,FlumGrohe06}, if there is an
algorithm, called an {\em \FPT-algorithm},  that decides whether an input $(I, \kappa)$
is a member of $Q$ in time $f(\kappa) \cdot |I|^{\bigoh(1)}$, where $f$ is a computable function and $|I|$ is the input instance size.  The class \FPT{} denotes the class of all fixed-parameter tractable parameterized problems. We denote by \emph{\FPT-time} running time of the form $f(\kappa) \cdot |I|^{\bigoh(1)}$. We refer to~\cite{CyganFKLMPPS15,DowneyFellows13} for more information on parameterized complexity. 

\subparagraph*{Treewidth.}
The treewidth of a graph is defined via the notion of tree decompositions as follows:

\begin{definition}[{\bf Tree decomposition}]\label{def:treewidth}
{\rm A \emph{tree decomposition} of a graph $G$ is a pair $(T,\beta)$ of a tree $T$ and $\beta: V(T) \rightarrow 2^{V(G)}$,
such that: 
\begin{itemize}
\item $\bigcup_{t \in V(T)} \beta(t) = V(G)$, 
\item for any edge $e \in E(G)$, there exists a node $t \in V(T)$ such that both endpoints of $e$ belong to $\beta(t)$, and
\item the \emph{interpolation property}: for any vertex $v \in V(G)$, the subgraph of $T$ induced by the set $T_v = \{t\in V(T): v\in\beta(t)\}$ is a tree.
\end{itemize}
The {\em width} of $(T,\beta)$ is $\max_{v\in V(T)}\{|\beta(v)|\}-1$. The {\em treewidth} of $G$ is the minimum width of a tree decomposition of $G$.}
\end{definition}

Let $(T,\beta)$ be a tree decomposition of a graph $G$. We refer to the vertices of $T$ as \emph{nodes}. We assume throughout that $T$ is a rooted, which induces the usual parent-child and ancestor-descendant relationship among its nodes. A \emph{leaf} node or a \emph{leaf} of $T$ is a node with degree exactly one in $T$ that is not the root. All nodes that are neither the root nor leaves are called \emph{non-leaf} nodes. 

For each node $t \in T$, The set $\beta(t)$ is called the \emph{bag} at $t$.
  For two nodes $u,t \in T$, we say that $u$ is a {\em descendant} of $t$, denoted $u \preceq t$, if $t$ lies on the unique path connecting $u$ to the root. Note that every node is its own descendant; if $u\preceq t$ and $u\neq t$, we write $u\prec t$.
  
 For a tree decomposition $(T,\beta)$, we further define a mapping $\gamma:V(T)\to 2^{V(G)}$ by $\gamma(t)=\bigcup_{u\preceq t} \beta(u)$.

 \fi

\section{Technical Overview}
\label{sec:overview}

In this section, we present an overview of our fixed-parameter algorithm for \dcmp, parameterized by the number $k$ of robots (Theorem~\ref{thm:main}).

\subsection{Preprocessing: Initial Set-Up}
\label{sec:preprocess}
The starting point for our work is a preprocessing subroutine developed in recent work by Deligkas, Eiben, Ganian, Kanj and Ramanujan~\cite[Theorem 15---full version]{icalp}. This subroutine can be viewed as a fixed-parameter Turing-reduction which solves any instance of \cmpl\ by making calls to structurally ``simpler'' instances. In particular, in the resulting instances we only seek schedules in which every robot travels at most $c k^5$ steps, where $c$ is a constant, beyond its shortest-path distance; that is, each robot has \emph{travel slack} at most $\lambda=ck^5$. 

Moreover, we may assume that every robot in the new instance has a prescribed destination, allowing us to focus exclusively on \gcmp. The reduction preserves the property of being a discretized polygon. Consequently, in all subsequent steps we restrict our attention to  solving \dcmp\ with the above bound $\lambda$ on the travel slack.

\subsection{Step 1: Sectors and Sector Graphs}\label{subsec:sectors}
To exploit the geometry induced by discretized polygons, we adapt the notion of \emph{sectors} recently introduced in the setting of fixed-parameter extension algorithms for orthogonal (geometric) drawings~\cite{BhoreGKMN23}. Specifically, we associate with each vertex a \emph{bend vector} that records, for every terminal and every directional axis $d\in \bidirections$, the minimum number of bends on any shortest path arriving from direction $d$. Connected subgraphs whose vertices share the same bend vector form a \emph{sector}. These sectors partition the graph into regions whose vertices exhibit similar behavior with respect to the number of turns required to reach all terminals.

We then define the \emph{sector graph} $\Gamma$, whose vertices correspond to the sectors of $G$ and whose edges capture adjacencies between sectors. A key contribution of our work is establishing the following structural insights about $\Gamma$, which will later be used to bound the treewidth of $G$ after exhaustive application of our reduction rules.
     
\noindent \textbf{Insight 1.}
 The treewidth of $\Gamma$ is bounded by a function of $k$\iflong ~(Theorem~\ref{thm:sectortw})\fi.\\
\textbf{Insight 2.} 
 Every straight path in $\Gamma$ intersects at most $12k$ sectors\iflong  ~(Lemma~\ref{lem:pathsectors})\fi.\\
\textbf{Insight 3.}
 Every sector has at most $8$ ``non-trivial'' neighbors in $\Gamma$\iflong ~(Lemma~\ref{lem:fewnonclean})\fi.

\begin{remark*}
\emph{
In prior work introducing sectors}~\cite{BhoreGKMN23} \emph{in the geometric setting, $d$ ranged over $\directions$ and the anchors defining the sectors were required to lie on the boundary of the polygon. In our setting, the change to $\bidirections$ is necessary; without it, several of the lemmas underpinning the treewidth proof \iflong(including, e.g., Lemma~\ref{lem:sectortree})\fi would not hold.
 This is one---though not the primary---reason why the  treewidth bound from}~\cite[Theorem 22]{BhoreGKMN23} \emph{in the geometric setting cannot be reused.
 The main reason we cannot directly adapt the proofs from the previous paper is that those rely on prior pruning steps that have removed ``dangling'' sectors} (\emph{which cannot be replicated for \dcmp}).
\end{remark*}

\begin{figure}
     \centering
     \includegraphics[page=10,width = 0.65\linewidth]{figures_reduction_rules.pdf}\includegraphics[page=12, width=0.3\linewidth]{figures_reduction_rules.pdf}
     \caption{Left: An illustration of the sectors of a discretized polygon w.r.t.\ three terminals defining six ports $p_1, \ldots, p_6$. Three of the sectors have highlighted their baselines, showcasing the three non-trivial types of a sector. 
     Notice that terminals form their own ``degenerate'' singleton sectors.
     Right: The corresponding sector graph $\Gamma$ as defined by the sectors.}
     \label{fig:sectorpartition}
 \end{figure}
As a secondary structural insight, we show that each nontrivial sector contains a canonical \emph{baseline}: a straight path from which the entire sector is reachable by straight extensions. Depending on the baselines present, we obtain a small taxonomy (see Figure~\ref{fig:sectorpartition}).
\begin{itemize}
    \item \textbf{Histogram sectors}, with a single baseline.
    \item \textbf{Staircase sectors}, with two orthogonal baselines.
    \item \textbf{Rectangle sectors}, with two parallel baselines forming a full subgrid.
\end{itemize}

Crucially, there exists a shortest path from a terminal to a vertex in a nontrivial sector which intersects one of the sector's baselines. This fundamental geometric constraint enables us to apply localized reductions while preserving shortest-path feasibility.

\subsection{Steps 2-3: The Reductions}\label{subsec:reductions}
Building on the taxonomy and structural insights described above, we design reduction rules tailored to each sector type. These rules simplify the instance while preserving full equivalence with respect to the existence of a feasible schedule. 

\subparagraph{Reduction for Histogram Sectors.}
For histogram sectors, we observe that any vertex that lies more than $\frac{\lambda}{2}$ steps away from the baseline cannot be part of any route in a feasible solution. We therefore ``flatten'' histogram sectors by removing all such vertices, as illustrated in Figure~\ref{fig:red_histsummary}\iflong and formalized in Lemma~\ref{lem:histogramreduction}\fi.  

\begin{figure}[h]
    \centering
    \includegraphics[page=1]{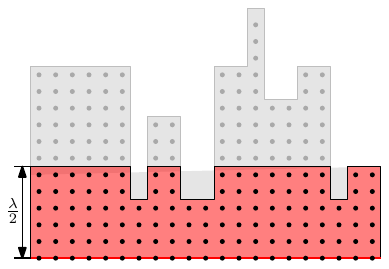}
    \caption{Illustration of the reduction for histogram sectors. The gray vertices are removed from the graph.}
    \label{fig:red_histsummary}
\end{figure}

\subparagraph*{Reduction for Staircase Sectors.}
A staircase sector $S$ contains two orthogonal baselines $P_S^1$ and $P_S^2$ that separate all terminals from the rest of the sector (and from any sectors attached to the interior). We show that if there exists a hypothetical solution that routes a robot out of a ``buffer zone'' of radius $(k+1)(\lambda+1)$ around these baselines into a staircase, then a sequence of rerouting arguments yields an alternative solution in which all robots stay within the buffer zone. This, in turn, allows us to safely remove all vertices outside the buffer zone\iflong (cf. Lemma~\ref{lem:staircasered} and Reduction Rule~\ref{red:stair})\fi. An illustration is provided in Figure~\ref{fig:red_staircasesummary}.

\begin{figure}
    \centering
    \includegraphics[page=13]{figures_reduction_rules.pdf}
    \caption{Illustration of the reduction for staircase sectors. The red region highlights a staircase sector $S$. Gray vertices indicate the vertices removed by the application of the reduction.}
    \label{fig:red_staircasesummary}
\end{figure}

\subparagraph*{Reduction for Rectangle Sectors.}
  Rectangle sectors require a more intricate reduction. We show that any feasible schedule can be transformed into a \emph{semi-canonical} form in which all turns and waiting steps occur within a bounded-width frame along the boundary of the sector. As a consequence, the \emph{second frame} $S^*$---a central rectangular region in a rectangle sector $S$---contains no turns or waiting vertices/steps (see Figure~\ref{fig:red_rectsummary}). The proof is non-trivial and relies on a sequence of buffer-shifting arguments, together with a previously established upper bound $\mu\in k^{\bigoh(k^2)}$ on the number of turns in a hypothetical solution, for full rectangular grids~\cite{EibenGK23}.

\begin{figure}
    \centering
    \includegraphics[page=3]{figures_reduction_rules.pdf}
    \caption{Illustration for reducing the rectangle sectors. The red part is the part of the rectangle sector $S$ that is not inside the second frame $S^*$. The horizontal and vertical lines connecting the opposite sides of the second frame $S^*$ do not intersect each other. }
    \label{fig:red_rectsummary}
\end{figure}

We then remove all internal vertices of $S^*$ and replace them with degree-2 \emph{weaving paths} connecting pairs of opposite boundary vertices\iflong, as formalized in Reduction Rule~\ref{red:rect}\fi. These paths preserve the relative distances relevant to robot routing. Although this ``reduction'' results in an instance with more vertices and destroys planarity, we later show that it makes tangible progress toward reducing the treewidth of $G$.

\subsection{Step 4: From Reductions to Bounded Treewidth}\label{subsec:tw_bound}
By exhaustively applying the reductions outlined above, we obtain a graph in which every sector either has small width or small height, or can be decomposed into at most four components of small width or height, together with a set of additional degree-2 paths. At this stage, it is not difficult to show that each sector individually has bounded treewidth. The main technical challenge lies in proving that the entire graph---where sectors are interconnected along the edges of the sector graph $\Gamma$---has bounded treewidth. 

\begin{remark*}
\emph{
A treewidth bound on $G$ does \emph{not} follow directly from the fact that $\Gamma$ has bounded-treewidth and its nodes represent connected components of $G$ with bounded-treewidth. For instance, consider a long sequence of ``flat but wide'' sectors stacked on top of each other, i.e., connected along a single long path $\Gamma$. A graph $G$ with this structure could have arbitrarily large treewidth, even though each individual sector and the sector graph $\Gamma$ are of bounded treewidth. While Insight~2 excludes this particular configuration, more complex arrangements of sectors could arise and be chained together, requiring a more careful argument.
 }
\end{remark*}

Our treewidth-bounding argument for $G$ leverages a bounded-width tree-decomposition $\T$ of the sector graph $\Gamma$ (as per Insight 1) to guide the construction of a tree-decomposition 
of $G$. We first preprocess $G$ according to $\Gamma$ to safely detach all sectors that are ``trivial'' in the sense of Insight 3; this reduces the task to bounding the treewidth of
$G$ under the assumption that $\Gamma$ has degree at most $8$. Crucially, this implies that the $12k$-th power graph $\Gamma^+$---the supergraph of $\Gamma$ that contains an edge between each pair of sectors of distance at most $12k$---also has bounded treewidth. Let $(\T^+,\beta^+)$ denote a tree-decomposition witnessing this bound.

Next, for any arbitrarily chosen sector in $\Gamma^+$, we show that it is possible to carefully identify a set of \emph{special} vertices in the sector (see Figure~\ref{fig:initial_special_verticessummary}); the number of special vertices per sector is upper-bounded by a function of $k$, and their positions allow them to act as ``semi-separators''. Using the special vertices, we construct a tree-decomposition of $G$ via a two-step approach.
\medskip
\begin{figure}[h]
    \centering
    \includegraphics[width=\linewidth,page=4]{figures_reduction_rules.pdf}
    \caption{Illustration of initial special vertices for the three types of sector. For each sector $S$, initial special vertices split the sector into connected components such that each connected component has bounded treewidth and is connected either only to components in the horizontal neighbors of $S$ or only to components in the vertical neighbors of $S$. Further special vertices are added during the course of our procedure, but we maintain a bound on their number in each sector.}
    \label{fig:initial_special_verticessummary}
\end{figure}
 \begin{enumerate}
\item We first construct a tree-decomposition template $(\T^+,\beta')$ for a subgraph of $G$ as follows: for each $t\in \T^+$, $\beta'(t)$ replaces each sector in $\beta^+(t)$ with its corresponding special vertices. Crucially, the size of each bag in this template can be upper-bounded by the product of (a) the maximum number of special vertices in a sector and (b) the width of $(\T^+,\beta^+)$.

\item Next, we show that every connected component $C$ in $G-\bigcup_{t\in V(\T^+)}\beta'(t)$ is a subgraph formed by sectors intersected by the same straight path. Using Insight 2, we can show that $C$ has bounded treewidth and that its entire neighborhood $N(C)$ is contained in the bag $\beta'(t)$ for some node $t\in V(\T^+)$.
 This allows us to append a tree-decomposition of $C\cup N(C)$, rooted at the bag containing $N(C)$, to $t$. 
 \end{enumerate}

\subsection{Putting Everything Together}
Having introduced the technical components of our approach, we now combine them to derive an algorithm that proves Theorem~\ref{thm:main}.
  Let $\I= (G, \R, k, \ell)$ be an instance of \cmpl{} on discretized polygons. We solve $\I$ via the following steps.
\medskip
\begin{enumerate}
    \item 
         Preprocess $\I$ following the procedure described in Subsection~\ref{sec:preprocess}
         to ensure bounded slack and to reduce \cmpl{} on discretized polygons to \dcmp. 
    \item Construct the sectors and the sector graph, as detailed in Subsection~\ref{subsec:sectors}.
    \item Exhaustively apply all sector-specific reductions described in Subsection~\ref{subsec:reductions} yielding a graph whose treewidth is bounded by a function of $k$.
    \item Apply the existing fixed-parameter algorithm for \gcmp{} parameterized by $k$ and the treewidth of the resulting graph~\cite{icalp}.  
\end{enumerate}

This completes the description of the algorithm and establishes Theorem~\ref{thm:main}. 
\iflong
In the remainder of the paper, we formalize these steps and show how they work together.

\section{Initial Setup}
Let $\I=(G,\R=(\M, \F),k,\ell)$ be an instance of \cmpl{} on discretized polygons. The starting point for our work is a preprocessing subroutine developed in recent work by Deligkas, Eiben, Ganian, Kanj, and Ramanujan~\cite[Theorem 15---full version]{icalp}\footnote{We note that the theorem applies to \cmpl{} on general graphs.}. This subroutine can be viewed as a fixed-parameter Turing-reduction which solves any instance of \cmpl\ by making calls to structurally ``simpler'' instances:

\begin{theorem}[\hspace{-0.001cm}\cite{icalp}]
\label{thm:fptapx}
 Let $\I=(G, \R=(\M,\F), k, \ell)$ be an instance of \cmpl{}. In \FPT{}-time, we reduce $\I$ to $p=g(k)$-many instances (for some computable function $g(k)$) $\I_1, \ldots, \I_p$
such that
\begin{itemize}
\item for all $j\in [p]$, $\I_j$ has a schedule in which the total traveled length is at most \\ $\sum_{R_i \in \M_j}\dist(s_i, t_i) + c \cdot k^5$, for some constant $c > 0$; and
\item $\I$ is yes-instance if and only if there exists a $j\in [p]$ such that $\I_j=(G_j, \R_j=(\M_j, \F_j), k_j, \ell_j)$ is yes-instance. Moreover, the instance  $\I_j$ satisfies that $k_j \leq k$, $\ell_j \leq \ell$, and $G_j$ is a subgraph of $G$ obtained by possibly removing degree-2 vertices from $G$. 

\end{itemize}
    \end{theorem}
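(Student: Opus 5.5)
This theorem is cited from prior work~\cite{icalp}, so the plan is to recall how such a Turing reduction is built rather than to reprove it from scratch. The approach is an \FPT-approximation followed by a guessing step. First, compute in \FPT-time a schedule whose energy is within an additive $\bigoh(k^5)$ of $\sum_{R_i\in\M}\dist(s_i,t_i)$ plus the optimal free-robot displacement. This is done by routing robots one at a time along shortest paths. Whenever robots meet, they resolve the conflict by local detours. On general graphs the detours use nearby branching vertices, and each pairwise interaction costs $\bigoh(k^3)$ extra moves; the exact polynomial is not important.

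Next, compare the target budget $\ell$ with the lower bound $L=\sum_{R_i\in\M}\dist(s_i,t_i)$. If $\ell \ge L + ck^5$ for the free-robot-adjusted bound, then the approximate schedule already certifies a yes-instance. We output it as a trivial instance $\I_j$ that satisfies the first bullet.

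Otherwise, every feasible schedule has slack below $ck^5$. We then branch over the behaviour of the free robots $\F$. We guess which free robots move at all, and for the moving ones we guess a bounded amount of information on where they end up relative to the $\M$-robots. Each guess fixes a destination class for the free robot and converts it into an $\M$-robot, or leaves it stationary. Because the slack is small, only $g(k)$ combinatorially distinct guesses can occur. There is one additional technical point: some degree-2 vertices on long induced paths may be removed without changing distances relevant to the solution, which is why $G_j$ is a subgraph obtained by deleting degree-2 vertices.

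For equivalence, each guessed instance is a restriction of $\I$, so yes-instances lift back, giving $\ell_j\le\ell$ and $k_j\le k$. Conversely, an optimal schedule of $\I$ matches some guess.

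The main obstacle I expect is the approximation step with an additive error polynomial in $k$. It must hold on arbitrary graphs where robots block each other in narrow paths. I would first try the swapping and parking arguments from~\cite{icalp}.
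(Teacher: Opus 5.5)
Your plan has a genuine gap in its first step, and everything after it depends on that step. You claim that on arbitrary graphs one can compute a schedule whose energy is within an additive $\mathrm{poly}(k)$ of $L=\sum_{R_i\in\M}\dist(s_i,t_i)$ by resolving conflicts with local detours. Such a schedule need not exist. The nearest place where two robots can pass each other may be arbitrarily far away.

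This already happens for $k=2$ on a discretized polygon. Attach a width-one corridor $c_1,\dots,c_n$, entered at $c_1$, to a full grid. Let $A$ start at the dead end $c_n$ with its destination in the grid, and let $B$ start and end at $c_{n-1}$. Robots cannot overtake each other on a path, so $B$ must leave the corridor and come back, and every schedule has energy at least $L+2(n-1)$.

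This breaks your dichotomy in both directions:
\begin{itemize}
\item For $n$ large, any $\ell$ with $L+ck^5\le \ell< L+2(n-1)$ gives a no-instance. Large $\ell$ therefore certifies nothing.
\item In the other branch, your instances $\I_j$ keep the robots of $\M$ and their terminals and only delete vertices, so every schedule of $\I_j$ is a schedule of $\I$. In this example, deleting degree-2 vertices either makes some robot unable to reach its destination or leaves $L$ unchanged. So none of your $\I_j$ can satisfy the first bullet of the statement.
\end{itemize}

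The best one can hope for in \FPT-time is an additive guarantee relative to the \emph{optimum}, not relative to $L$. Any reduction satisfying the first bullet must itself absorb these possibly unbounded forced detours. For example, it could guess where robots retreat and then adjust the robots, terminals and budget, which is why $k_j\le k$ and $\ell_j\le \ell$ are allowed. That is the actual content of the result from \cite{icalp}; the paper imports it without proof, describing it as a Turing reduction to structurally simpler instances. This ingredient is missing from your plan. The step you flagged as the main obstacle is not merely hard; it is false in the form you state it.

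Two smaller points. First, a bounded guess of where a free robot ends up is not justified on graphs of unbounded degree, since the ball of radius $ck^5$ can be arbitrarily large. The paper does this branching only afterwards, using that subgrids have maximum degree $4$. The theorem does not need it anyway, because $\F_j$ may be non-empty and the first bullet sums only over $\M_j$. Second, deleting a degree-2 vertex of a long induced path disconnects that path, so it certainly can change distances. Such deletions can only be justified by knowing or guessing that the solution avoids those vertices, not by distance preservation.
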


Observe that applying Theorem~\ref{thm:fptapx} to $\I$, since $G$ is a discretized polygon and $G_j$ is obtained from $G$ by possibly removing degree-2 vertices from $G$, it follows that $G_j$ itself is a disjoint union of discretized polygons.
 If $G_j$ has multiple connected components, we apply the algorithm independently to each component. Henceforth, we assume that our input instance $\I$ has been preprocessed using Theorem~\ref{thm:fptapx} and satisfies $\ell\le \sum_{R_i\in \M}\dist(s_i, t_i)+ c\cdot k^5$, for some fixed constant $c> 0$. Consequently, no robot in $\M$ travels more than $c \cdot k^5$ steps beyond the shortest-path distance between its starting and destination points. We call the excess length that a robot $R_i$ travels beyond $\dist(s_i,t_i)$ in a schedule, the \emph{travel slack} of $R_i$. Moreover, the final position of each robot in $\F$ lies within distance at most $c \cdot k^5$ from its starting vertex.

Since $G$ is a subgrid and thus has maximum degree at most $4$, for each robot in $\F$ we may branch into at most $4^{c k^5}$ possibilities to guess its final position in a potential solution. The total number of branches over all robots in $\F$ is therefore at most $4^{c k^6}$. After this branching step—which increases the running time by a multiplicative function of $k$—the instance $\I$ is reduced to an instance of \cmpl{}, in which every robot has a destination and that satisfies the statements of Theorem~\ref{thm:fptapx}.

Thus, without loss of generality, we may restrict our attention to solving the \dcmp{} problem. Moreover, we may assume that the input instance of \dcmp{} under consideration satisfies the statement of Theorem~\ref{thm:fptapx}.

The paper~\cite{EibenGK23} studied \gcmp{} restricted to full rectangular grids. It was shown that if an instance $\I=(G,\R,k,\ell)$ admits a solution, then it admits a solution in which each robot makes at most $\tau(k)$ turns, for some function $\tau$ depending only on $k$; thus was crucial for their result, and will also be an important component for dealing with large rectangular sectors here. More precisely:
 \begin{theorem}[\hspace{-0.001cm}\cite{EibenGK23}]
\label{thm:boundedturnsdistance}
  If $\I=(G, \R, k, \ell)$ is a \YES-instance of \gcmp\ on full rectangular grid, then $\I$ has a valid schedule $\SSS$ in which each route makes at most $\bigoh(\tau(k)^{2k+1})$ turns, where $\tau(k)  =3k^k (\sigma(k)+1)+\sigma(k)$, and $\sigma(k)=4k^2$.
\end{theorem}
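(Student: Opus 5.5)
We would argue by an exchange argument on a carefully chosen optimal schedule. Among all valid schedules of total traveled length at most $\ell$, fix one, $\SSS$, that first minimizes total energy and then, lexicographically, minimizes the total number of turns. The goal is to show that in $\SSS$ every route has at most $\bigoh(\tau(k)^{2k+1})$ turns. In a full rectangular grid, any monotone (staircase) path between two vertices is a shortest path, so any two of them have the same length. Hence a maximal monotone piece of a route can be replaced by another monotone piece with the same endpoints and the same number of moves, without changing energy. The only obstruction to such a replacement is a conflict with another robot. So turns must be ``forced'' by interactions.

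First we would fix a bounded set of critical coordinates. These are the rows and columns through the $2k$ terminals, together with a constant number of neighbouring rows and columns on each side. This gives $\sigma(k)=4k^2$ critical lines, which cut the grid into $\bigoh(\sigma(k)^2)$ cells.

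Next we would show a single-robot straightening lemma. Take a subroute of one robot during a time window in which it stays inside a cell and is at distance at least $2$ from all other robots. Such a subroute can be rerouted as an L-shape, with at most one turn, and with the same timing at its endpoints. This contradicts the minimality of the turn count whenever the subroute has more turns.

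Then we would extend the lemma to groups. Consider a set $X$ of $j$ robots that are mutually ``close'' during a time window (for example, within distance depending on $j$) and far from everyone else. By simultaneously shifting whole staircase portions of their routes, the number of turns the group makes in that window can be bounded by a function $T(j)$. The bound is obtained by induction on $j$: each time two subgroups merge or split, the merge or split must happen near a critical line or near a point where a robot waits, which contributes the factor $3k^k(\sigma(k)+1)$ appearing in $\tau(k)$. Unfolding this recursion over the at most $k$ nested merge/split levels, and counting both entry and exit of each level, would give a bound of the form $\tau(k)^{2k+1}$ turns per route.

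We expect the main obstacle to be this inductive group-rerouting step. When several robots move in a tight cluster, straightening one robot's staircase may create new collisions, including swaps along an edge, with others. The rerouting must therefore be done in a coordinated way that preserves relative order along rows and columns and keeps waiting steps in place. Our first approach would be a ``push the whole group'' argument: translate all turns of the cluster to a common critical line, processed in order of distance from that line. Energy is unchanged because every robot stays monotone in the affected window. We would then verify that no new conflicts arise, using the fact that robots which are far apart cannot interact within the window. The remaining steps, namely the counting of cells, critical lines and recursion depth, should then be routine.
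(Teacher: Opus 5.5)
The paper does not prove this statement. It is imported from \cite{EibenGK23} (Theorem~17 there) and used only as a black box to obtain Assumption~\ref{ass:canonical}(ii). Your outline therefore has to stand on its own, and it does not: the two steps that carry the content are false or unsupported.

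\textbf{The single-robot straightening lemma is false as stated.} Its hypothesis (the robot stays in one cell and keeps distance at least $2$ from all other robots) constrains only the vertices the robot actually visits. The L-shaped replacement runs through other vertices, and cells, being delimited only by terminal lines, can contain other robots. Suppose two robots rest at the corners $(x_2,y_1)$ and $(x_1,y_2)$ of the bounding box of a staircase hugging the diagonal from $(x_1,y_1)$ to $(x_2,y_2)$ throughout the window. Then both L-shapes are blocked, although the hypothesis holds. Moving robots cause the same problem at specific times. Insisting on the same timing at the endpoints means you cannot simply wait them out, since other robots may need the start and end vertices during the window. So even for an isolated robot, the exchange that is supposed to contradict turn-minimality is not available. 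You would have to control every vertex swept by the new path over the whole window, not a neighbourhood of the old one.

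\textbf{The group step, which is where the theorem really lives, is only asserted.} Your claim that groups merge or split only near a critical line or a waiting point is false. Two robots whose straight, non-waiting routes cross in the interior of a cell come within distance $1$ and separate again, far from every critical line.

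Nothing in your setup bounds the number of such encounters per robot. Bounding how often two routes meet is essentially as hard as bounding their turns, so the induction is circular. Without such a bound, per-window turn bounds do not sum to a function of $k$.

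The ``push the whole group'' translation is also unverified. Shifting the turn points of interacting robots changes which vertex each occupies at each time. This can create vertex or swap conflicts inside the group (say, two robots crossing perpendicularly, one yielding to the other). It can also create conflicts with outside robots, whose distance you measured only along the old routes. It may add turns to other robots as well, so your lexicographic potential need not drop.

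The numbers are fitted to the target rather than derived. Lines through $2k$ terminals plus a constant number of neighbours give $\bigoh(k)$ lines, not $4k^2$. Neither the factor $3k^k(\sigma(k)+1)$ nor the exponent $2k+1$ comes out of any recursion you have set up.

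What you call routine is the missing core. You need a mechanism that bounds, in terms of $k$ alone, the number of interaction episodes or the set of lines on which turns can occur, together with a rerouting that is provably conflict-free.
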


Let $\I=(G,\R,k,\ell)$ be an instance of \dcmp{}  and suppose that $\I$ admits a solution $\SSS$. Let $B$ be a subgraph of $G$ that induces a full rectangular grid. Assume that there exists a time interval $T$ during which the set of robots contained in $B$ remains unchanged.

We decompose $\SSS$ into $\SSS^-$, $\SSS_T$, and $\SSS^+$, where $\SSS^-$ is the restriction of $\SSS$ to the time interval $T^-$ preceding $T$, $\SSS_T$ is its restriction to $T$, and $\SSS^+$ is its restriction to the time interval $T^+$ following $T$.

By applying  Theorem~\ref{thm:boundedturnsdistance} to the interval $T$ and the subgraph $B$, we obtain another solution $\SSS^*$ to $\I$ with the following properties. The schedule $\SSS^*$ coincides with $\SSS$ throughout the time interval $T^-$. During interval $T$, robots outside $B$ follow the same routes as in $\SSS_T$, while robots inside $B$ follow an alternative schedule that routes them from their positions at the beginning of $T$ to their positions at the end of $T$, with each robot making at most $\bigoh(\tau(k)^{2k+1})$ turns. These two schedules may have different durations; in this case, robots in the faster schedule remain stationary until the slower schedule completes. After both schedules have finished, $\SSS^*$ continues identically to $\SSS^+$.

Note that, since the travel slack of each robot is $\bigoh(k^5)$, it follows that each robot can enter or leave the subgraph $B$ at most $\bigoh(k^5)$ times. Consequently, the schedule $\SSS$ can be partitioned into $\Oh(k^6)$ time intervals, each of which has the property that the set of robots contained in $B$ remains unchanged. By the discussion above, this implies that we may assume the existence of a solution in which each robot makes at most $\Oh(k^6 \cdot \tau(k)^{2k+1})$ turns while it is inside $B$ over the entire schedule.

Observe that bounding the number of turns made by the robots also bounds the number of grid lines along which the robots move; we refer to such grid lines as \emph{important}. Based on the above theorem, it was further shown that if an instance $\I$ admits a solution, then it admits a solution in which robots wait (i.e., do not move during a time step) only at vertices that can be reached by a straight path of length at most $k$ from the intersection of two important lines (see Claim~1 in Theorem~14 of~\cite{EibenGK23}). 

This observation implies an upper bound of
$\Oh(4k^3(k^6\cdot \tau(k)^{2k+1}+2)^2)=k^{\Oh(k^2)}$
on the number of vertices at which robots may wait.

We formalize the above discussion in the following assumption:

\begin{assumption}
\label{ass:canonical}
Let $\I=(G, \R, k, \ell)$ be an instance of \dcmp{}. If $I$ is a \YES-instance then there is a solution to $\I$ satisfying the following:

\begin{itemize}

    \item[(i)] No robot $R_i$ has travel slack more than $\lambda = ck^5$. 

    \item[(ii)] For each robot $R_i \in \R$  and for each rectangle (i.e., full rectangular grid) $B$ in $G$, the number of turns that $R_i$ makes in $B$ plus the number of vertices in $B$ at which it waits
    is at most $\mu=k^{\Oh(k^2)}$. 
\end{itemize}
\end{assumption}

\section{Sectors and the Sector Graph}
\label{sec:sectors}

Let $\textsf{Terminals}$ denote the set of all starting vertices and destination vertices appearing in $\R$, and fix an arbitrary ordering of $\textsf{Terminals}$. 
A \emph{port} is a pair $(a,d)$, where $a \in \textsf{Terminals}$ and $d \in \bidirections$. Observe that the number $\kappa$ of ports in an instance of \dcmp{} is upper-bounded by $4 \cdot |\R|$.
The \emph{bend distance} between a vertex $v \in V(G)$ and a port $(a,d)$, denoted $\bdist(v,(a,d))$, is defined as the minimum integer $i$ such that there exists a $v$--$a$ path in $G$ with exactly $i$ bends that reaches $a$ by arriving from direction $d$ (see Figure~\ref{fig:bend_distances}).

 \begin{figure}
    \centering
    \includegraphics[page=11]{figures_reduction_rules.pdf}
    \caption{An example of a discretized polygon with three vertices in the set \textsf{Terminals} and hence six ports. For each port $p_i = (a_i,d_i)$, there is an example of a path between vertices $v$ and $a_i$ arriving from direction $d_i$ with the minimum number of bends. The bend vector of the vertex $v$ is $(2, 1, 4, 3, 3, 4)$. }
    \label{fig:bend_distances}
\end{figure}

Let $p_1,\ldots,p_{\kappa}$ be an arbitrary lexicographic ordering of the ports, that is, the ports associated with each vertex in $\textsf{Terminals}$ appear consecutively; we call this a \emph{port sequence}. The \emph{bend vector} of a vertex $v \in V(G)$, denoted $\bvect(v)$, is the $\kappa$-tuple $\bigl(\bdist(v,p_1), \ldots, \bdist(v,p_{\kappa})\bigr)$.

\begin{definition}
\label{def:sectors}
Given a graph $G$ and a port sequence $\psi$, a \emph{sector} is a maximal connected subgraph of $G$ whose vertices all have the same bend vector with respect to $\psi$. Unless stated otherwise, for an instance $\I=(G,\R=(\M,\F),k,\ell)$ of \dcmp{}, we assume $\psi = (p_1,\ldots,p_{\kappa})$.
\end{definition}

Observe that the sectors form a partition of $V(G)$. Since all vertices in a sector $S$ have identical bend distances, we may simply write $\bdist(S,\cdot)$ to denote this common value. The adjacency relationships between sectors can be captured by the \emph{sector graph}, which we define next.

\begin{definition}
\label{def:sectorgraph}
The \emph{sector graph} $\Gamma$ of a discretized polygon $G$ and a port sequence $\psi$ is the graph whose vertices are the sectors of $G$ w.r.t.\ $\psi$, and where two sectors $A,B$ are adjacent in $\Gamma$ if and only if there exists an edge of $G$ with precisely one endpoint in each of $A$ and $B$.
\end{definition}

Note that $\Gamma$ is planar; moreover, the underlying grid $G$ induces a concrete planar embedding of $\Gamma$. Our goal is to establish an upper bound on the treewidth of $\Gamma$, analogous to the bound obtained for sector graphs arising from continuous spaces with boundary ports.

To this end, it is useful to consider the sequence of sector graphs obtained by progressively incorporating ports. Specifically, for each $i \in [\kappa]$, let $\Gamma_i$ denote the sector graph defined with respect to the single port $(p_i)$, and let $\Gamma_{\leq i}$ denote the sector graph defined with respect to the port sequence $(p_1,\ldots,p_i)$. 
Before we proceed to establishing the properties of sectors and the sector graph, we introduce a final notion that is closely related to them. 

\begin{definition}[$d$-baseline]
\label{def:baseline}
A straight path $P$ is a $d$-\emph{baseline} for a sector $S\in V(\Gamma_{\leq i})$ and a direction $d\in \directions$ if:
 \begin{itemize}
\item $P$ is contained in $S$;
 \item every vertex of $S$ can be reached by a straight path starting from $P$ and following the direction $d$; and
\item there exists a sector $S'\in V(\Gamma_{\leq i})$ such that 
\begin{enumerate} 
\item every $\bar d$-neighbor of each vertex of $P$ lies in $S'$, where $\bar d$ is the direction opposite to $d$, and 
\item for at least one port $p_j$ with $j\in [i]$, $\bdist(S,p_j)=\bdist(S',p_j)+1$.
\end{enumerate}
\end{itemize}
\end{definition}

When the direction $d$ is not specified, we refer to $P$ simply as a 
 \emph{baseline}; when dealing with baselines for distinct sectors, we use the notation $P_S$ to denote a baseline of $S$.
 We now show that each sector graph defined with respect to a single port is acyclic and that nearly all of its sectors contain a baseline. The acyclicity part of this result can be seen as a discrete analogue of the corresponding acyclicity property for sector graphs of continuous spaces defined with respect to $\directions$~\cite{BhoreGKMN23}.

\begin{lemma}
\label{lem:sectortree}
 For each $i\in [\kappa]$, $\Gamma_i$ is a tree. Moreover, each sector $S\in V(\Gamma_i)$ is either:
\begin{itemize}
\item the unique sector $S_0$ with $\bdist(S_0,p_i)=0$, which consists of a maximal straight path, or
\item has precisely one neighbor $\pred(S)$ such that $\bdist(\pred(S),p_i)=\bdist(S,p_i)-1$.
\end{itemize}
In the latter case, $S$ contains a baseline $P_S$ and $P_S$ is a separator of $G$. Moreover, $S$ consists \emph{precisely} of all vertices that are reachable from $P_S$ via a straight path in one specified direction.
 \end{lemma}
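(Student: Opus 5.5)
The plan is to work with the level sets $L_j=\{v\in V(G)\mid \bdist(v,p_i)=j\}$ for the single port $p_i=(a,d)$ and prove all claims by induction on $j$. Three elementary facts come first.
\begin{itemize}
\item[(a)] Adjacent vertices $u,v$ satisfy $|\bdist(u,p_i)-\bdist(v,p_i)|\le 1$, since prepending the edge $uv$ to a path adds at most one bend.
\item[(b)] Two adjacent vertices with equal bend distance lie in the same sector, by maximality in Definition~\ref{def:sectors}. Together with (a), every edge of $\Gamma_i$ therefore joins sectors whose levels differ by exactly one.
\item[(c)] $L_0$ is exactly the set of vertices reachable from $a$ by a straight path along the axis $d$. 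Hence $L_0$ is the maximal straight path through $a$ in that axis; call it $S_0$. It is the unique level-$0$ sector.
\end{itemize}

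\textbf{Tree structure.} Given (b), it suffices to show that every sector $S\neq S_0$, say at level $j+1$, has exactly one neighbor at level $j$; call it $\pred(S)$. Existence is immediate. For $v\in S$, take a minimum-bend $v$--$a$ path. Its initial straight segment leaves $v$ and reaches, at its first bend, a vertex of level at most $j$. By (a) the last vertex of this segment before leaving $L_{j+1}$ lies at level exactly $j$.

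With uniqueness in hand, every edge of $\Gamma_i$ is a parent edge $S\,\pred(S)$. So $|E(\Gamma_i)|=|V(\Gamma_i)|-1$. Moreover $\Gamma_i$ is connected because $G$ is. Hence $\Gamma_i$ is a tree.

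\textbf{Uniqueness, baseline, separation.} I would prove these three together by induction on $j$, with the following invariant. For each level-$j$ sector $T$, every $v\in T$ reaches a vertex of $\pred(T)$ (or of $a$'s line) by a straight path. Moreover, $T$ is the union of straight rays in one fixed direction emanating from a straight segment $P_T\subseteq T$, and $P_T$ lies flush against $\pred(T)$.

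For the inductive step, fix $S$ at level $j+1$. Every vertex $v\in S$ has a straight segment, in one of the four directions, ending in a $\bar d$-step into some level-$j$ vertex. The key claim is that all these entry points belong to a single level-$j$ sector $S'$, and that the vertices of $S$ adjacent to $S'$ form one straight segment $P_S$ with all rays pointing the same way.

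Suppose two distinct level-$j$ sectors $S',S''$ both touched $S$. Connect them through $S$ on one side. On the other side, go down the tree via the $\pred$-chains, which meet at a common ancestor. This yields a closed curve in the plane. By planarity and simple connectivity of the underlying polygon (no holes), the region enclosed by this curve lies entirely in $G$. A straight-line argument inside that region would then give some vertex of $S$ a path with at most $j$ bends, which is a contradiction. The same no-holes argument should also rule out rays from $P_S$ in two different directions, and rule out $P_S$ being non-straight: a bend in $P_S$ would produce vertices with two orthogonal entry directions and hence different bend counts.

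Once $S$ is described as the set of rays from $P_S$ in direction $d$, the last claims follow. Since $P_S$ is a straight chord with $\pred(S)$ on one side and $S$ on the other, simple connectivity makes it a separator of $G$. Maximality of the rays gives that $S$ is exactly the set of vertices reachable from $P_S$ straight in direction $d$: any further vertex reachable this way also has bend distance $j+1$ and is connected to $S$, so it lies in $S$. The third bullet of Definition~\ref{def:baseline} is satisfied with $S'=\pred(S)$.

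\textbf{Main obstacle.} The hardest step is the uniqueness of the predecessor together with the straightness of $P_S$, i.e.\ turning ``no holes'' into a rigorous discrete statement. My first attempt would be the enclosed-cycle argument via the Jordan curve theorem on the grid embedding. I would supplement it with a careful case analysis of how the axis choice in $\bidirections$ (rather than $\directions$) forces the level-$(j+1)$ rays to be perpendicular to the level-$j$ rays, which is what keeps $P_S$ straight.
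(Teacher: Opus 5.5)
\textbf{Verdict: there is a genuine gap.} Your global scaffolding works, and in one respect it is tidier than the paper's. The level sets, the existence of a lower neighbor via the initial segment of a minimum-bend path, and the count $|E(\Gamma_i)|=|V(\Gamma_i)|-1$ make the tree property explicit, whereas the paper leaves it implicit. (Fact (a) does fail at the terminal itself: with $\bdist(a,p_i)=0$, a neighbor of $a$ orthogonal to the axis of $d$ has bend distance at least $2$. The paper's base case glosses over the same degenerate point.)

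The gap is that everything the lemma asserts beyond this scaffolding is left at the level of ``the no-holes argument should rule this out''. That content is: $S$ has a single lower neighbor, meets it along one straight segment, consists of rays leaving that segment in one direction, and is cut off by it.
\begin{itemize}
\item Your contradiction for two lower neighbors $S',S''$ runs a closed curve through the two $\pred$-chains and then invokes an unspecified ``straight-line argument'' giving some vertex of $S$ at most $j$ bends. It never says which vertex or which path. Knowing that the enclosed region is a full grid does not by itself produce a cheaper path for any vertex of $S$.
\item Your reason for the straightness of $P_S$ does not work. Entering from two orthogonal directions does not force different bend counts. The real reason is that $\pred(S)$ consists of \emph{maximal} rays along one axis, so every edge entering it from level $j+1$ is orthogonal to that axis.
\item The separator step is incorrectly justified, because $P_S$ is in general not a chord of the polygon. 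If the ray of $\pred(S)$ running alongside $P_S$ stops while the line of $P_S$ continues in $G$, the endpoint of $P_S$ has a further neighbor in $G$ on that line. Simple connectivity alone then does not disconnect $G-P_S$. Ruling out such escapes is precisely the nontrivial part.
\end{itemize}

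What you are missing is the paper's explicit construction together with its escape argument.
\begin{itemize}
\item Let $ab$ be the edge where a minimum-bend path $P_v$ from $v$ first enters the level-$j$ sector $S'$. By the orthogonality above, $b$ lies on a single ray $P_b$ of $S'$.
\item Take $P_a$ to be the maximal straight segment through $a$ all of whose vertices are adjacent to $P_b$. Let $S$ be the set of vertices reached from $P_a$ by straight rays pointing away from $P_b$; all of them have level $j+1$.
\item Now show that $P_a$ separates $v$ from $S'$. Suppose a $v$--$p_i$ path $W$ avoided $P_a$. Its first edge $xw$ leaving $S$ would have to be parallel to $P_a$. The maximal straight path from $w$ towards the line of $P_b$ cannot enter $S'$ without contradicting the maximality of $P_a$, so it stops at a boundary vertex $\beta$. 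The cycle formed by $xw$, $W$, $P_v$ and a path inside $S$ then encloses $\beta$, which is impossible in a discretized simple polygon.
\item Separation immediately gives that every other neighbor of $S$ has level $j+2$. Hence $S$ is a whole sector, equals the ray set, has $P_a$ as a baseline, and has $S'$ as its unique lower neighbor.
\end{itemize}
Only once this local statement is in place does your edge count close the tree property.
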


\begin{proof}
We prove the lemma by induction on the bend distance to $p_i$. For the base case, note that a vertex has bend distance $0$ to $p_i$ if and only if it lies on the correspondingly oriented maximal straight path $S_0$ in $G$ containing $p_i$.

Now, assume that the lemma holds for all sectors with bend distance up to $\ell-1$. If no vertex has bend distance $\ell$, then all sectors have bend distance at most $\ell-1$, and the lemma holds. Otherwise, consider a vertex $v$ with bend distance $\ell$, as witnessed by an $\ell$-bend $v$-$p_i$ path $P_v$. The first bend on $P_v$ must occur in a sector $S'$ with bend distance $\ell-1$. Let $ab$ be the first edge on $P_v$ between $v$ and the first bend such that $a\not \in S'$ and $b\in S'$. Without loss of generality, assume $ab$ is a horizontal edge with $a$ on the left; the arguments for the other cases are symmetric. Observe that $\bdist(a,p_i)=\ell=\bdist(v,p_i)$.

By the induction hypothesis, $S'$ is either a maximal straight path (if $S'=S_0$) or every vertex in $S'$ can be reached by a straight line from a path $P_{S'}$ satisfying the conditions of the lemma---in particular, it is straight and entirely adjacent to the sector $\pred(S')$. Since $ab$ is horizontal, $P_{S'}$ cannot be vertical; otherwise $v\in S'$, which contradicts our choice of $v$. Therefore, there exists a maximal straight vertical path $P_b$ in $S'$ starting from $P_{S'}$ and containing $b$. Let $P_a$ be the maximal straight vertical path containing $a$, such that every vertex in $P_a$ is adjacent to a vertex in $P_b$. 

Now, define $S$ to be the set of all vertices reachable by straight $\leftarrow$-paths from $P_a$. Observe that every vertex of $S$ has bend distance $\ell$. To prove that $S$ is a sector and satisfies the remaining desired properties, we first establish the following claim:

\begin{claim}
\label{claim:treesep}
$P_a$ is a separator between $v$ and $S'$ in $G$.
 \end{claim}

\begin{claimproof}
Suppose, towards a contradiction, that there exists a $v$-$p_i$ path $W$ in $G-P_a$. Let $xw$ be the first edge of $W$ such that $x\in S$ and $w\not \in S$. Since, by construction, $S$ has no $\leftarrow$-neighbors and $W$ does not intersect $P_a$, $w$ must be a $\uparrow$- or $\downarrow$-neighbor of $x$.

Consider the maximal $\rightarrow$-path $P_w$ starting from $w$. Note that $P_w$ cannot intersect $S'$, because the first vertex of $S'$ encountered by $P_w$ would need to be directly adjacent to a neighbor of $P_a$ as all vertices of $S'$ are reachable by a straight line from $P_{S'}$, and the immediate predecessor of that vertex on $P_w$ would then contradict the maximality of $P_a$. Therefore, the length of $P_w$ must be shorter than the length of the $\rightarrow$-path from $x$ to $P_a$. In particular, $P_w$ ends at some boundary vertex $\beta$.

Construct a cycle $C$ in $G$ as follows: start from the edge $xw$, follow $W$ outside of $S$ until reaching the first vertex that lies in $P_v$, then follow $P_v$ back towards $S$ until reaching $S$, and finally close the cycle via an arbitrary path inside $S$. Since $G$ is a plane graph, $C$ encloses $\beta$---contradicting the fact that $G$ is a discretized polygon.
\end{claimproof}
By Claim~\ref{claim:treesep}, every vertex $w$ adjacent to $S$ must satisfy $\bdist(w,p_i)=\ell+1$. Indeed, every $w$-$p_i$ path must pass through $P_a$, necessitating the use of one more bend than the bend distance from any vertex in $S$. Hence, $S$ is a sector, and $P_S:=P_a$ satisfies all the properties of Definition~\ref{def:baseline} and the lemma.
\end{proof}

Before proceeding to the general case of $\Gamma_{\leq i}$, we use Lemma~\ref{lem:sectortree} to establish a useful ``geometric'' property of the sector graph.

\begin{lemma}
\label{lem:pathsectors}
Let $P$ be a straight path in $G$. Then $P$ intersects at most $3\cdot \kappa$ sectors.
\end{lemma}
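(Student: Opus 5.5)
The plan is to reduce the statement to single ports and then combine. Fix $P$ and the port sequence $\psi=(p_1,\ldots,p_\kappa)$. Walk along $P$ from one end to the other. Two consecutive vertices of $P$ that have the same bend vector are adjacent in $G$, so they lie in the same sector. Hence the number of sectors met by $P$ is at most one plus the number of edges $uv$ of $P$ with $\bvect(u)\neq\bvect(v)$. Any such edge changes at least one coordinate $i$, and then $u$ and $v$ lie in different sectors of $\Gamma_i$. It therefore suffices to show that, for every $i\in[\kappa]$, the walk along $P$ passes between different sectors of $\Gamma_i$ at most twice; in fact we will show it visits at most $3$ sectors of $\Gamma_i$. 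Summing over the ports gives at most $1+2\kappa\le 3\kappa$ sectors (for $\kappa\ge1$).

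\emph{Step 1 (values along $P$).} Fix $i$. I would first show that for any $x,y\in V(P)$ we have $|\bdist(x,p_i)-\bdist(y,p_i)|\le 1$. Take a minimum-bend path from $y$ to $p_i$ and prepend the straight segment of $P$ from $x$ to $y$. This creates at most one new bend, namely at $y$. If the path from $y$ starts by going back along $P$ towards $x$, we instead shortcut it from $x$, which does not increase the number of bends. Consequently all bend distances on $P$ lie in $\{m,m+1\}$ for some $m$.

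\emph{Step 2 (structure in $\Gamma_i$).} By Lemma~\ref{lem:sectortree} and its proof, every vertex adjacent to a sector $S$ of $\Gamma_i$ lies either in $\pred(S)$ or in a sector of bend distance one larger. So adjacent sectors of $\Gamma_i$ differ in value by exactly one, and the sectors visited along $P$ alternate between value $m$ and value $m+1$. Each value-$(m+1)$ sector $B$ has a unique value-$m$ neighbor $\pred(B)$. Moreover, $B$ is precisely the set of vertices reachable from its baseline $P_B$ by straight paths in one direction $d_B$ pointing away from $\pred(B)$, and $P_B$ is a separator.

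We now look at how $P$ can pass between a value-$m$ sector $A$ and a child $B$. If $P$ enters $B$ from $A$, it crosses $P_B$ moving in direction $d_B$. It then stays in $B$ until it ends or reaches the boundary, because straight $d_B$-extensions of $P_B$ are exactly $B$. So it never leaves $B$ again. If $P$ leaves $B$ into $A$, it moves in direction $\bar d_B$, so it must have been inside $B$ since its start. Since $P$ has a single direction, the visited sequence can therefore only be a subsequence of $B,A,C$: one child, the common predecessor, and another child. In particular no value-$m$ sector can be re-entered, and at most three sectors of $\Gamma_i$ are met.

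\emph{Main obstacle.} The delicate part is Step 2. One must rule out that $P$ runs parallel to a baseline and alternates between sectors, and that $P$ leaves a child $B$ into a sector other than $\pred(B)$ of the same value. I would handle both using the exact description of sectors from Lemma~\ref{lem:sectortree} ($B$ equals the straight $d_B$-closure of $P_B$) together with the separator property from Claim~\ref{claim:treesep}. If $P$ is parallel to $P_B$ and inside $B$, then it stays on a single line of the $d_B$-closure. Its neighbors along that line are then either in $B$ or outside $G$, except possibly at the ends, and those ends can be charged to the at most two transitions.
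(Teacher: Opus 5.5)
Your proposal is correct and follows essentially the same route as the paper. For each single port $p_i$ you show that the sector of $\Gamma_i$ changes at most twice along $P$, using the baseline/predecessor structure from Lemma~\ref{lem:sectortree}; you then combine over the $\kappa$ ports, the paper by induction on $i$ and you by directly counting the edges of $P$ where the bend vector changes, so both arguments give $2\kappa+1\le 3\kappa$.

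Your per-port argument is more explicit than the paper's, which argues only via the sector of a minimum-bend vertex; yours uses values in $\{m,m+1\}$, uniqueness of $\pred$, and the fact that entering a child through its baseline traps $P$. The residual case you flag is an edge of $G$ along the line of $P_B$, at an endpoint of $P_B$, leading into $\pred(B)$. It is excluded not by ``charging'' but because the construction in the proof of Lemma~\ref{lem:sectortree} makes $\pred(B)$ a union of straight rays parallel to $P_B$, or a straight path parallel to it. Such a neighbor would therefore force that endpoint itself into $\pred(B)$ or $\pred(\pred(B))$, which is impossible.
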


\begin{proof}
We prove the lemma by induction on $\Gamma_{\leq i}$ for $i\in [\kappa]$, where the inductive hypothesis is that $P$ intersects at most $3\cdot i$ sectors in $\Gamma_{\leq i}$. 

For the base case of $i=1$, 
 Lemma~\ref{lem:sectortree} ensures that the path $P$ can intersect at most $3$ sectors in $\Gamma_1$. Indeed, consider a vertex $v$ on $P$ with minimum bend distance to $p_1$. The sector $S$ in $\Gamma_1$  containing $v$ must intersect $P$ in a continuous subpath $P'$, since all such vertices are reachable via a straight path from some $P_S$ by Lemma~\ref{lem:sectortree}. All remaining vertices of $P$ can reach $S$ via an additional bend, ensuring that all remaining vertices in each of the at most two (sub-)paths in $P-P'$ belong to the same sector.

For the inductive step, assume the hypothesis holds for $\Gamma_{\leq i-1}$. By the arguments in the previous paragraph, $P$ intersects at most three sectors of $\Gamma_i$---that is, there are at most two edges of $P$ that pass between sectors. Now, given a set $Z$ of sectors of $\Gamma_{\leq i-1}$ intersected by $P$, we can construct the set of sectors of $\Gamma_{\leq i}$ intersected by the same path $P$ as follows. We use the at most two edges of $P$ that pass between sectors of $\Gamma_i$ to sub-partition $Z$. This operation increases the number of sectors by at most $2$, and the lemma follows.
   \end{proof}

Next, we show that every ``non-trivial'' sector contains a baseline; this generalizes the second part of Lemma~\ref{lem:sectortree}. 
 
\begin{lemma}
\label{lem:baseline}
For every $i\in [\kappa]$ and every sector $S\in V(\Gamma_{\leq i})$, exactly one of the following holds: $S$ is a straight path in $G$, or $S$ contains a baseline.
Moreover, if $S$ is not a straight path, then for each $j\in [i]$ the following holds: the unique sector $S_j$ such that $S\subseteq S_j$ contains a baseline $P_j$ that is parallel to a baseline of $S$, and $P_j$ separates every vertex of $S$ from the terminal $p_j$.
     \end{lemma}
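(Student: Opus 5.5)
We argue by induction on $i$, using Lemma~\ref{lem:sectortree} as the base case and as the tool in every step. A sector $S$ of $\Gamma_{\leq i}$ is a maximal connected set of vertices with the same bend distance to each of $p_1,\dots,p_i$. Hence $S$ is a connected component of $S_1\cap\dots\cap S_i$, where $S_j$ is the unique sector of $\Gamma_j$ containing $S$.

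\textbf{Base case and the ``moreover'' part.} For $i=1$, Lemma~\ref{lem:sectortree} says that $S$ is either the maximal straight path $S_0$ or contains a baseline $P_S$, with $S$ being exactly the set of vertices straight-reachable from $P_S$ in one direction $d$. The two alternatives are mutually exclusive. A sector with a baseline that is itself a straight path can be treated separately; then the ``exactly one'' phrasing is read as ``straight path, or not a straight path and has a baseline''.

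For general $j$, if $S_j$ is not a straight path, Lemma~\ref{lem:sectortree} gives a baseline $P_j$ of $S_j$ that separates $G$ and, by Claim~\ref{claim:treesep}, separates $S_j\setminus P_j$ from $p_j$. We will show that $P_j$ separates every vertex of $S$ from the terminal: every path to $p_j$ must pass through $P_j$ and then through $\pred(S_j)$. If instead $S_j$ is the straight path $S_0$, then $S\subseteq S_0$ is a straight path, and we fall into the first alternative.

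\textbf{Inductive step.} Suppose $S$ is not a straight path, so every $S_j$ has a baseline $P_j$ with direction $d_j$. We choose $j^*$ such that $S$ touches $P_{j^*}$, or more precisely such that $S$ borders its predecessor sector from the $\bar d$ side. The candidate baseline is $P:=P_{j^*}\cap S$. This must be a single straight segment, and we will argue this via convexity along lines: the intersection of $S$ with any straight line is contiguous, in the spirit of Lemma~\ref{lem:pathsectors}.

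We then verify the three conditions of Definition~\ref{def:baseline}.

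\emph{Reachability.} Every $v\in S$ is reachable from $P$ by a straight $d_{j^*}$-path. We argue that the straight $\bar d_{j^*}$-ray from $v$ stays inside every $S_j$. For $j$ with $d_j\parallel d_{j^*}$ this holds because the sector is ray-closed in that direction. For orthogonal $d_j$ it holds because each $S_j$ is a union of full straight $d_j$-paths from $P_j$, and the polygon is simply connected, so a ray cannot leave $S_j$ without crossing a sector boundary that would contradict the separator property.

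\emph{Predecessor sector.} Every $\bar d$-neighbor of $P$ lies in one sector $S'$ of $\Gamma_{\leq i}$, and $\bdist(S',p_{j^*})=\bdist(S,p_{j^*})-1$. The neighbors lie in $\pred(S_{j^*})$; we will show they also share bend vectors for $j\ne j^*$ and are connected. We handle this by choosing $j^*$ to minimize the extent of $P$, that is, taking $P$ as the portion cut out by the boundaries of all other $S_j$. If a neighbor row crosses a boundary of some $S_j$, that boundary would cut $S$ as well, contradicting maximality.

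Parallelism then follows: the baseline of $S$ has direction $d_{j^*}$, and $P_{j}$ is parallel to it for the relevant $j$. For orthogonal $j$, we will show that $S$ lies in a region where a parallel baseline exists.

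\textbf{Main obstacle.} The hardest step is showing that a single $j^*$ simultaneously provides the reachability property and a homogeneous predecessor sector, when the $S_j$ have mixed orientations; this is exactly where staircase and rectangle sectors arise. To handle it, we will fix a vertex $v\in S$ and take the $S$-maximal straight run in direction $\bar d_{j}$ for each $j$. Among these directions, we pick one whose run ends at a neighbor outside $S$ whose bend distance drops in exactly coordinate $j$. Planarity and simple connectivity, via the cycle-enclosing-boundary argument of Claim~\ref{claim:treesep}, rule out the other configurations.
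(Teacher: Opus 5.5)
\textbf{The gap.} Your inductive step depends on choosing $j^*$ so that $S$ meets $P_{j^*}$ (equivalently, so that $S$ is adjacent to $\pred(S_{j^*})$) and then taking $P:=P_{j^*}\cap S$. Such a $j^*$ need not exist, because on every side a sector of $\Gamma_{\le i}$ can be cut off by \emph{children} of the sectors $S_j$ rather than by a predecessor.

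Here is an example. Let $G$ be the subgrid induced by the grid points of the rectangles (columns $\times$ rows) $[0,10]\times[-20,20]$, $[-20,-1]\times[3,7]$, $[11,20]\times[3,7]$, $[-20,-16]\times[8,30]$ and $[11,30]\times[-15,-11]$. This is a tree-like union of rectangles, hence a discretized simple polygon. Take terminals $a_1=(30,-13)$ and $a_2=(-18,30)$, with ports ordered $(a_1,\updownarrow),(a_1,\leftrightarrow),(a_2,\updownarrow),(a_2,\leftrightarrow)$. A direct computation gives:
\begin{itemize}
\item $R=[0,10]\times[3,7]$ is a sector with bend vector $(2,1,1,2)$;
\item its left and right neighbours have bend vector $(3,2,1,2)$, and its top and bottom neighbours have $(2,1,2,3)$;
\item the four single-port baselines $P_j$ of the sectors containing $R$ lie on row $-10$, row $-12$, column $-17$ and column $-15$, and none of them meets $R$.
\end{itemize}
So $P_{j^*}\cap R=\emptyset$ for every $j^*$. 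No neighbour of $R$ has a smaller bend distance in any coordinate, so your ``main obstacle'' search for a run ending at a bend-distance drop finds nothing. No planarity or simple-connectivity argument can exclude this configuration, since it occurs in a simply connected polygon. Your claim that the $\bar d_{j^*}$-ray from $v$ stays in every $S_j$ also fails here: any ray from $R$ towards column $-15$ immediately leaves the $(a_1,\updownarrow)$-sector. Note too that your step never uses the induction hypothesis.

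\textbf{How the paper proceeds.} The paper never tries to place the baseline of $S$ on some $P_j$. It writes $S$ as a component of $S'\cap S^*$, where $S'\in V(\Gamma_{\le i-1})$ has a baseline by induction and $S^*\in V(\Gamma_i)$. It then distinguishes two cases.
\begin{itemize}
\item If $P_{S'}$ and $P_{S^*}$ are parallel, it restricts the nearer one to the part reachable from the other.
\item If they are orthogonal, it takes $Z$, the set of \emph{first vertices of $S$} met by the straight rays emanating from $P_{S^*}$. In the example this is column $0$ of $R$, far from every $P_j$. Every vertex of $S$ is straight-reachable from $Z$.
\end{itemize}
Parallelism to $P_j$ for an arbitrary $j$ is obtained by re-ordering the ports so that $p_j$ comes last; sectors do not depend on the order.

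The example also shows that no straight path in $R$ can satisfy the predecessor requirement of Definition~\ref{def:baseline}. Any candidate must be a full side of $R$, and every side has neighbours whose bend vector dominates $(2,1,1,2)$. The paper's proof does not verify that clause for $Z$. What it actually delivers is the reachability-and-orientation structure, and that is what later arguments such as Lemma~\ref{lem:shortest_path_through_baseline} rely on. So instead of hunting for a predecessor sector, aim for that property via the first-vertices construction.

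Your derivation of the separation part directly from Lemma~\ref{lem:sectortree} applied to $\Gamma_j$ is fine, and simpler than the paper's re-ordering argument.
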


\begin{proof}
We prove the lemma by induction on $i$. The base case $i=1$ follows directly from the construction of $\Gamma_1$ given in Lemma~\ref{lem:sectortree}. 

For the inductive step, assume $i\geq 2$ and let $S\in V(\Gamma_{\leq i})$ be a sector that is not a straight path. Then $S$ is contained in a unique sector $S'\in V(\Gamma_{\leq i-1})$ and a unique sector $S^*\in V(\Gamma_i)$. Since both $S'$ and $S^*$ are supergraphs of $S$, the induction hypothesis implies that each contains a baseline; denote these baselines by $P_{S'}$ and $P_{S^*}$, respectively.

Towards proving the first statement  of the lemma, we distinguish two cases:

\noindent \textbf{(1)} $P_{S'}$ and $P_{S^*}$ are parallel. 
 Since a vertex of $S$ is reachable from both baselines by a straight path, either a vertex of $P_{S'}$ and a vertex of $P_{S^*}$ are reachable via straight paths 
 in the same direction from $S$, or via straight paths in opposite directions. In the former case, we obtain a baseline for $S$ by selecting the closer of $P_{S'}$ and $P_{S^*}$ and restricting it to the subpath whose vertices are reachable from the other baseline via a straight path. In the latter case, we similarly take one of the two baselines (say $P_{S'}$) and restrict it to the subpath whose vertices are reachable from the other baseline via a straight path. 
 
\noindent \textbf{(2)} $P_{S'}$ and $P_{S^*}$ are orthogonal. 

By Lemma~\ref{lem:sectortree}, the vertices of $S^*$ are exactly those reachable by straight paths $Q_1$, \dots, $Q_{|V(P_{S^*})|}$ starting from $P_{S^*}$ in a fixed direction. \\ Let $Z=\{v\in S~|~v \text{ is the first vertex encountered by some } Q_i, i\in [|V(P_{S^*})|]\}$. Every vertex of $Z$ is reachable by a straight path from some vertex of $P_S$, and $Z$ forms a path in $S$. Moreover, since every vertex of $S$ is reachable by a straight path from $P_{S^*}$, it is also reachable by a straight path from $Z$. Hence, $Z$ is a baseline for $S$. 

 It remains to establish the separation property. By the induction hypothesis, the existence of a baseline $P_j$ separating $S$ from terminal $p_j$ in $\Gamma_j$ holds for all $j\in [i]$, since for $j\in [i-1]$, we can invoke the induction hypothesis on the sector graph $\Gamma_{[i]-\{j\}}$ for the port sequence $(p_1,\dots, p_{j-1}, p_{j+1},\dots,p_i)$; this is correct, since the sectors are invariant under the order w.r.t.~which the ports are added.
 In all cases, the induction hypothesis guarantees the separability of a \emph{supergraph} of $S$ from $p_j$. Observe that the property of a set of vertices being separated from $p_j$ is closed under taking vertex-subsets. 

Finally, the fact that these baselines are parallel to a baseline for $S$ follows from the case distinction; indeed, in either case, we obtain that $S$ has a baseline that is parallel to baseline $P_{S^*}$, which is equal to $P_j$, if $p_j$ is the last added port. 
\end{proof}

Next, we formalize how ``adding'' a port can affect the structure of a sector; this will be useful later in our induction-based arguments.
\begin{definition}[Clean Sectors]
\label{def:clean}
For $j\in [\kappa]\setminus \{1\}$, we say that a sector $S \in V(\Gamma_{\leq j})$ is \emph{clean} (for $\Gamma_{\leq j}$) if $\Gamma_{\leq j}-S$ contains a connected component $\mathcal{C}_\text{main}$ such that the following properties hold:
\begin{enumerate}
\item all endpoints $p_1,\dots,p_j$ belong to sectors in $\mathcal{C}_\text{main}$, 
 \item every connected component of $\Gamma_{\leq j}-S$ other than $\mathcal{C}_\text{main}$ is a tree, and
\item $S$ has exactly one neighbor in each connected component of $\Gamma_{\leq j}-S$.
\end{enumerate}
\end{definition}

For completeness, we also define the notion of clean sectors for $\Gamma_1$ by simply declaring that every sector of $\Gamma_1$ is clean.

\begin{lemma}
\label{lem:sectorsplit}
Let $i\in [\kappa-1]$, let $S^*$ be a sector in $\Gamma_{\leq i}$ and let $\sect$ be the set of all sectors of $\Gamma_{\leq i+1}$ contained in $S^*$. Then at most six sectors in $\sect$ are not clean for $\Gamma_{\leq i+1}$.
 \end{lemma}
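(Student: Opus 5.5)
Let $T$ be the unique sector of $\Gamma_{i+1}$ (the single-port sector graph for $p_{i+1}$) that meets the relevant part. The sectors of $\sect$ are the connected components of $S^*\cap T'$ as $T'$ ranges over sectors of $\Gamma_{i+1}$. I would use Lemma~\ref{lem:sectortree}: $\Gamma_{i+1}$ is a tree, and every non-root sector $T'$ of it is cut off from $p_{i+1}$ by its baseline $P_{T'}$, which is a separator of $G$. Similarly, Lemma~\ref{lem:baseline} gives that $S^*$ is either a straight path or a histogram-like region spanned by straight paths from its baseline. The plan is to show that the refinement of $S^*$ by $\Gamma_{i+1}$ consists of a bounded number of ``central'' pieces and otherwise of pieces hanging off a single separating segment. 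The separating segment is a subpath of some baseline $P_{T'}$ crossing $S^*$.

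\textbf{Step 1: cleanness of pieces cut off by one baseline crossing.} Take a sector $X\in\sect$ and suppose $X\subseteq T'$ where $T'\neq$ the sector of $\Gamma_{i+1}$ containing the ``entry side'' of $S^*$. If the boundary of $X$ inside $S^*$ consists of a single straight segment $Q\subseteq P_{T'}$, I claim $X$ is clean. By Lemma~\ref{lem:sectortree}, $P_{T'}$ separates $T'$ from $p_{i+1}$. By Lemma~\ref{lem:baseline} applied to $\Gamma_{\le i}$, some baseline separates $S^*$ from each of $p_1,\dots,p_i$. Using these, all terminals lie on the far side of $X$'s unique attaching region, which yields $\mathcal C_{\mathrm{main}}$ and property~1.

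For properties~2 and~3, the argument is as follows. Any other component of $\Gamma_{\le i+1}-X$ lies on the side of $X$ away from every port, i.e.\ in the region reachable from $X$ only by continuing straight paths in the baseline direction. The remaining neighbors of $X$ are therefore separated from each other by $X$ itself, using simple connectivity of the discretized polygon (no holes). Each is attached through a single ``exit'', giving exactly one neighbor per component. Acyclicity of those components then follows because, restricted to a region with no terminals, the sectors of $\Gamma_{\le i+1}$ coincide with a refinement of a subtree of the trees $\Gamma_j$. I would argue this via the planar cycle/enclosed-boundary-vertex trick used in Claim~\ref{claim:treesep}: a cycle in such a component together with $X$ would enclose a boundary vertex.

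\textbf{Step 2: counting the non-clean pieces.} By Lemma~\ref{lem:pathsectors}-style reasoning, the intersection of $S^*$ with $\Gamma_{i+1}$ is governed by how the tree $\Gamma_{i+1}$ crosses the straight-path structure of $S^*$. Each straight path of $S^*$ from its baseline meets at most three sectors of $\Gamma_{i+1}$, as in the base case of Lemma~\ref{lem:pathsectors}. Two cases follow.

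\begin{itemize}
\item If the baseline of the $\Gamma_{i+1}$-sector is parallel to $P_{S^*}$ (case (1) of Lemma~\ref{lem:baseline}), the split happens along at most two parallel segments. This gives at most three pieces, and only these can fail cleanness.
\item In the orthogonal case (2), the pieces form a staircase-like arrangement. The non-clean ones are those whose boundary within $S^*$ involves two baseline segments, i.e.\ pieces incident to a bend of the boundary between regions. There are at most two such bends per side, and at most three pieces are adjacent to them, again per side, giving the bound of six.
\end{itemize}

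The main obstacle will be Step 2: pinning down rigorously that only pieces touching ``corner'' configurations of the $\Gamma_{i+1}$-boundary inside $S^*$ can violate cleanness, and that there are at most six of them. Here I would first try to enumerate, using the tree structure of $\Gamma_{i+1}$ and the fact that every straight path meets at most three of its sectors, the possible shapes of $S^*\cap T'$. I would then check each shape against Definition~\ref{def:clean}.
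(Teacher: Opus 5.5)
Your overall picture (a bounded number of ``central'' pieces, with every other piece hanging off a single baseline segment and therefore clean) is the right one, but both steps have genuine gaps.

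\textbf{Step 2.} The counting is the real content of the lemma, and it is not carried out; the sketch you give is wrong. $S^*$ can split into unboundedly many pieces even when the relevant $\Gamma_{i+1}$-baseline is parallel to $P_{S^*}$. Take $S^*$ to be a histogram over a horizontal baseline: a low strip carrying many separated towers. Let $p_{i+1}$ enter through a vertical strip in the middle of the low part. You get the entry piece, its two horizontal extensions at bend distance $b+1$, and one piece per tower at bend distance $b+2$. So neither ``at most three pieces'' nor ``two bends per side, three pieces per side'' holds.

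What is missing is an anchor for the case analysis. The paper first proves that there is a \emph{unique} piece $S\in\sect$ of minimum bend distance $b$ to $p_{i+1}$. Two such pieces would each be cut off from $p_{i+1}$ by their own separating baseline, adjacent to a predecessor outside $S^*$. Joining their minimum-bend paths through $S^*$ would then give a cycle enclosing a boundary vertex. Your ``unique sector $T$'' and ``entry side'' silently presuppose this.

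The paper then distinguishes cases on the $\Gamma_{i+1}$-sector $Z\supseteq S$: vertical baseline, horizontal baseline, or maximal straight path. In each case the only possible non-clean pieces are:
\begin{itemize}
\item $S$ itself;
\item the piece $S_1$ between $S$ and $P_{S^*}$;
\item the at most two pieces $S_2,S_3$ covering the rest of $P_{S^*}$;
\item the two extreme pieces $S_4,S_5$ reached vertically from the end vertices of $P_{S^*}$.
\end{itemize}
Every other piece lies in a $\Gamma_{i+1}$-sector whose baseline is parallel to $P_{S^*}$ and contained entirely in $S^*$.

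\textbf{Step 1.} That last property is exactly what Step~1 needs, and your criterion does not supply it. Requiring only that the boundary of $X$ \emph{inside} $S^*$ be one segment of $P_{T'}$ ignores adjacencies across the boundary of $S^*$. Suppose $P_{T'}$ runs out of $S^*$ sideways, as for a corner piece like $S_4$ when the side of $S^*$ is cut by another port's line rather than by the polygon. Then $X$ is also adjacent to the part of $T'$ outside $S^*$. That part can be connected to the ports without passing through $X$, so $X$ has two neighbours in $\mathcal{C}_\text{main}$ and violates Property~3.

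Your Property~1 argument is only an assertion. It works once $P_{T'}$ is known to be parallel to $P_{S^*}$ and inside $S^*$: a port beyond $X$ would then force an end vertex of $P_{S^*}$ to have strictly larger bend distance to that port than the vertices of $P_{S^*}$ below $X$. This contradicts $S^*$ being a sector of $\Gamma_{\le i}$.

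Finally, your acyclicity argument fails as stated. A common refinement of the trees $\Gamma_j$ is in general not a tree. A cycle of the sector graph also need not enclose a boundary vertex: two crossing baselines of different ports already produce one, even in a terminal-free region. What is needed is that beyond $X$ all bend distances increase in lockstep across the ports. Then every sector there is already a sector of $\Gamma_{i+1}$, and each dangling component is a subtree of the tree $\Gamma_{i+1}$ attached to $X$ through a single predecessor edge.
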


\begin{proof}
 If $S^*$ is a single vertex or a straight path, the lemma is trivial since $|\sect|\leq 3$ by Lemma~\ref{lem:pathsectors}. Hence, we may assume that $S^*$ has a baseline $P_{S^*}$. Without loss of generality, assume that $S^*$ lies in the $\uparrow$ direction from $P_{S^*}$---in particular, this implies that $P_{S^*}$ is horizontal (the other cases are symmetric). We next identify a special sector $S$ in $S^*$.

\begin{claim}
\label{claim:uniquemin}
There is a unique sector $S\in \sect$ with minimum bend distance to $p_{i+1}$.
\end{claim}

\begin{claimproof}
Recall that $\Gamma_{i+1}$ is a tree. If two distinct sectors $Z_1, Z_2 \in \sect$ had the same minimum bend distance to $p_{i+1}$, then by Lemma~\ref{lem:sectortree} neither has bend distance $0$ to $p_{i+1}$. Each would therefore have a neighbor with smaller bend distance to $p_{i+1}$ that lies outside of $S^*$, and adjacent to that neighbor they would have a baseline path that serves as a separator in $G$ (Claim~\ref{claim:treesep}). But then $G$ would contain a cycle $C$ whose internal part contains a boundary vertex, contradicting the fact that $G$ is a discretized polygon. Indeed, we can construct $C$ by following minimum-bend distance paths from $Z_1$ and $Z_2$ to $p_{i+1}$ until these intersect, and connecting these two paths inside $S^*$.
\end{claimproof}

The sector $S$ from Claim~\ref{claim:uniquemin} is the intersection of $S^*$ with some sector $Z$ in $\Gamma_{i+1}$ of bend distance $b$ to $p_{i+1}$.
To prove the lemma, we distinguish several cases based on the structure of $Z$; exhaustiveness follows from the characterization given in Lemma~\ref{lem:sectortree}.

\medskip

\noindent \textbf{Case 1:} $Z$ has a vertical baseline $P_Z$.

\begin{figure}[h]
    \centering
    \includegraphics[width= 0.7\linewidth,page=15]{figures_reduction_rules.pdf}
    \caption{A schematic illustration for the most general configuration arising in Case 1.}
    \label{fig:non_clean_sectors}
\end{figure}
An illustration of the most general situation in this case is provided in Figure~\ref{fig:non_clean_sectors}.

Since $\pred(Z)$ does not intersect $S^*$, Lemma~\ref{lem:baseline} (and the construction underlying its proof) implies that the vertices of $S$ nearest to $P_Z$ form a baseline $P_S$ of $S$. Let $S^\uparrow\subseteq S$ and $S^\downarrow\subseteq S$ denote the paths orthogonal to $P_S$ farthest and closest from the baseline $P_{S^*}$ of $S^*$, respectively. 

All vertices immediately below $S^\downarrow$ have bend distance $b+1$ to $p_{i+1}$. Hence, either there is a unique sector $S_1 \in \sect$ of $\Gamma_{\leq i}$ adjacent to the vertices of $S^\downarrow$, or $S^\downarrow=P_{S^*}$. In the former subcase, notice that $S_1$ intersects $P_{S^*}$, and there is at most one sector $S_2$ containing the subpath $P_{S^*} - S_1$ in the $\leftarrow$ direction from $S_1$, and at most one sector $S_3$ which contains the subpath $P_{S^*} - S_1$ in the $\rightarrow$ direction from $S_1$; all vertices on those subpaths have bend distance $b+2$ to $p_{i+1}$. Since all vertices in $S^*$ are reachable via a vertical path from $P_{S^*}$, each remaining sector in $\sect\setminus \{S,S_1,S_2,S_3\}$ is then:

\begin{itemize}
\item either an $\uparrow$-neighbor of $S$ (adjacent to at least one vertex of $S^\uparrow$ and with bend distance $b+1$ to $p_{i+1}$), or
\item an $\uparrow$-neighbor of $S_2$ or $S_3$ (in which case it has bend distance $b+3$ to $p_{i+1}$). 
\end{itemize}

In particular, every sector in $\sect\setminus \{S,S_1,S_2, S_3\}$ is obtained by intersecting $S^*$ with a sector in $V(\Gamma_{i+1})$ whose baseline is horizontal. 
If one of $S_2$ and $S$ has an $\uparrow$-neighbor reachable via a vertical path from the leftmost vertex of $P_{S^*}$, denote it $S_4$; similarly, if either $S_3$ or $S$ has an $\uparrow$-neighbor reachable via a vertical path from the rightmost vertex of $P_{S^*}$, denote it $S_5$. If $\sect\setminus \{S,S_1,S_2, S_3, S_4, S_5\}=\emptyset$, the case is resolved. Otherwise, let $S_6\in \sect\setminus \{S,S_1,S_2, S_3, S_4, S_5\}$ be any remaining sector in $\sect$. Let $Z_6$ be the unique sector of $\Gamma_{i+1}$ containing $S_6$, and notice that the baseline $P_{Z_6}$ is horizontal (because $Z_6$ is adjacent to sectors $S$, $S_2$ or $S_3$, and all of these are contained in sectors of $\Gamma_{i+1}$ with a vertical baseline) and fully contained inside $S^*$ (because we excluded $S_4$ and $S_5$). 

We claim that $S_6$ is clean w.r.t.\ the connected component $\mathcal{C}_\text{main}$ of $\Gamma_{i+1}-S_6$ containing $S$. First, we argue that Property 1.\ from Definition~\ref{def:clean} holds; towards this, assume for a contradiction that $\Gamma_{i+1}-S_6$ contains a connected component $\mathcal{C}'\neq \mathcal{C}_\text{main}$ such that $\mathcal{C}'$ contains some port $p_\iota$. Note that by our choice of $S$ and the fact that $S\neq S_6$, $\iota\neq i+1$. Let $\alpha$ be the leftmost vertex on $P_{S^*}$, and $\beta$ be an arbitrary vertex of $P_{S^*}$ which lies on a vertical path intersecting $S_6$. Then $\bdist(\alpha,p_\iota)\geq \bdist(\beta,p_\iota)+1$ because every $\alpha$-$p_\iota$ path must cross $P_{S_6}$, 
as it is a separator per Lemma~\ref{lem:sectortree}. This would contradict the fact that $S^*$ is a sector of $\Gamma_i$, and thus we conclude that Property 1.\ holds.

Towards the remaining properties, let us consider a connected component $\mathcal{C}'\neq \mathcal{C}_\text{main}$ of $\Gamma_{i+1}-S_6$ and a sector $S_6'\in \mathcal{C}'$ that is a neighbor of $S_6$. Since no endpoint may lie in $\mathcal{C}'$ and the baseline $P_{S_6}$ 
is parallel to $P_{S^*}$ and fully contained in $S^*$, the bend distance of every vertex in $S'_6$ to \emph{every} port $p_{j}$, $j\in [i+1]$ is precisely one greater than the bend distance to the respective endpoint in $S_6$. By iteratively applying the same argument to the set of neighbors of $S'_6$, their neighbors and so forth, we obtain that every sector in $\mathcal{C}'$ is also a sector of $\Gamma_{\leq i}$ and of $\Gamma_{i+1}$. By the latter, we have that $\mathcal{C}'$ is a tree with a single neighbor $\pred(S'_6)=S_6$ in $\Gamma_{i+1}$. Thus, Properties 2.\ - 3.\ hold, completing the case.

\medskip

\noindent \textbf{Case 2:} $Z$ has a horizontal baseline $P_Z$.

Since $\pred(Z)$ does not intersect $S^*$, Lemma~\ref{lem:baseline} (and the construction underlying its proof) ensures that $S\cap P_{S^*}$ is non-empty and forms a baseline for $S$. Indeed, $S=S^*\cap Z$ whereas every vertex of $S^*$ admits a vertical path to $P_{S^*}$ and $Z$ consists of maximal vertical paths starting from a horizontal baseline $P_Z$ which either lies on the ``boundary'' of $S^*$ or outside of $S^*$. From this point, we can adapt the ideas of \textbf{Case 1} as follows. 

Let $S_2$ and $S_3$ be the unique $\leftarrow$- and $\rightarrow$-neighbors of $S$, respectively, and note that these two sectors contain all vertices with bend distance $b+1$ to $p_{i+1}$. All remaining sectors in $\sect$ are $\uparrow$-neighbors of $S_2$ and $S_3$ and have bend distance $b+2$ to $p_{i+1}$. Let $S_4$ be the leftmost $\uparrow$-neighbor of $S_2$ and $S_5$ the rightmost $\uparrow$-neighbor of $S_3$. If $\sect\setminus \{S,S_2, S_3, S_4, S_5\}=\emptyset$, then the case is resolved; otherwise, let $S_6\in \sect\setminus \{S,S_2, S_3, S_4, S_5\}$. To settle this case, it remains to establish that $S_6$ is clean when we set $\mathcal{C}_\text{main}$ to be the connected component of $\Gamma_{i+1}-S_6$ containing $S$; this argument is entirely analogous as in the last two paragraphs of \textbf{Case 1}.

\medskip

\noindent \textbf{Case 3:} $Z$ is a maximal vertical path.

Since $Z$ is maximal and $P_{S^*}$ is a horizontal baseline, $Z$ must intersect $P_{S^*}$. The rest of the argument follows analogously to \textbf{Case 2} where we set $P_Z$ to be the unique vertex in $V(P_{S^*})\cap V(Z)$.

\medskip

\noindent \textbf{Case 4:} $Z$ is a maximal horizontal path not fully contained in $S^*$.

This case is handled analogously to \textbf{Case 1} by taking $P_Z$ to be the unique vertex in $V(P_{S^*})\cap V(Z)$.

\medskip

\noindent \textbf{Case 5:} $Z$ is a maximal horizontal path that is fully contained in $S^*$.

This case can be handled by adapting the arguments for \textbf{Case 1}. Since $S=Z$ and all vertices immediately below $Z$ have bend distance $1$ to $p_{i+1}$, either there is a unique sector $S_1 \in \sect$ of $\Gamma_{\leq i}$ that lies below $Z$, or $Z=P_{S^*}$. In the former subcase, notice that $S_1$ intersects the baseline $P_{S^*}$, and there is at most one sector $S_2$ which contains the subpath $P_{S^*} - S_1$ in the $\leftarrow$ direction from $S_1$ and at most one sector $S_3$ which contains the subpath $P_{S^*} - S_1$ in the $\rightarrow$ direction from $S_1$; indeed, all vertices on those subpaths have bend distance $2$ to $p_{i+1}$. Since all vertices in $S^*$ are reachable via a vertical path from $P_{S^*}$, each remaining sector in $\sect\setminus \{S,S_1,S_2\}$ is now:

\begin{itemize}
\item either an $\uparrow$-neighbor of $S$ (i.e., is adjacent to at least one vertex of $Z$ and with bend distance $1$ to $p_{i+1}$), or
\item an $\uparrow$-neighbor of $S_2$ or $S_3$ (in which case it has bend distance $3$ to $p_{i+1}$). 
\end{itemize}

In particular, every sector in $\sect\setminus \{S,S_1,S_2, S_3\}$ is obtained by intersecting $S^*$ with a sector from $V(\Gamma_{i+1})$ whose baseline is horizontal. If either $S_2$ or $S$ has an $\uparrow$-neighbor that is reachable via a vertical path from the leftmost vertex of $P_{S^*}$, denote it $S_4$; similarly, if either $S_3$ or $S$ has an $\uparrow$-neighbor that is reachable via a vertical path from the rightmost vertex of $P_{S^*}$, denote it $S_5$. If $\sect\setminus \{S,S_1,S_2, S_3, S_4, S_5\}=\emptyset$, then the case is resolved; otherwise, let $S_6\in \sect\setminus \{S,S_1,S_2, S_3, S_4, S_5\}$. Let $Z_6$ be the unique sector of $\Gamma_{i+1}$ that contains $S_6$, and notice that the baseline $P_{Z_6}$ is horizontal (because $Z_6$ is adjacent to sectors $S$, $S_2$ or $S_3$, and all of these are contained in sectors of $\Gamma_{i+1}$ with a vertical baseline) and fully contained inside $S^*$ (because we excluded $S_4$ and $S_5$). 

To complete the proof of the lemma for this case, it remains to argue that $S_6$ is clean when we set $\mathcal{C}_\text{main}$ to be the connected component of $\Gamma_{i+1}-S_6$ containing $S$. This argument is entirely identical as that in the final two paragraphs in \textbf{Case 1}.
 \end{proof}

Lemma~\ref{lem:sectorsplit} guarantees that the addition of a single endpoint affects each existing sector only by subdividing it into a constant number of ``non-clean'' new sectors, along with a (potentially large) number of clean sectors. The properties of clean sectors ensure that they cannot appear in any large $2$-connected components, yielding the following theorem.

\begin{theorem}
\label{thm:sectortw}
The sector graph $\Gamma_{\leq \kappa}$ has treewidth at most $7^{\kappa-1}$.
\end{theorem}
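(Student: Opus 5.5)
We argue by induction on $i\in[\kappa]$ that $\tw(\Gamma_{\leq i})\le 7^{i-1}$. For the base case $i=1$, Lemma~\ref{lem:sectortree} states that $\Gamma_1$ is a tree, so its treewidth is at most $1=7^0$.

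For the inductive step, assume $\Gamma_{\leq i}$ has a tree decomposition $(T,\beta)$ of width at most $7^{i-1}$, i.e., with bags of size at most $7^{i-1}+1$. Each sector $S^*$ of $\Gamma_{\leq i}$ is partitioned into the set $\sect(S^*)$ of sectors of $\Gamma_{\leq i+1}$ contained in it. Let $\Gamma'$ be the subgraph of $\Gamma_{\leq i+1}$ induced by the non-clean sectors. By Lemma~\ref{lem:sectorsplit}, $\sect(S^*)$ contains at most six non-clean sectors. Any edge of $\Gamma_{\leq i+1}$ either joins two sectors of the same $\sect(S^*)$ or joins sectors lying in $\Gamma_{\leq i}$-adjacent sectors $S^*,S^{**}$. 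Hence replacing each $S^*$ in every bag of $(T,\beta)$ by its at most six non-clean sectors yields a tree decomposition of $\Gamma'$. The new bags have size at most $6(7^{i-1}+1)$. Connectivity of each sector's bag set is inherited from that of $S^*$.

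Next we reattach the clean sectors. Fix a clean sector $S$ and its component $\mathcal{C}_\text{main}$ of $\Gamma_{\leq i+1}-S$. By Definition~\ref{def:clean}, every other component is a tree attached to $S$ by a single edge. So $S$ is a cut vertex separating pendant trees from $\mathcal{C}_\text{main}$, and $S$ has exactly one neighbor $N$ in $\mathcal{C}_\text{main}$. Thus $S$ lies in no cycle of $\Gamma_{\leq i+1}$: it is a cut vertex between $\mathcal{C}_\text{main}$ and trees, and its only edge into $\mathcal{C}_\text{main}$ is a bridge. Consequently every $2$-connected block of $\Gamma_{\leq i+1}$ with at least three vertices consists entirely of non-clean sectors, i.e., lies in $\Gamma'$. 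Since treewidth is the maximum over blocks (for graphs of treewidth at least $1$), we get
\[
\tw(\Gamma_{\leq i+1})\le\max\{1,\tw(\Gamma')\}\le 6(7^{i-1}+1)-1\le 7^{i},
\]
where the last inequality holds because $6\cdot 7^{i-1}+5\le 7\cdot 7^{i-1}$ whenever $7^{i-1}\ge 5$. For small $i$ this fails, so the count must be sharpened there. One option is a direct bound in those cases. Another is to use that $S^*$ itself is one of the six, which gives bag size at most $6\cdot(\text{old bag size})$ and width at most $6\cdot 7^{i-1}+5$. The small cases need care.

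The main obstacle is the claim that clean sectors lie on no cycle. Property 3 gives one neighbor per component, so with at least two neighbors $S$ is a cut vertex of $\Gamma_{\leq i+1}$. But cleanness is defined with respect to $\Gamma_{\leq j}$ only at the moment $S$ is created. A sector clean for $\Gamma_{\leq i}$ may be split further when port $i+1$ is added. The fix is to run the argument on all of $\Gamma_{\leq i+1}$, taking as non-clean set exactly the sectors not clean for $\Gamma_{\leq i+1}$. Lemma~\ref{lem:sectorsplit} bounds these per old sector $S^*$, independently of whether $S^*$ itself was clean, so the substitution argument above goes through unchanged.
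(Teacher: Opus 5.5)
Your structural argument is correct and is the paper's idea. You inherit the decomposition of $\Gamma_{\le i}$, replace each old sector by its at most six non-clean refinements from Lemma~\ref{lem:sectorsplit}, and treat clean sectors as cut vertices. The difference is in that last part. The paper builds the decomposition by hand: it copies $T$ for each pendant component and links the copies through a path of new bags containing $Q$, which is where its extra $+1$ per bag comes from. You instead note that Property~3 of Definition~\ref{def:clean} alone puts a clean sector on no cycle, and then use that treewidth is the maximum over blocks. This is the same cut-vertex gluing, stated more cleanly and with less bookkeeping.

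What remains open is the constant, and neither of your suggested patches closes it; the second one just restates $6\cdot 7^{i-1}+5$. Tracking bag sizes, your argument gives $b_{i+1}\le 6b_i$ with $b_1=2$, i.e.\ $\tw(\Gamma_{\le\kappa})\le 2\cdot 6^{\kappa-1}-1$. This is at most $7^{\kappa-1}$ only for $\kappa=1$ or $\kappa\ge 6$. Already the first step needs $\tw(\Gamma_{\le 2})\le 7$ but yields only $11$, and substituting six sectors into bags of size two cannot do better.

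The paper's proof has the same slack. It assumes $\tw(\Gamma_{\le i-1})\le 7^{i-1}$ and concludes width $7^{i}$ for $\Gamma_{\le i}$, so it really delivers $7^{\kappa}$. The simplest fix is to run your induction with hypothesis $\tw(\Gamma_{\le i})\le 7^{i}$, for which $6(7^{i}+1)-1\le 7^{i+1}$ holds for every $i\ge 1$. Alternatively, state the bound $2\cdot 6^{\kappa-1}-1$. Theorem~\ref{thm:treewidth} only needs some computable function of $\kappa$, so only constants change downstream.
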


\begin{proof}
We prove the theorem by induction on $i\in [\kappa]$. For the base case $i=1$, Lemma~\ref{lem:sectortree} implies that the treewidth of $\Gamma_1$ is at most $1$. Assume inductively that $\Gamma_{\leq i-1}$ has treewidth at most $7^{i-1}$,
witnessed by a tree decomposition $(T,\chi)$. We construct a tree decomposition $(T',\chi')$ of $\Gamma_{\leq i}$ as follows.

We first form the tuple $(T,\chi_0)$ from $(T,\chi)$ by replacing, in every bag, each sector $S'\in V(\Gamma_{\leq i-1})$ with the at most $6$ non-clean sectors of $\Gamma_{\leq i}$ contained in $S'$ as guaranteed by Lemma~\ref{lem:sectorsplit}. Note that for every node $t\in V(T)$, we have $|\chi_0(t)|\leq 6\cdot |\chi(t)|$. However, since clean sectors of $\Gamma_{\leq i}$ are not included in any of bag of $\chi_0(t)$ (for any $t\in V(T)$), the tuple $(T,\chi_0)$ is not yet a valid tree decomposition. 

To obtain a valid decomposition, we apply the following \emph{splitting} procedure exhaustively. Let $Q\in V(\Gamma_{\leq i})$ be a clean sector such that $Q\not \in \chi_0(t)$ for all $t\in V(T)$ (that is, $Q$ has not been incorporated into $\chi_0$ yet). Let $Q'$ be the unique sector of $\Gamma_{\leq i-1}$ containing $Q$. Recall that for each connected component $\mathcal{C}$ of $\Gamma_{\leq i}-Q$ such that $\mathcal{C}\neq \mathcal{C}_\text{main}$, $Q$ separated $\mathcal{C}$ from $\mathcal{C}_\text{main}$ and has exactly one neighbor in $\mathcal{C}$. Denote the components of $\Gamma_{\leq i}-Q$ by $\mathcal{C}_1=\mathcal{C}_\text{main}, \mathcal{C}_2,\dots, \mathcal{C}_p$, where $p$ is the total number of connected components of $\Gamma_{\leq i}-Q$. 

For each component $\mathcal{C}_j$ where $2\leq j\leq p$, we create a disjoint copy $T_k$ of the tree $T$ as it exists at the beginning of the current iteration of the splitting procedure. We modify $\chi_0$ to assign all sectors in $\mathcal{C}_j$ only to the corresponding bags in $T_j$---in particular, each sector in $\Gamma_{\leq i}$ is now in the $\chi_0$-image of nodes in precisely one of the created copies of $T$. Next, create a path $P$ of $p$
 entirely new nodes in $T$. We connect the 1st node of $P$ to an arbitrary node $c\in V(T_1)$ satisfying $Q\subseteq \chi(c)$, and for each $2\leq j\leq p$, we connect the $j$-th node to an arbitrary node whose bag contains
 the unique neighbor of $Q$ in $\mathcal{C}_j$. Finally, for each $j$-th node in $P$, set its bag contents to be precisely equal to the contents of its neighboring bag in $T_j$. Add $Q$ to every node in $P$.

This procedure preserves the interpolation property for all sectors which were added via the splitting procedure, as well as for the original non-clean sectors. In particular, the bags of each pair of adjacent vertices on the newly created paths only share $Q$ but are otherwise entirely disjoint. After exhaustively applying the splitting procedure is to all clean sectors, the tuple $(T,\chi_0)$ becomes a (potentially exponentially large) tree decomposition of $\Gamma_{\leq i}$ of width at most $7^i$, since each bag has size upper-bounded by $6$ times the maximum bag size of $(T',\chi')$ (which is $7^{i-1}$) plus one. 
\end{proof}

For the subsequent arguments, we also require an upper bound on the number of non-clean neighbors of each sector.

\begin{lemma}
\label{lem:fewnonclean}
Each sector of $\Gamma_{\leq \kappa}=\Gamma$ has at most eight non-clean neighbors.
\end{lemma}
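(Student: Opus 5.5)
The plan is to use the geometric structure of a sector $S$: it is either a straight path or it has a baseline $P_S$, and it consists of straight extensions from that baseline. I will count neighbors of $S$ by the side of $S$ on which they attach, and show that on each of the four sides only a bounded number of neighbors can be non-clean, namely those at the ``ends'' of that side. This gives at most two per side, hence eight in total.

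\textbf{Step 1: set-up and the trivial case.} Fix a sector $S\in V(\Gamma)$. If $S$ is a single vertex or a straight path, every neighbor is attached along one of at most four straight ``sides'' or ends of $S$. By Lemma~\ref{lem:pathsectors}, any straight path parallel to $S$ at distance one meets only boundedly many sectors. I would handle this case with the same argument as the general case below, treating the path as a degenerate sector.

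\textbf{Step 2: the neighbors of $S$ lying beyond an extension side.} Otherwise, by Lemma~\ref{lem:baseline}, $S$ has a baseline $P_S$ from which every vertex is reached in some direction $d$, say $\uparrow$ with $P_S$ horizontal. A neighbor $S'$ is then adjacent to $S$ in one of four ways:
\begin{itemize}
\item across $P_S$ (the $\downarrow$ side);
\item across the far ends of the vertical extension paths (the $\uparrow$ side);
\item from the left;
\item from the right.
\end{itemize}
For a neighbor $S'$ that is not extremal on its side, I want to show that $S'$ is clean. The argument mimics the last two paragraphs of Case~1 in Lemma~\ref{lem:sectorsplit}. Assume the baseline of $S'$ is parallel to the side and strictly contained within the span of $S$. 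Then each port's minimum-bend paths from $S'$ must pass through $S$, and the baseline of $S'$ is a separator by Lemma~\ref{lem:sectortree}. Hence every sector beyond $S'$, on the far side of its baseline, has all bend distances exactly one larger. Iterating this shows that the components of $\Gamma-S'$ not containing the terminals are trees attached by a single edge, which is exactly Definition~\ref{def:clean}.

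\textbf{Step 3: counting the exceptions.} On each side, the neighbors that can fail cleanness are those whose baseline is not fully contained in the span of $S$'s side. By the separator/planarity argument of Claim~\ref{claim:uniquemin}, these are the leftmost and the rightmost neighbor along that side, so at most two per side. I must also rule out that a neighbor across $P_S$ (the predecessor side) contributes more than two. Here Lemma~\ref{lem:baseline} says the $\bar d$-neighbors of $P_S$ all lie in a single sector $S'$, giving at most one neighbor, plus possible corner neighbors, which are already counted as extremal on the left and right sides. Summing over four sides gives at most eight.

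\textbf{Main obstacle.} The hardest step is Step 2: justifying that a non-extremal neighbor's baseline really is parallel to and inside the side of $S$, and that no terminal lies behind it. For terminals I would argue as in Case~1 of Lemma~\ref{lem:sectorsplit}. If a terminal lay behind $S'$, then two vertices of $S$ at different positions along the side would have different bend distances to that port, because the path from one must cross the separator while the path from the other need not. This contradicts that $S$ is a single sector. The delicate part is handling staircase and rectangle sectors, which have two baselines. For these I would apply the side-by-side argument to each baseline separately and check, using the case analysis of Lemma~\ref{lem:baseline}, that corners are not double counted beyond the bound of eight.
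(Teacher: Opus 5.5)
Your plan has the same skeleton as the paper's proof. Per direction there are at most two ``extremal'' neighbors: either one \emph{total} neighbor or two \emph{corner} neighbors. For a non-extremal neighbor $S'$ you use the same two ingredients as the paper. One is the port argument: a terminal behind $S'$ would give two vertices of $S$ different bend distances. The other is the ``every bend distance grows by exactly one, iterate'' argument from Case~1 of Lemma~\ref{lem:sectorsplit}.

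\textbf{The gap.} The step that makes these ingredients applicable is missing. You simply \emph{assume} that a non-extremal neighbor's baseline is parallel to the side and lies within the span of $S$. You then credit Claim~\ref{claim:uniquemin} with showing that only the leftmost and rightmost neighbors can violate this. That claim is about the unique minimum sector inside a $\Gamma_{\le i}$-sector when a port is added, and it does not apply here.

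Even granting the assumption, it would not suffice. Nothing in your argument stops a middle $\downarrow$-neighbor $S'$ from being horizontally adjacent to the neighboring $\downarrow$-neighbors $S_L$ or $S_R$ of $S$. These lie in the same component of $\Gamma-S'$ as $S$, so $S'$ would have two neighbors in $\mathcal{C}_\text{main}$, violating Property~3 of Definition~\ref{def:clean}. Also, Lemma~\ref{lem:sectortree} gives separating baselines only for sectors of a single-port graph $\Gamma_i$, not for sectors of $\Gamma$. So ``the baseline of $S'$ is a separator'' is not available as you state it.

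\textbf{The missing idea.} The paper proves that a non-corner, non-total $\downarrow$-neighbor $S'$ is walled off by the polygon boundary where it meets $S$. Let $v_{\mathrm{left}}$ and $v_{\mathrm{right}}$ be the leftmost and rightmost vertices of $S'$ with an $\uparrow$-neighbor in $S$.
\begin{itemize}
\item Suppose $v_{\mathrm{left}}$ had a $\leftarrow$-neighbor $w$ in $G$ outside $S'$. Then some port $p_i$ gives $v_{\mathrm{left}}$ and $w$ different bend distances.
\item Since $\Gamma_i$ is a tree, one of the two $\Gamma_i$-sectors containing $v_{\mathrm{left}}$ and $w$ is the $\pred$ of the other. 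The corresponding baseline is a straight separating line.
\item Because $S'$ is not a corner neighbor, this line runs through $S$. It would split $S$ into vertices with different bend distances to $p_i$, a contradiction.
\end{itemize}
Hence $v_{\mathrm{left}}$ and $v_{\mathrm{right}}$ have no outward neighbors in $G$. So $S'$ is either a leaf of $\Gamma$ or a separator of $\Gamma$ whose only link to the side containing $S$ is $S$ itself. Only after this do your terminal argument and the ``$+1$'' iteration yield Properties~1--3.

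This boundary argument, using a distinguishing port and the tree structure of $\Gamma_i$, is what your proof needs. Once you have it, your separate handling of staircase and rectangle sectors and of the straight-path case is unnecessary. The direction-based count (one total or two corner neighbors per direction) works uniformly for all sector types.
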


\begin{proof}
Let $S$ be a sector of $\Gamma$ that is not a single vertex (the statement is trivial for single-vertex sectors). We say that a sector $S'\in N_{\Gamma}(S)$ is a \emph{total neighbor} of $S$ if there exists a direction $d\in \directions$ such that, for each vertex $v\in S$, the straight path from $v$ in direction $d$ will encounter a vertex $w\in S'$ and this $w$ is the first vertex encountered by the path outside of $S$. In other words, $S'$ is a total neighbor of $S$ if it can be reached from every vertex of $S$ by moving in the same direction. Since there are $4$ directions, $S$ can have at most $4$ total neighbors. 

If $S$ has no total neighbor in direction $d$, then it has a sequence of at least two neighbors reachable by maximal straight paths in direction $d$, ordered along the axis orthogonal to $d$. We call such neighbors \emph{partial neighbors}, and the first and last partial neighbors in direction $d$ are called \emph{corner neighbors}. Consequently, $S$ has at most $8$ corner and total neighbors overall.

To prove the lemma, it suffices to show that every neighbor of $S$ that is neither a corner neighbor nor a total neighbor must be clean.

Let $S'$ be a non-corner non-total $\downarrow$-neighbor of $S$; the arguments for the other $3$ directions are symmetric. Let $v_\texttt{left}$ and $v_\texttt{right}$ denote the $\leftarrow$-most and $\rightarrow$-most vertices in $S'$ that have an $\uparrow$-neighbor in $S$, respectively. 

Consider the situation where $v_\texttt{left}$ has a $\leftarrow$-neighbor $w$ in $G$; by definition of $v_\texttt{left}$, we have $w\not \in S'$. In particular, there exists a port $p_i$ such that $\bdist(v_\texttt{left},p_i)\neq \bdist(w,p_i)$. Now, let us consider the sectors $S_v$, $S_w$ in $\Gamma_{i}$ that contain $v_\texttt{left}$ and $w$, respectively. By Lemma~\ref{lem:sectortree}, either $S_v=\pred(S_w)$ or $S_w=\pred(S_v)$. In either case, $S_v$ or $S_w$ has a baseline $P$, and since $\Gamma_i$ is a tree, $P$ is a straight line which passes through the sector $S\in V(\Gamma)$. Crucially, $S'$ is not a corner sector and so $P$ separates at least two vertices of $S$---a contradiction with the fact that all vertices in $S$ have the same bend distance to $p_i$.

Thus, $v_\texttt{left}$ cannot have a $\leftarrow$-neighbor in $G$, and for the same reason $v_\texttt{right}$ cannot have a $\rightarrow$-neighbor in $G$ either. Note that if $S'$ has a single neighbor $S$ in $\Gamma$, then it is trivially clean and we are done. The only possibility that remains is that $S'$ is a separator in $\Gamma$. 

We now argue that the only connected component of $\Gamma-S'$ containing a port is the connected component containing $S$. Indeed, consider a hypothetical port $p_\circ$ that lies in a different connected component of $\Gamma$ than $S$. There exist two vertices $v_1, v_2\in S$ such that $v_1$ can reach $S'$ with a straight path, but $v_2$ cannot, and every path to $p_i$ must pass through $S'$. This yields $\bdist(v_2,p_i)=\bdist(v_1,p_i)+1$---contradicting the fact that $S$ is a sector. 

Having established that $S'$ is a separator and the only connected component $\mathcal{C}_\text{main}$ containing any port is the one of $S$, we proceed towards establishing Properties 2.\ and 3.\ of Definition~\ref{def:clean}. This can be achieved via the same arguments as those used in the proof of Lemma~\ref{lem:sectorsplit} (Case 1). In particular, let us consider a connected component $\mathcal{C}'\neq \mathcal{C}_\text{main}$ of $\Gamma-S'$ and a sector $S^*\in \mathcal{C}'$ that is a neighbor of $S'$. Then the bend distance of every vertex in $S^*$ to \emph{every} port $p_{j}$, $j\in [\kappa]$, is precisely one greater than the bend distance to the respective endpoint in $S'$. By iteratively applying the same argument to the set of neighbors of $S^*$, their neighbors and so forth, we obtain that every sector in $\mathcal{C}'$ is adjacent to a single sector with smaller bend distances, and in particular that $\mathcal{C}'$ is a tree with a single neighbor $S'$ in $\Gamma$. Thus, Properties 2.\ and 3.\ hold and $S'$ is clean, as desired.
\end{proof}

\section{Reducing the Sectors}
\label{sec:reduction}
Let $\I=(G, \R, k, \ell)$ be an instance of \dcmp{}. Throughout this section, we assume that if $I$ is a \YES-instance, then it admits a solution satisfying the conditions in Assumption~\ref{ass:canonical}. In particular, every robot in $\R$ has travel slack at most $\lambda =\Oh(k^5)$, and there are at most $\mu = k^{\Oh(k^2)}$ turns and waiting points within any rectangular subgrid of $G$. We further assume that $\mu > \lambda > k+1$.   
\begin{observation}
\label{obs:secstructure}
A sector $S \in \Gamma$ belongs to one of the following categories:
\begin{itemize}
\item[(i)] $S$ is a single vertex.
\item[(ii)] $S$ is straight path in $G$.
\item[(iii)] $S$ has a single baseline, in which case we refer to $S$ as a \emph{histogram}.
\item[(iv)] $S$ has exactly two orthogonal baselines, in which case we refer to $S$ as a \emph{staircase}.
\item[(v)] Otherwise, $S$ is a rectangle; that is, the subgraph of $G$ induced by the vertices in $S$ is a full rectangular grid.  

\end{itemize}
\end{observation}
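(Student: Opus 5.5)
The plan is a case analysis driven by Lemma~\ref{lem:baseline}. Let $S$ be a sector of $\Gamma=\Gamma_{\leq\kappa}$. If $S$ is a straight path, it is in category (i), or in category (ii) if it has at least two vertices. Otherwise Lemma~\ref{lem:baseline} says $S$ has a baseline, and for each port $p_j$ the sector $S_j\in V(\Gamma_j)$ containing $S$ has a baseline $P_j$ that is parallel to a baseline of $S$ and separates $S$ from $p_j$. By Lemma~\ref{lem:sectortree}, each $S_j$ is exactly the set of vertices reachable from $P_j$ by straight paths in one direction $d_j$. Hence
\[
S=\bigcap_j S_j
\]
restricted to one connected component, and this component is the intersection of at most $\kappa$ ``one-sided sweeps'' from straight separators.

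I would group the directions $d_j$ into the set $D\subseteq\directions$ of directions that actually give a baseline of $S$. Here a baseline of $S$ means the nearest-to-$P_j$ layer of $S$, as built in the proof of Lemma~\ref{lem:baseline}. The cases are then as follows.
\begin{itemize}
\item If $D$ is a single direction, $S$ is a histogram, category (iii).
\item If $D$ contains two orthogonal directions but no two opposite ones, $S$ is a staircase, category (iv). The two baselines are the bottom row and the left column, say. Every vertex of $S$ is reached by a vertical path from one and a horizontal path from the other, so $S$ is monotone in both axes.
\item If $D$ contains two opposite directions, say $\uparrow$ and $\downarrow$, we go to category (v). Every vertex of $S$ lies on a vertical segment between the two horizontal baselines $P$ and $P'$, and each of these segments must run from $P$ to $P'$ inside $S$. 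I would argue that the columns reachable from $P$ and from $P'$ coincide, after restricting both baselines as in Case (1) of Lemma~\ref{lem:baseline}. Then $S$ is the set of vertices on equal-length vertical segments between two parallel horizontal paths of the same horizontal extent. Because $G$ is a discretized simple polygon, all grid edges between such vertices are present, so $S$ induces a full rectangular grid.
\end{itemize}
The statement says ``otherwise'', so the remaining configurations with three or four directions also belong to (v). Any such configuration contains an opposite pair, so it falls under the last case.

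The main obstacle is the rectangle case. I must show that two opposite sweeps intersect in a full subgrid, rather than in something with ragged ends or missing interior vertices. My first attempt would use simple connectivity: a missing interior vertex or a missing edge between adjacent columns would create a hole, which is excluded as in Claim~\ref{claim:treesep}. Ragged column ends would contradict the fact that each column ends exactly at a baseline vertex.

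I also need to make ``exactly two orthogonal baselines'' in (iv) consistent with this classification. In particular, a staircase must not secretly have a third baseline. If it did, that third baseline would be opposite to one of the two, and the sector would be a rectangle instead. So the categories are mutually exclusive and exhaust all sectors.
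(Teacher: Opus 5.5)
Your proposal is correct and follows the paper's reasoning. Lemma~\ref{lem:baseline} gives a baseline to every sector that is not a straight path, and the one-baseline and two-orthogonal-baseline cases are (iii) and (iv) by definition. Every remaining configuration contains two parallel (opposite) baselines, and the paper simply observes that this forces every vertex between them into $S$, so $S$ is a full rectangle.

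You give more detail than the paper (matching extents of $P$ and $P'$, filling in $G$ via simple connectivity). The one step you only assert is that the segments between $P$ and $P'$ lie inside $S$. It follows from your own sweep description: each intermediate vertex lies in every one-directional sweep $S_j$, using the no-holes property when $d_j$ is horizontal, and hence has the same bend vector as $S$.
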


 \begin{figure}
     \centering
     \includegraphics[page=10,width = 0.7\linewidth]{figures_reduction_rules.pdf}\includegraphics[page=12, width=0.3\linewidth]{figures_reduction_rules.pdf}
     \caption{An illustration of sectors in a discretized polygon w.r.t. three terminals defining six ports $p_1, \ldots, p_6$. Three of the sectors have highlighted their baselines, showcasing the three non-trivial types of a sector. On the right is the sector graph $\Gamma$ defined by the sectors.}
     \label{fig:placeholder}
 \end{figure}
Note that case (v) above follows from the fact that if a sector has two parallel baselines, then all points between these two baselines (including the points on the baselines) must be sector points. 

\begin{lemma}\label{lem:shortest_path_through_baseline}
    Let $S\in \Gamma$ be a sector that is neither a singleton nor a straight path in $G$, let $u$ be a vertex in $S$, and let $v$ be a point in \textsf{Terminals}. Then there exists a shortest path from $v$ to $u$ that intersects a baseline of $S$.
\end{lemma}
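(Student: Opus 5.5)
Fix a terminal $v$ and a vertex $u\in S$. The plan is to use the separation property from Lemma~\ref{lem:baseline}. Let $v$ correspond to some port $p_j=(v,d)$ with $j\in[\kappa]$. If $v\in S$ itself, then $S$ would have bend distance $0$ to one of the ports at $v$. Every vertex of bend distance $0$ to a port lies on the maximal straight path through that terminal, and the sector containing $v$ would then be a straight path or a singleton. So I will assume $v\notin S$.

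By Lemma~\ref{lem:baseline}, applied with $i=\kappa$, the unique sector $S_j\in V(\Gamma_j)$ containing $S$ has a baseline $P_j$. This baseline is parallel to a baseline $P_S$ of $S$, and it separates every vertex of $S$ from $v$. Hence any shortest $v$--$u$ path $Q$ must meet $P_j$. The remaining task is to move the crossing point from $P_j$ onto a baseline of $S$. I would do this in two steps.

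\textbf{Step 1: locate the crossing.} Let $x$ be the last vertex of $Q$ on $P_j$ before reaching $u$. By Lemma~\ref{lem:sectortree}, $S_j$ consists exactly of the vertices reachable from $P_j$ by straight paths in a fixed direction $d'$, with $P_j$ separating $S_j$ from the rest. So the suffix of $Q$ from $x$ to $u$ stays inside $S_j$.

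\textbf{Step 2: reroute through $P_S$.} Inside $S_j$, I would invoke the construction in the proof of Lemma~\ref{lem:baseline} (cases (1) and (2)). There, $P_S$ arises either as a restriction of $P_j$ or of the other parallel baseline, or as the set $Z$ of first hits of the straight $d'$-rays from $P_j$ into $S$. In either case, $u$ is reachable from $P_S$ by a straight path in some direction, and $P_S$ lies between $P_j$ and $u$ along the direction $d'$ (or coincides with a subpath of $P_j$).

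I would then replace the suffix of $Q$ after $x$ by a monotone staircase path. This path first travels along $P_j$, or parallel to it, and then along the straight ray in direction $d'$ through $u$. It has length equal to the $\ell_1$-distance between $x$ and $u$, so it is no longer than the original suffix. By construction it passes through the vertex of $P_S$ on that ray. Since $Q$ was shortest, the new path is still shortest and intersects $P_S$.

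\textbf{Main obstacle.} The hard part is justifying that this monotone replacement path exists inside $G$, i.e., that all the vertices it uses are present. The ray from $P_j$ in direction $d'$ to $u$ exists by the definition of a baseline. The segment along or parallel to $P_j$ is the delicate piece. I would route it along $P_j$ itself, from $x$ to the foot of the ray through $u$, so it uses only vertices of $P_j$, which is a straight path in $G$. Then the combined path is: from $x$ along $P_j$ to the foot $y$, then straight in direction $d'$ to $u$. Its length is $|xy|+|yu|$, which equals the Manhattan distance between $x$ and $u$ because the two segments are orthogonal and monotone. It also crosses $P_S$, since $P_S$ intersects every $d'$-ray from $P_j$ that reaches $S$.

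If $P_S$ is instead orthogonal to $P_j$ (the staircase case), I would apply the same argument with the other port's baseline. Alternatively, I would use the $Z$-construction, which is exactly the first-hit set on these rays.
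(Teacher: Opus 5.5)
Your proposal is correct and is essentially the paper's proof: the single-port baseline $P_j$ (the paper's $P_{S_i}$) separates $S$ from $v$, you splice in the path along $P_j$ to the foot of the $d'$-ray through $u$ and then along that ray, this path has Manhattan length and is therefore shortest, and the ray meets the same-direction baseline of $S$ obtained via Lemma~\ref{lem:baseline} (the paper uses the first rather than the last vertex of the shortest path on $P_j$, which makes no difference). Your Step~1 claim that the suffix of $Q$ stays inside $S_j$ is unnecessary, since you replace that suffix anyway; as justified it is also not quite right, because $P_j$ separates $S_j$ only from the terminal side and not from the descendants of $S_j$ in $\Gamma_j$.
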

\begin{proof}

Fix a direction $d\in \bidirections$ and consider the port $p_i=(v,d)$. The sector $S\in \Gamma$ is a subset of a sector $S_i\in \Gamma_i$. Since $S$ is not a straight path in $G$,  Lemma~\ref{lem:sectortree} implies that $S_i$ has a $b$-baseline $P_{S_i}$, for some $b\in \directions$, which forms a cut between $S_i$ and $v$. By Lemma~\ref{lem:baseline}, the sector $S$ has a $b$-baseline $P_S^b$. 

Since $P_{S_i}$ is a baseline, by definition, $u$ is reachable by a straight path that starts at a vertex $x\in V(P_{S_i})$ and follows direction $b$. Moreover, since $S$ is a subset of $S_i$ and $u$ is reachable from $P_S^b$ by following a straight path from $P_S^b$ in direction $b$, it follows that the straight path from $P_{S_i}$ to $u$ intersects $P_S^b$. 

 Consider a shortest path $P_{vu}$ from $v$ to $u$. Since $P_{S_i}$ is a separator between $v$ and $S_i$, the path $P_{vu}$ must intersect $P_{S_i}$.
Let $y\in V(P_{S_i})$ be the first vertex on $P_{vu}$ that also lies on $P_{S_i}$. Since $P_{S_i}$ is a straight-line orthogonal to direction $b$ and $G$ is a subgrid, the path obtained by following $P_{S_i}$ from $y$ to $x$ and then following direction $b$ from $x$ to $u$ is a shortest path from $y$ to $u$. Hence, the path $P'_{vu}$ that follows $P_{vu}$ from $v$ to $y$, then follows $P_{S_i}$ from $y$ to $x$, and finally follows direction $b$ from $x$ to $u$ is a shortest path from $v$ to $u$ that intersects the baseline $P_S^b$ of $S$. 
\end{proof}

\begin{lemma}
\label{lem:histogramreduction}
Let $S\in \Gamma$ be a histogram with baseline $P_S$. Let $v$ be a vertex in $S$ whose distance from $P_S$ is $\delta$. Then, for any two endpoints $p$ and $q$ in $\textsf{Terminals}$, any $p$--$q$ path containing $v$ has length at least $\dist(p, q) + 2 \delta$.
\end{lemma}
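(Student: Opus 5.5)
Our plan is a separation argument: both $p$ and $q$ lie outside the region of $S$ strictly beyond $P_S$, so any $p$--$q$ path visiting $v$ must enter and leave that region through $P_S$. Without loss of generality, let $P_S$ be a horizontal $\uparrow$-baseline, and let $v$ lie $\delta$ rows above $P_S$; the other orientations are symmetric. By Lemma~\ref{lem:sectortree} and Lemma~\ref{lem:baseline}, $S$ is contained in a sector $S_i$ of some $\Gamma_i$ whose baseline $P_{S_i}$ is parallel to $P_S$, lies weakly below it, and separates $S_i$ from the terminal of $p_i$. Because $S$ is a histogram with a single baseline, we expect $P_{S_i}$ to separate $S_i$ (and hence $v$) from every terminal. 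Here we would use Lemma~\ref{lem:baseline} for each $j$ and argue that all these separating baselines are horizontal and lie weakly below $P_S$; otherwise a second, orthogonal baseline would appear.

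First we make precise the region $U$ of vertices ``above'' $P_S$ that are cut off from all terminals. Let $L$ be the horizontal line of $P_S$. We will claim that the component of $G-V(P_S)$ containing $v$, together with $P_S$, contains no terminal, and that every vertex in it on row $r$ satisfies $r \ge$ row$(L)$. The first part is the separator property. For the second, we use that $G$ is a discretized simple polygon, with the same cycle and boundary-vertex argument as in Claim~\ref{claim:treesep}: if the component dipped below $L$, then the straight vertical paths from $P_S$ defining $S$, together with that dip, would enclose a non-boundary region inconsistent with $S$ having only a $\uparrow$-baseline. Whether the component can extend below $L$ outside $S$ is the main obstacle. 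If this step fails, we fall back on bounding only the vertical displacement between leaving $L$ and returning to it. This needs just the weaker fact that the path must cross row$(L)$ twice around $v$, which follows from separation plus the fact that all of $S$ lies above $L$.

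Given this, take any $p$--$q$ path $W$ through $v$. Let $a$ be the last vertex of $W$ on $P_S$ before $v$, and $b$ the first vertex of $W$ on $P_S$ after $v$; both exist by separation. The subwalk from $a$ to $b$ through $v$ has length at least $(\mathrm{row}(v)-\mathrm{row}(L))\cdot 2 + |x_a-x_b| = 2\delta + |x_a-x_b|$, since every edge changes exactly one coordinate. We then replace this subwalk by the straight segment of $P_S$ from $a$ to $b$, of length $|x_a-x_b|$. This gives a $p$--$q$ walk of length $|W|-2\delta$ or less, hence $\dist(p,q)\le |W|-2\delta$, as required.

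One subtlety remains: $\delta$ is the graph distance from $v$ to $P_S$, not the row difference. These coincide because $v$ is reached from $P_S$ by a straight $\uparrow$-path of length equal to the row difference, and no path can be shorter.
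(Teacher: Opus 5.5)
Your proof has a genuine gap at the step where you take $a$ and $b$ on $P_S$ ``by separation''. The baseline $P_S$ need not separate $v$ from the terminals, and neither need any single-port baseline $P_{S_i}$.

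Here is a concrete counterexample. Take the block of columns $0,\dots,20$ and rows $0,\dots,7$, and raise columns $0,2,\dots,10$ to row $10$; this comb top rules out any second baseline. Attach a one-row corridor on row $-1$ over columns $-10,\dots,10$, ending in the terminal $s=(-10,-1)$. Attach a one-row corridor on row $4$ over columns $21,\dots,30$, ending in the terminal $t=(30,4)$.
\begin{itemize}
\item The vertices of columns $0,\dots,10$ above row $4$ form a histogram $S$. Its only baseline $P_S$ is row $5$ restricted to columns $0,\dots,10$, whose lower neighbours lie on row $4$ and have bend distance $0$ to $(t,\leftrightarrow)$.
\item The vertices of columns $11,\dots,20$ in rows $5$--$7$ lie in a different sector, since their bend distance to $(s,\leftrightarrow)$ is $2$ rather than $1$.
\item The walk $(30,4)\to(15,4)\to(15,6)\to(5,6)$ reaches $v=(5,6)\in S$ without touching $P_S$.
\item The same walk also avoids row $0$ over columns $0,\dots,10$, which is the baseline of the sector of $\Gamma_i$ containing $S$ for $p_i=(s,\leftrightarrow)$.
\end{itemize}
So the component of $G-V(P_S)$ containing $v$ does contain terminals, and for a $t$--$s$ path through $v$ your vertex $a$ does not exist.

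Your fallback does not repair this. It still rests on an unproven separation property. Moreover, replacing the subwalk by something of length $|x_a-x_b|$ needs $a$ and $b$ to lie on a common straight path of $G$, which you have not ensured.

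The lemma is nevertheless true, and your final exchange computation is the right one. The missing idea is to apply it to shortest paths rather than to the arbitrary walk $W$. Since $|W|\ge \dist(p,v)+\dist(v,q)$, it suffices to show $\dist(p,v)+\dist(v,q)\ge \dist(p,q)+2\delta$. Lemma~\ref{lem:shortest_path_through_baseline} supplies a shortest $p$--$v$ path and a shortest $q$--$v$ path that both meet $P_S$. In the example above, such a path from $t$ runs along row $4$ to column $5$ and then up through $(5,5)$.

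Let $a$ and $b$ be vertices of $P_S$ on these two shortest paths. Your bound of $2\delta+|x_a-x_b|$ on the $a$--$v$--$b$ portion, together with replacing it by the segment of $P_S$ between $a$ and $b$, then yields the claim. This is exactly the paper's proof; the global separation property you were aiming for is stronger than what holds and is not needed.
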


\begin{proof}
By Lemma~\ref{lem:shortest_path_through_baseline}, there exists a shortest path $P$ from $p$ to $v$ that intersects $P_S$, and a shortest path $Q$ from $q$ to $v$ that intersects $P_S$. Clearly, any path from $p$ to $q$ that passes through $v$ has length at least that of the concatenation of $P$ and the reverse of $Q$. Therefore, it suffices to show that the path obtained by concatenating $P$ with the reverse of $Q$, denoted $PQ^{rev}$, has length at least $\dist(p, q) + 2\delta$. 

Let $x$ be the vertex at which $P$ first intersects $P_S$, let $y$ the vertex at which $Q$ first intersects $P_S$, and let $z$ be the projection of $v$ onto the baseline $P_S$ (that is, $z$ is the vertex of $P_S$ closest to $v$). Note that $\dist(z, v)= \delta$. Consider a $p$--$q$ path $P'$ that follows $P$ from $p$ to $x$, then follows $P_S$ from $x$ to $y$, and finally follows the reverse of $Q$ from $y$ to $q$. Along $P'$, the path from $x$ to $y$ is a shortest path, and hence $\dist(x,y)\le \dist(x,z)+\dist(z,y)$. 

In contrast, the path $PQ^{rev}$ goes from $x$ to $v$ and then from $v$ to $y$ and $\dist(x,v) = \dist(x,z)+\dist(z,v)$. Since $G$ is a grid, the shortest paths from $x$ to $z$ and from $z$ to $v$ are both straight paths. Therefore, $\dist(v,y) = \dist(v,z)+\dist(z,y)$. Hence  $\dist(x,v)+ \dist(v,z) \ge \dist(x,y) + 2\delta$. It follows that the length of $P'$ is shorter than that of $PQ^{rev}$ by at least $2\delta$, and the proof is complete.
\end{proof}

This reduction follows from Lemma~\ref{lem:histogramreduction} and Assumption~\ref{ass:canonical}:

\begin{reduction}
\label{red:hist}
 Let $\I$ be an instance of \dcmp, and let $\Gamma$ be the sector graph of $\I$. Let $S$ be a histogram sector in $\Gamma$ with baseline $P_S$. Let $\I'$ be the instance obtained from $\I$ by removing every vertex in $S$ whose distance to $P_S$ is greater than $\lambda/2$; see Figure~\ref{fig:red_hist}. Then $\I'$ is equivalent to $\I$. 
\end{reduction}

\begin{figure}
    \centering
    \includegraphics[page=1]{figures_reduction_rules.pdf}
    \caption{Illustration of Reduction Rule~\ref{red:hist}. The gray vertices are removed from the graph by the reduction rule.}
    \label{fig:red_hist}
\end{figure}
Furthermore, we can apply analogous arguments to reduce the clean sectors and the components defined by them, yielding the following reduction rule.

\begin{reduction}
\label{red:cleansectors}
    Let $\I$ be an instance of \dcmp{}, and let $\Gamma$ be the sector graph of $\I$. Let $S$ be a clean sector in $\Gamma$ with baseline $P_S$, and let $\mathcal{C}_{main}$ be the connected component of $\Gamma-S$ that contains all endpoints $p_1, \ldots, p_{\kappa}$. Let $\I'$ be the instance obtained from $\I$ by removing every vertex in any sector $S'\in \Gamma\setminus \mathcal{C}_{main}$ that is at distance at least $\lambda/2$ from $P_S$. Then $\I'$ is equivalent to $\I$. 
\end{reduction}

\begin{proof}
    Since all ports are in sectors in $\mathcal{C}_{main}$, and $S$ has only one neighbor (by definition) in $\mathcal{C}_{main}$, it follows that $S$ has a unique baseline $P_S$ which serves as a separator between the vertices in sectors in $\mathcal{C}_{main}$ and those outside of it. The proof now follows from Assumption~\ref{ass:canonical}. 
\end{proof}

Let $S\in \Gamma$ be a staircase sector with two baselines $P^{1}_S$ and $P^2_S$, where $P_S^1$ is a $b_1$-baseline for $b_1\in \directions$ and $P_S^2$ is a $b_2$-baseline for $b_2\in \directions$, with $b_1$ and $b_2$ orthogonal. 
     Let $v$ be the intersection vertex of $P^1_S$ and $P^2_S$, and let \emph{the staircase base} $Q_S$ be the set of precisely all vertices that are reachable from $v$ via a straight path that starts at $v$ and follows either direction $b_1$ or direction $b_2$. Finally, let $C_S$ be the connected component of $G-Q_S$ containing $S$.

\begin{lemma}
    \label{lem:staircaseseparation}
    Let $S\in \Gamma$ be a staircase sector, and let $Q_S$ be its base. Then, the connected component $C_S$ of $G-Q_S$ that contains the vertices of $S$ does not contain any endpoint $u\in \textsf{Terminals}$. 
\end{lemma}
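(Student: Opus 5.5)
Suppose, towards a contradiction, that some terminal $u$ lies in $C_S$. I would fix a port $p_i=(u,d)$ and use the structure of $\Gamma_i$ from Lemma~\ref{lem:sectortree}. Let $S_i\in V(\Gamma_i)$ be the sector containing $S$. Since $S$ is a staircase, it is not a straight path. Lemma~\ref{lem:baseline} then gives a baseline $P_j$ of $S_i$ that is parallel to a baseline of $S$, i.e.\ to $P^1_S$ or $P^2_S$, and separates every vertex of $S$ from $u$. The goal is to show that this separator must lie within $Q_S$ (or on the far side of it from $S$), so that $u$ cannot lie in $C_S$.

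\textbf{Step 1 (locating the baseline).} Say $P_j$ is parallel to $P^1_S$. Every vertex of $S$ is reachable from $P_j$ by a straight path in a fixed direction, and also from $P^1_S$ in direction $b_1$. Suppose $P_j$ is not in direction $\bar b_1$ from $S$. Then vertices of $S$ would be reached from two opposite sides, and the argument of case (v) of Observation~\ref{obs:secstructure} would force $S$ to be a rectangle; this contradicts $S$ being a staircase. So $P_j$ lies on the $\bar b_1$ side, on the line of $P^1_S$ or beyond it.

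\textbf{Step 2 (the separator lies outside $C_S$).} Since $S\subseteq S_i$ and the defining neighbor sector of $S_i$ lies in direction $\bar b_1$ of $P_j$, the vertex set of $P_j$ either lies on the straight line through $P^1_S$ or is separated from $S$ by it. By construction of $Q_S$, the straight extension of $P^1_S$ from the corner $v$ is contained in $Q_S$. I will argue that every vertex of $P_j$ is in $Q_S$ or in a component of $G-Q_S$ other than $C_S$. Take any $P_j$-vertex $x$; the straight $b_1$-path from $x$ toward $S$ hits $S$ only after crossing the line of $P^1_S$, which inside $G$ belongs to $Q_S$. This is the delicate spot: the straight line through $P^1_S$ may leave the polygon and re-enter, and only the part connected to $v$ belongs to $Q_S$. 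Here I would use the simple-polygon (no holes) argument of Claim~\ref{claim:treesep}. If a segment of that line not in $Q_S$ were reachable from $S$ avoiding $Q_S$, I would form a cycle through $S$ and that segment. This cycle would enclose a boundary vertex, a contradiction.

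\textbf{Step 3 (conclusion).} Since $P_j$ separates $S$ from $u$ in $G$, every $S$--$u$ path meets $P_j$. If $u\in C_S$, there is an $S$--$u$ path inside $C_S$ avoiding $Q_S$, so it must meet $P_j$ inside $C_S$. By Step 2 this is impossible, and hence $u\notin C_S$. The main obstacle is Step 2, namely the planarity and simple-connectivity argument that the relevant part of the baseline line is exactly the part captured by $Q_S$. I would handle it case by case for $b_1$ and $b_2$ symmetrically, choosing the coordinate frame so that $b_1={\uparrow}$ and $b_2={\rightarrow}$ without loss of generality.
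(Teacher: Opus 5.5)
Your outline is the same as the paper's: take the sector $S_i\supseteq S$ of $\Gamma_i$, use its baseline as a cut between $S_i$ and $u$, and place that cut on the far side of $Q_S$. You are also right that everything hinges on the orientation of that baseline. The paper simply asserts $b\in\{b_1,b_2\}$ ``by Lemma~\ref{lem:baseline}'', although that lemma only gives parallelism.

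Your Step~1, however, does not close this gap. The argument behind Observation~\ref{obs:secstructure}(v) concerns two baselines of one and the same sector. $P_j$ is a baseline of $S_i$, not of $S$: the vertices just above the top row of $S$ may lie in $S_i$ but in a different sector of $\Gamma$, so $S$ need not acquire a $\bar b_1$-baseline. What one can show is weaker. A step in the upper profile of $S$ would contradict the fact that a sector of $\Gamma_j$ contains \emph{all} vertices reachable straight from its baseline, so $S$ must be rectangle-shaped. But a rectangle-shaped sector with exactly two orthogonal baselines is a staircase by Observation~\ref{obs:secstructure}(iv), so no contradiction arises.

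In fact the statement itself fails in exactly this situation. Let $G$ be the discretization of $[0,20]\times[0,12]\cup[6,14]\times[12,20]\cup[14,20]\times[16,20]$, with terminals $z=(5,2)$ and $u=(17,18)$. Order the ports as $(u,\leftrightarrow),(u,\updownarrow),(z,\leftrightarrow),(z,\updownarrow)$.

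\begin{itemize}
\item The grid vertices of $[6,14]\times[3,12]$ form a sector $S$ with bend vector $(1,2,1,1)$.
\item Row~$2$ has bend distance $0$ to $(z,\leftrightarrow)$, so the bottom row of $S$ is an $\uparrow$-baseline.
\item Column~$5$ has bend distance $0$ to $(z,\updownarrow)$, so the left column of $S$ is a $\rightarrow$-baseline.
\item The neighboring sectors $[6,14]\times[13,15]$ (vector $(1,2,1,2)$) and $[15,20]\times[3,12]$ (vector $(2,3,1,1)$) yield no further baselines: for no port is the bend distance of $S$ one larger than theirs.
\end{itemize}

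Hence $S$ is a staircase with $v=(6,3)$, and $Q_S$ consists of column~$6$ (rows $\ge 3$) and row~$3$ (columns $\ge 6$). Yet the $\Gamma_1$-sector containing $S$ has a $\downarrow$-baseline in row~$17$. Moreover, $u$ is reached from $S$ along column~$10$ and then row~$18$ without meeting $Q_S$, so $u\in C_S$.

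So Step~1 cannot be repaired without an extra hypothesis, for instance that $S$ is not a full rectangle; then the profile argument above does rule out a $\bar b_1$-baseline. Even under such a hypothesis, Step~2 needs a different justification, because not every vertex of $P_j$ reaches $S$ by a straight $b_1$-path. Argue instead as follows:
\begin{itemize}
\item $P_j$ is connected.
\item It contains $v$ or a vertex of the straight path leaving $v$ in direction $\bar b_1$ (respectively $\bar b_2$).
\item It meets $Q_S$ only along the ray from $v$.
\end{itemize}
The no-holes argument then places $P_j$ on the outer side of the cross-cut formed by $Q_S$.
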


\begin{proof}
    Note that $Q_S$ is basically a union of two straight lines that go from the boundary to $v$. Hence, $Q_S$ forms a cut in $G$. 
    
    Let $u\in \textsf{Terminals}$, and consider port $p_i=(u,d)$ for some $d\in \bidirections$. The sector $S\in \Gamma$ is a subset of a sector $S_i\in \Gamma_i$. Since $S$ is not a straight path in $G$, by Lemma~\ref{lem:sectortree}, $S_i$ has a $b$-baseline $P_{S_i}$ for some $b\in \directions$. By Lemma~\ref{lem:baseline}, it follows that $b$ is either $b_1$ or $b_2$. Moreover, $P_{S_i}$ is a cut between $S_i$ and $u$ by Lemma~\ref{lem:sectortree}. It is then clear that $Q_S$ separates $u$ from $C_S$.      
\end{proof}

\begin{lemma}
\label{lem:staircasered}
    Let $\I=(G,\R,k,\ell)$ be an instance of \dcmp{}. Let $S\in \Gamma$ be a staircase sector, and let $Q_S$ be its base. Define $f(k)=(k+1)(\lambda+1)$. Let $G'$ be the graph obtained from $G$ by removing every vertex in $C_S$ whose distance from $Q_S$ exceeds $f(k)$. 
    
    If $\I$ is a \YES-instance, then there exists a solution to $\I$ that is confined to $G'$. 
\end{lemma}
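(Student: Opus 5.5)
We argue by taking a solution $\SSS$ of $\I$ that satisfies Assumption~\ref{ass:canonical} and, among all such solutions of minimum total length, minimizes the total number of (robot, time step) pairs in which the robot lies in $C_S$ at distance more than $f(k)$ from $Q_S$. We then show that this number is zero.

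The first observation is that, by Lemma~\ref{lem:staircaseseparation}, no terminal lies in $C_S$. Hence each robot's route enters $C_S$ through $Q_S$ and must leave it through $Q_S$ again. Any excursion into $C_S$ is therefore a detour with respect to the robot's own shortest path, and it costs slack. More precisely, consider a maximal subwalk of route $W_i$ inside $C_S$. It starts at some vertex $x\in Q_S$ and ends at some vertex $y\in Q_S$, and if it reaches distance $\delta$ from $Q_S$ its length is at least $\dist(x,y)+2\delta$. To justify this we use the staircase geometry: $Q_S$ consists of two orthogonal straight segments meeting at $v$, and the region $C_S$ lies in the quadrant spanned by $b_1,b_2$. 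A path on $Q_S$ from $x$ to $y$ (through $v$ if needed) is a geodesic of the grid restricted to the quadrant side, which follows by the same projection argument as in Lemma~\ref{lem:histogramreduction}, carried out per baseline. Since each robot has travel slack at most $\lambda$, a single robot never reaches distance more than $\lambda/2$ from $Q_S$ \emph{on its own account}. The difficulty is that a robot may be forced deeper because other robots block it, so the rerouting must be done jointly. This is why the buffer is $(k+1)(\lambda+1)$ rather than $\lambda/2$.

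The key step is a layering (pigeonhole) argument. We partition the zone around $Q_S$ inside $C_S$ into $k+1$ consecutive bands of width $\lambda+1$, measured by distance from $Q_S$ (or we simply use the distance levels $1,\dots,f(k)$). Any robot that reaches beyond $f(k)$ has, over its whole excursion, spent time in every band. Total slack is at most $k\lambda$, and each robot in $C_S$ at depth $\delta$ spends at least $2\delta$ of its slack. From this we find a ``level'' $L_j$ at distance $j\le f(k)$ such that no robot ever turns or waits strictly beyond $L_j$ except the robots that go beyond it on purely straight back-and-forth detours. Concretely, we pick a band that contains no robot's deepest point and no robot's turnaround. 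The count of robots ($k$) against the number of bands ($k+1$) guarantees that some band is free of ``events''.

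We then modify the schedule. Every maximal portion of a route beyond that free band is replaced by a shorter portion that stays inside the band. The replacement is the shortcut parallel to $Q_S$ along the level set, which in a staircase is itself a staircase-shaped path. This shortcut is strictly shorter and is obtained by ``reflecting/pushing'' the deep part onto the level line. We must check two things. First, the level line lies in $G$; this holds because $C_S$ is reachable by straight paths from the baselines, so the staircase level set is contained in the polygon wherever the deeper part exists. Second, the new routes are non-conflicting. For this we move all robots beyond the band simultaneously and keep their relative order along the level line, padding with waits so that timing is preserved. Waiting inside a rectangle could violate part (ii) of Assumption~\ref{ass:canonical}, so after the rerouting we re-apply the canonization of Assumption~\ref{ass:canonical}(ii), which does not increase length. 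The resulting schedule has no greater length and strictly fewer deep (robot, time) pairs, which contradicts minimality.

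I expect the main obstacle to be the conflict-freeness of the simultaneous push onto a level line. Robots beyond the band may cross each other's paths in complicated ways, and projecting them onto a single line can create collisions or swaps. I would first try to resolve such conflicts by exchanging robot identities. Swapping the tails of two routes at a collision preserves total length, and the swapped routes still end at the correct terminals because all terminals lie outside $C_S$ and the swapping happens in the order of exit. If exchanging identities does not suffice, the fallback is to use a band of $\lambda+1$ levels to give each robot its own parallel lane. This is exactly where the factor $k+1$ in $f(k)$ would be needed.
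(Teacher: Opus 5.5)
Your argument rests on the claim that a maximal excursion into $C_S$ from $x\in Q_S$ to $y\in Q_S$ reaching depth $\delta$ has length at least $\dist(x,y)+2\delta$. This is false whenever $x$ and $y$ lie on different arms of $Q_S$. Put the corner $v$ at the origin with the two arms of $Q_S$ along the positive axes, and let $x=(0,h)$ and $y=(w,0)$. Then $\dist(x,y)=h+w$. The walk that goes $w$ steps right and then $h$ steps down (when present in $G$) is a shortest $x$--$y$ path, and it reaches depth $\min\{h,w\}$ at zero slack. The projection argument of Lemma~\ref{lem:histogramreduction} works for a single straight baseline, not for an L-shaped base.

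This is precisely why staircases need a separate rule with buffer $(k+1)(\lambda+1)$. If your inequality held, Assumption~\ref{ass:canonical}(i) alone would confine every robot to depth $\lambda/2$, and there would be nothing to prove. Slack is measured on the final schedule, so being blocked by other robots plays no role. Your choice of a band free of deepest points is the right kind of step; it corresponds to the paper's widening of the buffers by $\lambda$ at most $k$ times. But everything after it must handle robots that legitimately cut through $C_S$ for free, and for them your level-line shortcut is only non-longer, not strictly shorter.

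The actual content of the lemma is therefore a conflict-free rerouting of these deep robots, and you leave it open. Several robots can be beyond the band at the same time, and projecting them onto one L-shaped level line creates vertex and swap conflicts among them. Exchanging route tails is not available: robots are labeled, so a tail swap sends each robot to the other's destination, and no ordering of exits repairs that. Your reason why the level line lies in $G$ is also not valid. $S$ is covered by straight paths from its baselines, but $C_S$ in general is not; one needs an enclosure argument using that $G$ has no holes.

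The missing idea, as in the paper, runs as follows. Once every robot that leaves the buffers gets at least $\lambda$ beyond both inner buffer lines, its excursion together with the shortest path $x$--$v$--$y$ along $Q_S$ forms a closed curve in $C_S$. This curve encloses the $(k+1)\times(k+1)$ square $\Omega$ at the inner corner of the buffers, so $\Omega\subseteq G$. Each deep robot gets its own diagonal vertex of $\Omega$, so parked robots occupy pairwise distinct rows and columns. It is moved there along the first grid line beyond the inner buffer line when it first crosses that line, waits there, and is released when it last crosses the other inner line. Only deep robots ever use the region beyond the buffers, and the detour through $\Omega$ is monotone between the two crossing points, so it creates no conflicts and no extra length.

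Your ``own lane'' fallback points in this direction, but it specifies neither the lanes, nor why they lie in $G$, nor why moving onto them avoids robots already parked there. Finally, re-applying the canonization behind Assumption~\ref{ass:canonical}(ii) after your rerouting can change routes arbitrarily. You do not show it preserves the decrease in deep (robot, time) pairs, so the extremal contradiction is not secured either.
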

\begin{figure}[h]
    \centering
    \includegraphics[ page=7]{figures_reduction_rules.pdf}
    \caption{An illustration of a robot's path that reaches a point $z$ at distance at least $\lambda$ from both buffers along the portion of its route that intersects staircase sector $S$ between vertices $x$ and $y$. Since there exists a shortest path from $x$ to $y$ that follows the staircase base $Q_S$, it follows that along its route between $x$ and $z$, the robot is never more than $\frac{\lambda}{2}$ grid lines below $z$, and along its route between $y$ and $z$ it is never more than $\frac{\lambda}{2}$ grid line to the left of $z$. Consequently, the segment of the route between $x$ and $y$, together with $Q_S$, bounds a region that contains a square with side length $\frac{\lambda}{2} > k+1$ lying outside the buffers $B^H_S$ and $B^V_S$. }
    \label{fig:existence_of_Omega}
\end{figure}

\begin{figure}[h]
    \centering
    \includegraphics[ page=6]{figures_reduction_rules.pdf}
    \caption{A new route for a robot $R$ that exits the buffers and visits a vertex at distance at least $\lambda$ from both buffers; the modified portion of the route is shown in green. Observe that such a route intersects the two orthogonal inner buffer lines of $S$ at a distance at least $\frac{\lambda}{2}$ from their intersection, and therefore lies outside $\Omega$. As soon as the robot intersects an inner buffer line, we guide it to its unique vertex in $\Omega$. Consequently, no two robots ever move along these lines at the same time.}
    \label{fig:staricase_rerouting}
\end{figure}
\begin{proof}
     Suppose that $\I$ has a solution $\SSS$. If $\SSS$ is already confined to $G'$, we are done. Otherwise, we proceed as follows.  
     
     We define a \emph{horizontal buffer} $B_{S}^{H}$ and a \emph{vertical buffer} $B_{S}^{V}$ of $C_S$. We first let $B_{S}^{H}$ be the set of horizontal lines in $C_S$ whose distance from the extension of the horizontal baseline of $S$ is at most $\lambda$. Similarly, $B_{S}^{V}$ is the set of vertical lines in $C_S$ within distance $\lambda$ from the extension of the vertical baseline of $S$. 
     If there exists a robot $R$ such that, in $\SSS$, $R$ leaves both buffers, but does not reach a point that is at distance at least $\lambda$ from the inner buffer lines in both the horizontal and the vertical directions, then we increase the width of both buffers by $\lambda$. Observe that, after increasing the buffer widths, the movement of $R$ is confined to the two buffers. 
     See Figure~\ref{fig:existence_of_Omega} for an illustration.     
     
     The process can repeat at most $k$ times (once per robot). Consequently, we end up with two buffers, each of width at most $\lambda\cdot (k+1)$. Moreover, every robot that leaves these buffers reaches a point that is at distance at least $\lambda$ from both buffer boundaries. Let $\Omega$ denote the square of dimension $k+1$ whose sides coincide with the two orthogonal inner buffer lines of $S$. 

    We show that if any robot leaves the buffers, then $\Omega$ lies entirely within sector $S$; see Figure~\ref{fig:existence_of_Omega}. Consider a robot $R$ that intersects the vertical baseline of $S$ at a vertex $x$ and the horizontal baseline of $S$ at a vertex $y$. Suppose that, along its subpath $P_{xy}$ between $x$ and $y$, the robot $R$ passes through a vertex $z$ that is at distance at least $\lambda$ from both buffers. 
    
    Observe that there exists a shortest path $P'_{xy}$ from $x$ to $y$ that first follows the vertical baseline line of $S$ from $x$ to the corner vertex $v$ of $Q_S$, and then continues from $v$ to $y$ along the horizontal baseline of $S$. It follows that, along $P_{xy}$, the robot $R$ can move at most $\frac{\lambda}{2}$ steps in a direction opposite to either of the two directions followed on $P'_{xy}$. Since $z$ is at distance at least $\lambda$ from both buffers, when $R$ moves between $x$ and $z$ along $P_{xy}$, it does not cross any horizontal line within distance $\frac{\lambda}{2}$ of $B^H_S$, and when it moves between $y$ and $z$, it does not cross any vertical line within distance $\frac{\lambda}{2}$ from $B^V_S$. 
    
    Consequently, the paths $P_{xy}$ and $P'_{xy}$ together form a closed curve that lies entirely within $C_S$ and fully contains the square $\Omega$, since $\lambda > k+1$. As $G$ contains no holes, it follows that all vertices of $\Omega$ belong to $G$.
     
     We now show how to use the two buffers and $\Omega$ to modify the solution $\SSS$ to $\I$ so that it is confined to $G'$; see Figure~\ref{fig:staricase_rerouting}. 
     
    We map the robots in $\R$ that exit the buffers bijectively to diagonal points of $\Omega$. For each robot $R$ that crosses an inner buffer line, we freeze the schedule at the moment it reaches the inner line of a buffer, and route $R$, using the first grid line outside the buffer that it crosses, to its uniquely assigned point in $\Omega$. The robot remains at that assigned point until the last time it intersects the other inner line of a buffer in the original schedule. At that moment, we freeze the schedule again and route $R$ to the point on the inner line where it is located at that moment in the original schedule. The rerouting of $R$ is illustrated by the green path in Figure~\ref{fig:staricase_rerouting}. 
\end{proof}

The following reduction rule follows immediately from the previous lemma.

\begin{figure}
    \centering
    \includegraphics[width=\linewidth,page=2]{figures_reduction_rules.pdf}
    \caption{Illustration of Reduction Rule~\ref{red:stair}. The red region highlights a staircase sector $S$. The blue and red lines, along the left and bottom, respectively, are extensions of $S$'s baselines and together form the staircase base $Q_S$. The vertices outside the red region belong to $C_S\setminus S$. Gray vertices in the left figure indicate the vertices removed by the application of the reduction rule. }
    \label{fig:red_staircase}
\end{figure}
\begin{reduction}
\label{red:stair}
 Let $\I$ be an instance of \dcmp{}, and let $\Gamma$ be the sector graph of $\I$. Let $S\in \Gamma$ be a staircase sector with base $Q_S$. Let $\I'$ be the instance obtained from $\I$ by removing every vertex in the connected component $C_S$ of $G-Q_S$ that contains the vertices of $S$
 whose distance from $Q_S$ exceeds $(k+1)(\lambda+1)$; see Figure~\ref{fig:red_staircase}. Then $\I'$ is equivalent to $\I$.
\end{reduction}

Let $S$ be a rectangle sector with dimensions $a\times b$, and let $\SSS$ be a solution. We say that a row (or column) $i$ is \emph{safe} for $\SSS$ if no turn or waiting step of any robot $R$ occurs on the $i$-th row (or column) of $S$. A \emph{$\gamma$-safe row (or column) sequence} is a set of $\gamma$ consecutive safe rows (or columns).

We say that a schedule $\SSS$ is \emph{semi-canonical} if it satisfies the following condition. For every rectangle sector $S$ of dimensions $a\times b$, where $\min\{a,b\}\geq (k+1)(\mu^2+\mu)$, there exists a rectangle $S' \subseteq S$ obtained by removing at most $q(k)$ rows and columns from the top, bottom, left, and right, where $q(k)=(k+1)\mu^2+\lambda$ is a function independent of $S$, such that $S'$ satisfies the following properties:

\begin{enumerate}
\item For each robot $R$ that enters $S'$ for the first time in $\SSS$ from a direction $d\in\directions$, $R$ exits $S'$ for the last time in $\SSS$ in the same direction $d$.
\item The bottom-most $(k+1)\mu-\lambda$ rows and the left-most $(k+1)\mu-\lambda$ columns of $S'$, each form a safe sequence.
\end{enumerate}

We call a subgraph $S'$ satisfying the properties above a \emph{first frame} of $S$.

\begin{lemma}\label{lem:semi-canonical_solution}
Every \YES-instance $\I=(G, \R, k, \ell)$ of \dcmp admits a semi-canonical solution.
\end{lemma}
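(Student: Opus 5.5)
We start from a solution $\SSS$ of $\I$ that satisfies Assumption~\ref{ass:canonical}, and we treat each large rectangle sector $S$ (with $\min\{a,b\}\ge (k+1)(\mu^2+\mu)$) separately. Sectors are pairwise disjoint, and every modification below is confined to the interior of one sector and keeps the route lengths unchanged. So we can fix the sectors one at a time, as long as each step preserves properties (i) and (ii) of Assumption~\ref{ass:canonical}.

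\textbf{Locating safe sequences.} Inside $S$, every robot makes at most $\mu$ turns and waits at at most $\mu$ vertices, so at most $k\mu$ rows and at most $k\mu$ columns of $S$ are unsafe. We cut the $q(k)$ bottom rows of $S$ into consecutive blocks of height $(k+1)\mu$. There are more than $k\mu$ such blocks, since $q(k)\ge (k+1)\mu^2$. By pigeonhole, one block contains no unsafe row and is therefore a $(k+1)\mu$-safe row sequence. We do the same for the top rows, the left columns, and the right columns. The four sequences meet only in the corners, so they cut out a candidate first frame $S'$. We place the boundary of $S'$ in the middle of each safe sequence, leaving a margin of at least $\lambda$ of safe lines on each side. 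This makes the bottom-most and left-most $(k+1)\mu-\lambda$ rows and columns of $S'$ safe, which is property~2.

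\textbf{Property 1.} Consider a robot $R$ whose first entry into $S'$ comes from direction $d$, say from below, and whose last exit is in another direction $d'$. Its route between the first entry and the last exit starts by crossing a safe row band and ends by crossing a safe column band, for example the left one. Because no turns happen in safe lines, $R$ crosses each band along a straight segment. Its portion inside the frame is therefore a monotone-ish walk that runs from a point on the bottom side to a point on the left side.

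We then reroute $R$ around the corner. From the point where it enters the bottom safe band, it follows a straight segment to the bottom-left corner region of the band structure. It then continues straight along the safe band and exits to the left at the same place as before. This path travels inside the margin outside $S'$, so it never enters $S'$ at all, or enters and exits only through the side consistent with $d$.

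A shortest-path argument shows this reroute is no longer than the original. Since $S$ is a rectangle containing a baseline, the corner path through the band is a shortest path between the two crossing points. Any slack the original portion used then becomes waiting time, which we place at a vertex outside $S'$.

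\textbf{Main obstacle.} The hardest step is making these reroutings collision-free, and doing so without creating new turns or waiting vertices inside the safe sequences that must stay safe. We handle this the way the staircase reduction (Lemma~\ref{lem:staircasered}) does. Each rerouted robot gets its own private line within the $(k+1)\mu$-wide band: its own row in the bottom band and its own column in the left band. Every robot has at most $\mu$ crossings, so the private lines can be chosen disjoint from one another and from the straight crossings of the other robots.

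We then freeze the rest of the schedule while the robot moves along its private lines. The new turns land only in the outer $\lambda$-margin of the band, which lies outside $S'$, so $S'$ keeps its $(k+1)\mu-\lambda$ safe lines. Finally, we must check that iterating this over all $k$ robots and all sectors does not break Assumption~\ref{ass:canonical}(ii) elsewhere. We argue this by a potential that strictly decreases: the number of robots that violate Property~1.
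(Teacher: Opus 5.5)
\textbf{Collision handling.} Your plan follows the paper's strategy: pigeonhole to find safe bands, a frame $S'$ cut with a $\lambda$-margin, and rerouting corner-turning robots around $S'$ on private lines in that margin while the rest of the schedule is paused. However, the step you single out as the main obstacle rests on a false claim. The private lines cannot be chosen disjoint from the other robots' straight crossings. Every robot that crosses the bottom band vertically at a column between $c_R$ and $c$ meets your private row $r_R$ on exactly the segment $R$ must traverse. Likewise, every robot that crosses the left band horizontally meets your private column $c_R$. Freezing the schedule does not help, because at the moment of the freeze such a robot may be standing on that segment.

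The missing idea is the paper's. Place each private line at distance between $k$ and $2k$ from the outer edge of its band, so that at least $k$ safe parallel lines lie on either side of it. A robot found on the private line is then moving perpendicular to it. That robot, together with the at most $k$ robots moving with it on the same line, can be pushed one step forward in its direction of motion to clear the way, without creating turns or waits in the protected lines. $R$ then waits at the intersection of its two private lines, which lies on a safe row and a safe column.

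\textbf{Same-side exits and counting.} You handle only orthogonal exits. A robot whose first entry into $S'$ is from below and whose last exit goes back down through the bottom side does not satisfy Property~1, and your corner reroute does not cover it. The paper rules this case out with the slack bound. Such a robot crosses at least $\lambda$ turn-free margin rows going in and again coming out. Shortcutting that portion along the outermost margin row therefore saves at least $2\lambda$, contradicting slack at most $\lambda$.

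Two counts also need fixing.
\begin{itemize}
\item The $q(k)=(k+1)\mu^2+\lambda$ bottom rows contain only $\mu$ blocks of height $(k+1)\mu$, not more than $k\mu$. Against your bound of $k\mu$ unsafe rows, the pigeonhole does not go through as written.
\item Placing the boundary of $S'$ in the middle of a $(k+1)\mu$-wide band leaves only about half the band inside $S'$. Property~2 requires $(k+1)\mu-\lambda$ safe lines inside; the paper gets this by cutting off exactly $\lambda$ lines of the band.
\end{itemize}
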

\begin{figure}
    \centering
    \includegraphics[page=5]{figures_reduction_rules.pdf}
    \caption{Illustration of the proof of Lemma~\ref{lem:semi-canonical_solution}. Here, $\sigma_C$ and $\sigma_R$, denote the left-most and the bottom-most $((k+1)\mu)$-safe sequences, respectively, while $\sigma'_C$ and $\sigma'_R$, denote the right-most and the top-most $\lambda$-safe sequences. The first frame corresponds to the red rectangle in the middle. Note that it also excludes the first $\lambda$ rows of $\sigma_R$ and first $\lambda$ columns of $\sigma_C$.}
    \label{fig:first_frame}
\end{figure}

\begin{figure}
    \centering
    \includegraphics[page=8]{figures_reduction_rules.pdf}
    \caption{Illustration of a new route for a robot entering the first frame from the left and exiting at the top. The original route of $R$ is shown in purple, while the modified portion of the route---following column $c_R$ and then row $r_R$---is shown in green. Note that $c_R$ and $r_R$ were  free, as a column and a row in $S$ before this modification, and after all changes they are used exclusively by robot $R$.}
    \label{fig:first_frame_route_modification}
\end{figure}
\begin{proof}
Let $S$ be a rectangle sector with dimensions $a\times b$, where $\min\{a,b\}\geq (k+1)(\mu^2+\mu)$, and let $\SSS$ be a schedule satisfying Assumption~\ref{ass:canonical}. We now describe how to obtain the first frame of $S$ (see Figure~\ref{fig:first_frame}). 

Partition the columns of $S$ into column sequences, each with $(k+1)\cdot \mu$ many consecutive columns. Since $S$ has at least $(k+1)(\mu^2+\mu)$ many columns, the pigeon-hole principle guarantees the existence of a column-sequence $\sigma_C$ of $(k+1)\cdot \mu$ many consecutive free columns. By an analogous argument, there exists a row-sequence $\sigma_R$ with $(k+1)\cdot \mu$ many consecutive free rows that contains no turns or waiting points, and such that at most $(k+1)\mu^2$ rows in $S$ lie below $\sigma_R$. 

Similarly, we can find sequences $\sigma_C'$ and $\sigma_R'$ consisting of $\lambda$ many consecutive columns and rows, respectively, which contain no turns or waiting points and lie at most $(\lambda+1)\mu$ away from the right and top boundaries of $S$. We define $S'$ to be the subgraph of $S$ obtained by excluding:
\begin{itemize}
    \item all columns to the left of $\sigma_C$, 
    \item the $\lambda$ left-most columns of $\sigma_C$, 
    \item the columns of $\sigma_C'$ and all columns to the right of it, 
    \item all rows below $\sigma_R$,
    \item the $\lambda$ bottom-most of $\sigma_R$, and 
    \item the rows of $\sigma_R'$ and all rows above it. 
\end{itemize}

Now, consider a robot $R$ entering $S'$ from a direction $d\in \directions$. Since $S'$ is surrounded by at least $\lambda$ rows and columns on each side, in which $R$ does not turn or stop, and since $R$ has travel slack at most $\lambda$, $R$ cannot exit $S'$ in the direction opposite to $d$ (i.e., by crossing the same side of $S')$. If $R$ exits $S'$ in $\SSS$ in direction $d$, it already satisfies the conditions of a semi-canonical solution with respect to the first frame $S'$. 

Suppose instead that $R$ exits $S'$ in the direction $d'$ orthogonal to $d$ (see Figure~\ref{fig:first_frame_route_modification}). Without loss of generality, assume $R$ enters from the left and exits from the top; all other cases are analogous. Let $R$ enter in row $r$ and exit in column $c$. Since $c$ lies within $S'$, it is between $\sigma_C$ and $\sigma_C'$, as no robot uses rows in $\sigma_C$ and in $\sigma_C'$. Similarly, $r$ is between $\sigma_R$ and $\sigma_R'$. 

Moreover, $R$ crossed $\sigma_C$ entirely along row $r$ without waiting or turning, and crosses $\sigma_R'$ along column $c$, again without turning or waiting. We now modify $R$'s route as follows. 

We assign $R$ a column $c_R$ of $\sigma_C$, at a distance between $k$ and $2k$ from the left-most column of $\sigma_C$, and a row $r_R$ of $\sigma_R'$, at a distance between $k$ and $2k$ from the top-most row of $\sigma_R'$. Note that $c_R$ and $r_R$ are unique to $R$. When $R$ reaches $c_R$ along row $r$ in $\SSS$, we pause the schedule and move $R$ along $c_R$ to the vertex $v_R$ at the intersection of $c_R$ and $r_R$. This intersection is free, as it lies on both a free row and a free column. 

if a robot $R'$ occupies $c_R$ while moving $R$ to $v_R$ along $c_R$, $R'$ at this point must be moving perpendicular to $c_R$. The $k$ columns on either side of $c_R$ must be free, so we can shift $R'$ (and any of the at most $k$ robots moving with it on the same row) by one move in the direction of its motion to clear $c_R$ for $R$. 

After $R$ reaches $v_R$, it waits there until the timestep in $\SSS$ when it would reach row $r_R$ after leaving $S'$ via column $c$. At this timestep, we again pause $\SSS$ and move $R$ along $r_R$ to its position in column $c$, shifting any obstructing robots on its route as needed.
\end{proof}

Let the \emph{second frame} $S^*$ of $S$ be the subgraph of $S$ obtained by removing the top, bottom, left, and right $ (k+1)(\mu^2+\mu)$ rows and columns. 
We say that a schedule $\SSS$ is \emph{canonical} if 
 for each $a\times b$ rectangle sector $S$, where $\min\{a,b\}\geq (k+1)(\mu^2+\mu)$, 
each robot $R$ enters the second frame $S^*$ of $S$ only once, and never turns or waits in $S^*$.
 
\begin{lemma}\label{lem:canonical_solution}
Every \YES-instance $\I=(G, \R, k, \lambda)$ of \dcmp admits a canonical solution.
\end{lemma}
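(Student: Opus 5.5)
We start from a semi-canonical solution $\SSS$ given by Lemma~\ref{lem:semi-canonical_solution}, which we may assume also satisfies Assumption~\ref{ass:canonical}. We then modify it, one large rectangle sector $S$ at a time, so that inside the second frame $S^*$ every robot crosses along a single straight row or column, without turning or waiting. All changes stay inside $S$ and keep every route's length unchanged or shorter, so the bounds of Assumption~\ref{ass:canonical} survive and sectors can be handled independently. Since the first frame $S'$ drops at most $q(k)=(k+1)\mu^2+\lambda<(k+1)(\mu^2+\mu)$ rows and columns on each side, $S^*\subseteq S'$. Hence the bottom $(k+1)\mu-\lambda$ safe rows and left safe columns of $S'$, together with analogous safe strips near the top and right borders of $S'$ (obtained by the same pigeonhole argument), lie strictly between the border of $S'$ and $S^*$. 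These strips form buffer zones in which no robot turns or waits.

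\textbf{Step 1: a robot meets $S^*$ in only one maximal interval.} Consider a robot $R$ that enters $S'$ from direction $d$. By property~1 of the first frame, $R$ leaves $S'$ for the last time in direction $d$, so its net displacement across $S'$ is monotone in that direction. Because $R$ has slack at most $\lambda$, it can backtrack at most $\lambda/2$ steps in any direction. This excludes leaving $S^*$ and re-entering it unless the excursion stays within distance $\lambda$ of the boundary of $S^*$. We therefore redefine the portion of $R$'s route between its first entry into $S^*$ and its last exit from $S^*$. The plan is to replace this portion with a route that, while inside $S^*$, runs along one line only.

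\textbf{Step 2: straighten the crossings.} Say $R$ enters $S^*$ on row $r$ from the left and leaves on row $r'$ to the right. Any shortest monotone route must make the vertical shift $|r-r'|$ somewhere. We move this shift out of $S^*$ and into one of the safe column strips next to $S^*$, as in the proof of Lemma~\ref{lem:semi-canonical_solution}. Concretely, $R$ is assigned a private column $c_R$ in the right safe column strip, and within the strip it has $k$ free columns on each side. When $R$ reaches $c_R$ it pauses the rest of the schedule, moves along $c_R$ to row $r'$, and continues. The crossing of $S^*$ itself then happens on row $r$, chosen so that it is not used in the reverse direction at the same time. Robots that block $c_R$ are pushed forward by one step along their own direction of motion, exactly as before. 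Every waiting step of $R$ inside $S^*$ is moved to the vertex where $R$ enters the strip. The result is a crossing of $S^*$ with no turns and no waits, and the route length does not increase because the segment is still monotone.

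\textbf{Step 3: make crossings pairwise conflict-free.} Two robots may now travel the same row in opposite directions at the same time, or cross each other's lines at the same vertex and time. Collisions between a horizontal and a vertical crossing can be removed by a time shift: insert a pause (waiting) step in the buffer before one of the two robots enters $S^*$. Since there are only $k$ robots and each enters $S^*$ once, only finitely many such insertions are needed, and pauses do not change the traveled length. Head-on conflicts on the same line are avoided by giving each robot its own line. The lines are chosen by the private assignments within the $(k+1)\mu$-wide buffers, and this uses only $k$ lines, so each robot's assigned line is distinct.

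I expect the main obstacle to be the Step~1 argument that excursions out of $S^*$ and back in can be removed without leaving $S$ or creating collisions with robots outside $S$. I would attack it by noting that all such detours take place in the safe buffer between $\partial S'$ and $S^*$, which no other robot uses for turning or waiting. Any interaction there is therefore a perpendicular pass-through, which the one-step push technique resolves.
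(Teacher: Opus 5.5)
Your proposal has a genuine gap. It lies in Step~3, not in Step~1. After Step~2, robot $R$ crosses $S^*$ along its entry row $r$ in real time, while every other robot keeps its timing from $\SSS$. So $R$ now occupies vertices of row $r$ at moments when, in $\SSS$, it was elsewhere. It can therefore collide with robots that turn, wait, or travel on row $r$ inside $S^*$ at those moments. In particular, it can collide head-on with a robot moving along $r$ in the opposite direction, which $R$ originally avoided by leaving $r$.

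Neither repair you propose works.
\begin{itemize}
\item Letting one robot wait an extra step before entering $S^*$ delays its whole remaining route by one step relative to all other robots. This can create new conflicts anywhere later in the schedule, including outside $S$. The remark that ``only finitely many insertions are needed'' does not show that the process ends in a valid schedule. The pause you use along $c_R$ is harmless precisely because $R$ ends the frozen interval exactly where it is in $\SSS$; a one-sided delay has no such property.
\item ``Giving each robot its own line'' contradicts Step~2, where the crossing line is forced to be the entry row $r$. Crossing on a private row $r_R$ costs $|r-r_R|+|r_R-r'|-|r-r'|$ extra steps, which is positive unless $r_R$ lies between $r$ and $r'$. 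For example, a robot that crosses $S$ on a single row in $\SSS$ strictly lengthens its route on any other row.
\end{itemize}
A smaller issue: the first frame only guarantees safe strips along the bottom and left of $S'$. The margin between the top/right border of $S'$ and $S^*$ may be as small as $(k+1)\mu-\lambda$ lines and may contain turn or wait lines of all $k$ robots. So the same pigeonhole argument does not give you a safe strip wide enough to host $c_R$ with $k$ free columns on both sides.

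The paper avoids per-robot straightening altogether. It collects the columns $c_1,\dots,c_q$ of $S'$ that contain a turn or wait of some horizontally traversing robot. It maps them order-preservingly into the free leftmost $k\mu$ columns of $S'$, with consecutive images at distance $\min\{k,|c_{i+1}-c_i|\}$. Each such robot then replays its route on the images: whenever it is on $c_i$ in $\SSS$, it is on the image of $c_i$. Between image columns it moves as close as possible to the next one and waits there; a gap of width $k$ holds all robots that can be between them. Vertical movement is unchanged and every backward horizontal segment can only shrink, so no route gets longer. After its last image column, a robot waits next to it and then crosses the rest of $S'$ in one piece during a freeze of the schedule. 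Hence at any time at most one robot is inside $S^*$, which rules out head-on and crossing conflicts in $S^*$ without any timing adjustment. Vertical robots are handled symmetrically in the bottom free rows.

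The two ingredients you are missing are therefore:
\begin{itemize}
\item executing the crossing of $S^*$ during a freeze, so that no robot ever rests or moves in real time inside $S^*$; and
\item a conflict-free place outside $S^*$ to park and turn robots, which the order-preserving compression into the left and bottom safe strips supplies.
\end{itemize}
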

\begin{figure}
    \centering
    \includegraphics[page=9,width=\linewidth]{figures_reduction_rules.pdf}
    \caption{Illustration of proof of Lemma \ref{lem:canonical_solution}. Three robots traverse the first frame $S'$ of a rectangle sector $S$ between its left and right boundaries. Their turn and wait points lie on columns in $\CCC=\{c_1, c_2,\ldots, c_6\}$, which are outside the $k\cdot\mu$ leftmost columns that are free. We map the columns in $\CCC$ to the $k\mu$ leftmost columns, preserving pairwise distances up to $k$. Consequently, at most $k$ robots ($k-1$ in the middle and one at the end vertex) can occupy the same horizontal segment between two consecutive ``important'' columns. Note that columns $c_2$ and $c_3$ are at distance one before and after the shift, whereas all other consecutive column pairs are at distance $k=3$, as their original distance exceeds $3$.      }
    \label{fig:shifting_turns}
\end{figure}
\begin{proof}
Let $\SSS$ be the semi-canonical solution guaranteed by Lemma~\ref{lem:semi-canonical_solution}. 
Let $S$ be an $a\times b$ rectangle sector, and let $S'$ denote the first frame of $S$. Observe that the left-most $(k+1)\cdot \mu - \lambda \ge k\cdot \mu$ columns and bottom-most $(k+1)\cdot \mu - \lambda \ge k\cdot \mu$ rows of $S'$ are free. 

We define the second frame $S^*$ as the subgraph of $S'$ obtained by removing the left-most $k\cdot \mu$ columns and bottom-most $k\cdot \mu$ rows of $S'$. The main idea of the proof is to shift all waiting and turning vertices of robots that enter or leave $S'$ in a direction $d\in \{\leftarrow, \rightarrow\}$ into the left-most $k\cdot\mu$ columns of $S'$, and all the waiting and turning vertices of robots that enter/leave in a direction in $\{\uparrow, \downarrow\}$ to the bottom-most $k\cdot\mu$ rows of $S'$ (see Figure~\ref{fig:shifting_turns}).  This guarantees that, in $S^*$, each robot moves continuously along a straight path without making any turns.

Let $\mathcal{C} = \{c_1, c_2, \ldots, c_q\}$ be the set of all columns containing a turning or a waiting vertex of any robot that enters or leaves $S'$ in a direction in $\{\leftarrow, \rightarrow\}$. Assume that $c_i$ lies to the left of $c_{i+1}$, for all $i\in [q-1]$. Note that $q \le \mu$. We map the columns in $\mathcal{C}$ bijectively to the first $k\cdot \mu$ columns of $S'$ as follows. We map $c_1$ to the $k$-th leftmost column of $S'$. Inductively, given the image of $c_i$, we map $c_{i+1}$ to the column that is to the right of the image of $c_i$, at a distance equal to the minimum of $k$ and the distance between $c_i$ and $c_{i+1}$.

Now consider a robot $R$ that enters or leaves $S'$ in a direction $d\in \{\leftarrow, \rightarrow\}$. We modify the route of $R$, so that, whenever $R$ occupies a column $c_i$ in $\SSS$, it occupies the image of $c_i$ in the modified schedule $\SSS'$. We consider three cases: when $R$ enters $S'$, when $R$ leaves $S'$, and when $R$ moves between two columns in $\mathcal{C}$.  

When $R$ enters $S'$, we freeze the execution of $\SSS$ and move $R$ along its entry row as close as possible to the image of $c_1$ (or $c_q$) without occupying that column. There may exist another robot $R'$ moving along the same row as $R$ in the original schedule. However, since no robot turns or waits to the left of $c_1$ or to the right of $c_q$ in the original schedule, such a robot $R'$ must be moving in the same direction as $R$ towards $c_1$ (or $c_q$). Hence, all such robots can be placed in the at most $k-1$ vertices to the left of $c_1$ (right of $c_q$). If all $k$ robots move along the same row towards the image of $c_1$, we allow the first one to enter column $c_1$, as it does not block any other robot.

The robots then wait until the time step in which they enter $c_1$ (or $c_q$) in $\SSS$, at which point $R$ enters the image of $c_1$ (or $c_q$) in $\SSS'$. Whenever $R$ moves along some row between two columns $c_i$ and $c_{i+1}$ in $\SSS$ (for example, from $c_i$ to $c_{i+1}$, the reverse direction is symmetric), we again move $R$ as close as possible to the image of $c_{i+1}$ and let it wait there until it needs to enter $c_{i+1}$ in the original schedule, at which point we let $R$ enter the image of $c_{i+1}$. The motion of a robot along the image of a column $c \in \mathcal{C}$ is identical to that along $c$. 

Finally, when $R$ leaves $S'$, i.e., it moves left from $c_1$ (or right from $c_q$), we let it wait in the first $k$ columns to the left of $c_1$ (right of $c_q$) until the timestep at which it leaves the boundary of $S'$ in $\SSS$. At that point, we freeze the schedule, move $R$ to the boundary of $S'$ and let it continue according to $\SSS$. 

We now argue that the route of $R$ in $\SSS'$ does not incur additional cost compared to $\SSS$. Since the vertical movement/cost of $R$ is unchanged, it suffices to compare the horizontal movement traveled. Observe that $R$ must traverse $S'$ completely from one boundary to the other.  

Assume without loss of generality that $R$ is moving from the left boundary to the right boundary of $S'$. In $\SSS$, the total horizontal distance it travels is $\Delta+ 2\cdot (\sum_{j}\delta_j)$, where $\Delta$ is distance between the left and the right boundaries of $S'$ and $\delta_j$ is the length of the $j$-th segment in which $R$ moves from right to the left. Let $\delta_j'$ be the length of the corresponding $j$-th segment on which $R$ moves from the right to the left in the new schedule. By construction of the mapping, we have $\delta'_j\le \delta_j$.
Therefore, $R$ does not incur additional cost in $\SSS'$ compared to $\SSS$. 

We apply an analogous shifting procedure to all robots that enter or leave $S'$ in a direction $d \in {\uparrow,\downarrow}$, moving them into the bottommost $k\mu$ rows of $S'$. The same argument shows that these robots also do not increase their traveled distance. Moreover, at all times, at most one robot is in $S^*$. It remains to show that the robots moving in the left-most $k\cdot \mu$ columns of $S'$ do not collide with those moving in the bottom-most $k\cdot \mu$ rows of $S'$. Recall that the leftmost $k\cdot \mu$ columns are free; therefore, no robot that enters/leaves in direction in $\{\uparrow, \downarrow\}$ ever traverses these columns. Since the rerouting of these robots modifies only the rows they traverse and not the columns, no collision can occur. Analogously, the bottom-most $k\cdot\mu$ rows are free, and thus no robot that entering or leaving in a direction in $\{\leftarrow,\rightarrow\}$ ever occupies these rows. 
\end{proof}
\begin{figure}
    \centering
    \includegraphics[page=3]{figures_reduction_rules.pdf}
    \caption{Illustration of Reduction Rule~\ref{red:rect}. In red is the part of the rectangle sector $S$ that is not inside the second frame $S^*$. The horizontal and vertical lines connecting the opposite sides of the second frame $S^*$ do not intersect each other. }
    \label{fig:red_rect}
\end{figure}
The following reduction rule follows from the above lemma; see Figure~\ref{fig:red_rect}.

\begin{reduction}
\label{red:rect}
Let $\I$ be an instance of \dcmp{}
and let $\Gamma$ be the sector graph of $\I$. 
 Let $\I'$ be the instance obtained by applying the following for every sector $S\in \Gamma$, where $S$ is an $a\times b$ rectangle sector with $\min\{a,b\}\geq (k+1)(\mu^2+\mu)$. 
Let $S^*$ be the second frame of $S$. First, remove all vertices of $S^*$, except the boundary vertices. Second, for every two boundary vertices of $S^*$ that are on opposite sides of $S^*$ and are on the same horizontal (resp.~vertical) line, join them by a path of degree-2 new vertices of the same length as their distance in $G$; see Figure~\ref{fig:red_rect}. Then $\I'$ is equivalent to $\I$.
\end{reduction}

We note that the above reduction rule does not produce a subgraph of the original graph; instead, it constructs a new graph that is neither planar nor smaller than the original.

\section{Bounding the Treewidth of the Graph}
\label{sec:treewidth}
We call a graph $G$ \emph{reduced} if none of Reduction Rules~\ref{red:hist}-\ref{red:rect} applies to $G$. 

\begin{figure}
    \centering
    \includegraphics[width=\linewidth,page=4]{figures_reduction_rules.pdf}
    \caption{Illustration of initial special vertices for the three types of sector. For each sector $S$, initial special vertices split the sector into connected components such that each connected component has bounded treewidth and is connected either only to components in the horizontal neighbors of $S$ or only to components in the vertical neighbors of $S$. }
    \label{fig:initial_special_vertices}
\end{figure}
\begin{theorem}
\label{thm:treewidth}
Let $\I=(G, \R, k, \lambda)$ be an instance of \gcmp{} such that $G$ is reduced. Then $tw(G)$ is at most $7^{28k+5}\cdot q^3(k)$, where $q(k)= (k+1)(\mu^2+\mu)=k^{\Oh(k^2)}$.
\end{theorem}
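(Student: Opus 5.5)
We follow the two-step strategy of Subsection~\ref{subsec:tw_bound}. Throughout, $\Gamma$ denotes the sector graph of the reduced instance; its sectors are still defined via bend vectors. In a reduced graph every sector has bounded ``thickness''. A histogram has height at most $\lambda/2$ above its baseline by Reduction Rule~\ref{red:hist}. A staircase lies within distance $(k+1)(\lambda+1)$ of its base $Q_S$ by Reduction Rule~\ref{red:stair}. A rectangle either has a side shorter than $q(k)$, or it consists of its four border strips of width $q(k)$ together with the degree-$2$ weaving paths of Reduction Rule~\ref{red:rect}. Hence each sector has treewidth $\Oh(q(k))$.

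\textbf{Step 1: reducing to bounded degree.} By Reduction Rule~\ref{red:cleansectors} and Definition~\ref{def:clean}, every clean sector $S$ hangs off the main component only through $S$. The components of $\Gamma-S$ other than $\mathcal{C}_\text{main}$ are trees, and after the reduction they are confined to distance $\lambda/2$ from $P_S$. We therefore treat these pendant parts separately: each of them, together with $P_S$, is a strip of width $\le\lambda/2$, hence has treewidth $\Oh(\lambda)$. Its decomposition can be attached to a bag containing $P_S$, or rather containing the boundedly many special vertices that cut $P_S$ appropriately. After removing them we may assume, by Lemma~\ref{lem:fewnonclean}, that $\Gamma$ has maximum degree $8$.

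\textbf{Step 2: the power graph.} Let $\Gamma^+$ be the graph on $V(\Gamma)$ with edges between sectors at distance at most $12k\ge 3\kappa$. A tree decomposition of $\Gamma$ of width $w\le 7^{4k}$ (Theorem~\ref{thm:sectortw}, $\kappa\le 4k$) yields one of $\Gamma^+$. We enlarge each bag by the balls of radius $12k$; since the degree is at most $8$, the width becomes $w\cdot 8^{12k+1}\le 7^{28k+5}$ roughly. This matches the exponent in the statement, which is why I expect this is the intended counting.

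\textbf{Step 3: special vertices and the template.} In each sector we choose $\Oh(q(k)^2)$ special vertices, as in Figure~\ref{fig:initial_special_vertices}: the corners, the endpoints of baselines, and the intersection points of the border strips of rectangles, so that every component of the sector minus the special vertices touches only horizontal neighbours or only vertical neighbours. In $\beta^+$ we replace each sector by its special vertices, giving bags of size $7^{28k+5}\cdot\Oh(q^2)$. The key claim is the following. Consider a component $C$ of $G$ minus all special vertices. It runs through a chain of sectors crossed by one straight line, so by Lemma~\ref{lem:pathsectors} it meets at most $3\kappa\le 12k$ sectors. These sectors are therefore pairwise adjacent in $\Gamma^+$ and lie in a common bag. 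Moreover $C$ is a union of thin strips of width $\le q$, concatenated along a common axis. It therefore has treewidth $\Oh(q)$ (a path-like decomposition along the axis with bags of size $\Oh(q\cdot 12k)$), and $N(C)$ consists of special vertices contained in that bag. We attach a decomposition of $C\cup N(C)$ whose every bag additionally contains $N(C)$, which has size at most $\Oh(q^2)\cdot 12k$. This gives a final width of $7^{28k+5}q^3$.

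\textbf{Main obstacle.} The hardest step is proving that every component $C$ is confined to sectors met by a single straight path. Along the way, new special vertices may have to be added whenever a strip of one sector meets two different neighbours; we must show that only $\Oh(q^2)$ such additions per sector are needed. I would attempt this by using the baseline characterisation (Lemma~\ref{lem:baseline}, Lemma~\ref{lem:sectortree}): across a horizontal boundary, a sector's vertical strips continue straight into the neighbour. I would then add, as special vertices, the at most $q$ boundary crossing points at each of the $\le 8$ corner or total neighbours.
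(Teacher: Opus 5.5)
Your outline is the paper's: a power graph of the non-clean sector graph, sectors replaced by special vertices in the bags, and bounded-thickness leftover components hung below. However, the step you flag as the main obstacle is a genuine gap, and the fix you sketch does not close it.

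\textbf{One-step cuts are not enough.} Suppose a sector's special vertices come only from its own baseline corners plus $q$ crossing points per non-clean neighbour. Then cuts propagate only one sector deep. For instance, let $B$ have lower neighbours $A,A'$ and an upper neighbour $C$ lying over both. Nothing forces $C$ to be cut along the corner lines of $A$ or $A'$. So the piece of $C$ above $B$ glues together the pieces of $B$ above $A$ and above $A'$. Chaining such up-and-down moves, a leftover component is no longer confined to the sectors met by one straight line. Your argument then gives no reason why $N(C)$ should lie in a single bag.

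\textbf{The missing idea.} Project every sector's initial special vertices along entire straight lines. The paper also declares special every vertex of $S^+$ that lies within distance $q(k)$ of a baseline of $S$ and lies on a straight path which avoids the initial special vertices of $S^+$ and hits an initial special vertex of another extended sector. Then no vertex of a leftover component $H$ shares a line, in its direction, with any initial special vertex. Since the extreme vertices of each sector are initial special, every such line through $H$ meets the same at most $12k$ sectors. So the component of $H$ is a subdivided subgrid of height at most $24k(q(k)+1)+\lambda$, and its neighbourhood lies in one bag.

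\textbf{The count changes, so your reading of the exponent is off.} The sectors projecting into $S$ are those on straight lines through $S$. There are at most $8\cdot 7^{12k}$ of them, since this is a ball of radius $12k$ in the degree-$8$ graph of non-clean sectors. Each of their initial special vertices projects onto at most $2q(k)+\lambda$ vertices of $S^+$. So an extended sector carries $(8\cdot 7^{12k}+1)(2q(k)+\lambda)(4q^2(k)+4q(k)\lambda)$ special vertices, not $O(q^2)$. Multiplying by the $8\cdot 7^{16k}$ sectors per bag of the power-graph decomposition gives $7^{28k+5}q^3(k)$. In other words, the exponent splits as $16k+12k$; it does not come from the power-graph width alone.

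\textbf{Step~1 also fails as written.} You cannot attach a clean sector and its pendant trees as a separate piece to a bounded bag. It meets the rest of $G$ along its whole baseline, whose length is unbounded. Cutting $P_S$ at special vertices still leaves each piece glued to a long stretch of the neighbouring sector. The paper avoids this by absorbing clean sectors into the extended sector $S^+$ of the non-clean sector they hang from. All extension vertices are within $\lambda/2$ of $S$ by Reduction Rule~\ref{red:cleansectors}, so they simply become part of the bounded-thickness leftover components.
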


\begin{proof}
Let $\Gamma$ be the sector graph of $\I$, and
let $S \in \Gamma$ be a non-clean sector.
We call a vertex $v$ an \emph{extension vertex} with respect to $S$ if there exists a $u$-$v$ path $P$ from $v$ to a vertex $u \in S$ such that all vertices in $P-u$ belong to clean sectors. We define the \emph{extended sector} $S^+$ of $S$ as the union of $S$ with the set of vertices that are extension vertices with respect to $S$. 

For a vertex $v \in S$, we say that vertex $u \in S^+\setminus S$ is an \emph{extension vertex} of $v$ if $u$ and $v$ lie on the same horizontal or vertical line of $G$. We define the \emph{extension} of set $X \subseteq S$, denoted $X^+$, to be the set of vertices in $S^+$ that are extensions of vertices in $X$. Note that, by Reduction Rule~\ref{red:cleansectors}, every extension vertex of $S$ is at a distance at most $\lambda/2$ from $S$.

Let $\Gamma'$ be the subgraph of $\Gamma$ consisting of the non-clean sectors. By definition of a clean sector, each connected component of $\Gamma[V(\Gamma)\setminus V(\Gamma')]$ is a tree with exactly one neighbor in $\Gamma'$.  By Lemma~\ref{lem:fewnonclean}, the maximum degree of $\Gamma'$ is at most $8$. Moreover, by Theorem~\ref{thm:sectortw}, the treewidth of $\Gamma'$ is at most $7^{\kappa-1}\le 7^{4k-1}$. 

By the result of \cite{GurskiW25}, if we take a tree decomposition  $(T, \chi)$ of $\Gamma'$ of width $\tw(\Gamma')$ and add to each bag $\chi(t)$, for $t\in V(T)$, all vertices in $\Gamma'$ at distance at most $3\kappa\le 12k$, we obtain a tree decomposition $(T, \chi')$ of $\Gamma'$ of width at most $(7^{4k-1}+1)\cdot (1+8\cdot \sum_{i=0}^{12k-1}7^i) -1\le  8\cdot7^{16k}$. Furthermore, by Lemma~\ref{lem:pathsectors}, any straight path in $G$ intersects at most $3\kappa$ sectors. Hence, for any straight path $P$ in $G$, there is a bag in $(T,\chi')$ that contains all non-clean sectors intersected by $P$. 

The tree decomposition $(T,\chi')$ serves as the starting point for constructing the tree decomposition of $G$. Our goal is to build a tree decomposition $(T', \chi'')$ satisfying the following properties:
\begin{itemize}
    \item $T$ is a subtree of $T'$;
    \item for $t\in T$, the bag $\chi''(t)$ contains $f(k)$ vertices from each extended sector in $\chi'(t)$, for some computable function $f$ specifies later; and 
    \item each tree in $T' - T$ corresponds to a straight path in $G$ and covers the connected component that contains that straight path. 
     \end{itemize}

Let $S\in \Gamma'$ be a sector and let $S^+$ be its extension. We define the \emph{initial special vertices} of $S$ to be all vertices in $S$ that lie in a square with side length $q(k)$ having an endpoint of a baseline of $S$ as corner (See Figure~\ref{fig:initial_special_vertices}). The \emph{initial special vertices} of $S^+$ are defined as the initial special vertices of $S$ together with their extensions. 

A vertex $v$ in $S^+$ is called \emph{special} if either: (1) it is an initial special vertex, or (2) it is at distance at most $q(k)$ from a baseline of $S$, and there exists a straight path through $v$ that avoids all initial special vertices in $S^+$ and contains an initial special vertex from an extension of a sector $S' \neq S$. 

Since $G$ is a subgrid and each sector has at most $4$ baseline endpoints, there are at most $4q^2(k)+4q(k)\lambda$ initial special vertices in each extended sector. 

Since any straight path intersects at most $12k$ sectors and each sector neighbors at most $8$ non-clean sectors, for every sector $S$, there are at most $8\cdot(7)^{12k}$ many sectors on any straight path through $S$. Every initial special vertex of a sector $S'$ on the same straight path as $S$ can be aligned with at most $2q(k)+\lambda$ vertices of $S^+$ that are within distance $q(k)$ from a baseline of $S$.

For each $t\in T$, we define $\chi''(t)$ to be the set of all special vertices in the extensions of sectors in $\chi'(t)$. We obtain the bound $|\chi''(t)|\le 8\cdot 7^{16k}\cdot (8\cdot 7^{12k} + 1)\cdot (2q(k)+\lambda)\cdot(4q^2(k)+4q(k)\lambda)\le 8^2\cdot 7^{28k+1}\cdot(8q^3(k)+12q^2(k)\cdot\lambda+4q(k)\lambda^2) \le  7^{28k+5}\cdot q^3(k)$. The previous upper bound was obtained using the fact that $\lambda \leq q(k)$.

We now consider the graph induced by the non-special vertices. Let $H$ be a connected component of this graph. We claim that $H$ is a subdivision of a subgrid of bounded height or width. Moreover, $H$ is contained in the extensions of at most $12k$ non-clean sectors intersecting a common straight path, and all special vertices adjacent to $H$ belong to these same sector extensions.

We first analyze how special vertices split each extended sector. If $S$ is a histogram or a staircase sector, then every vertex of $S$ lies at distance at most $q(k)$ from a baseline. By the definition of the baseline and the special vertices, it follows that if $P_S$ is a $d$-baseline for $d\in \directions$, then for every vertex $v\in P_S$, either all vertices (including $v$) that are reachable from $v$ in direction $d$ are special or none of them is special. Additionally, the first and last $q(k)$ vertices on $P_S$ are special. Consequently, removing the special vertices in $S^+$ splits it into connected components that contain a subpath of a single baseline, have bounded height (or width), and connect to other sectors only through that baseline. 

If $S$ is a rectangle sector, recall that it consists of a second frame containing all vertices within distance $q(k)$ from a boundary of $S$, together with horizontal and vertical degree-$2$ paths. Removing the initial special vertices splits the second frame into four disjoint subgrids of bounded height, top ($S_T$), bottom ($S_B$), left ($S_L$), and right ($S_R$). Each degree-2 path connects either$S_T$ to $S_B$ or $S_L$ to $S_R$. Moreover, if a special vertex lies in either $S_T$ or $S_B$, then all the at most $2q(k)$ second frame vertices on the same vertical line as $v$ are also special; an analogous statement holds for $S_L$ and $S_R$ along horizontal lines. 

Thus, the connected components obtained after removing special vertices from a rectangle sector can be classified as follows: 
\begin{itemize}
    \item degree-$2$ paths connecting two special vertices within $S$;
    
    \item \emph{horizontal} components that each contains a subgrid of $S_L$ and a subgrid of $S_R$, plus degree-2 horizontal paths connecting these two subgrids; and
    \item \emph{vertical} components that each contains a subgrid of $S_B$, a subgrid of $S_T$ and the degree-2 vertical paths connecting these two subgrids.
\end{itemize} 
Consider a vertical component $H$; the horizontal component case is symmetric. By definition of special vertices, no initial special vertex lies on the same vertical line as any vertex of $H$. Since the left-most and right-most vertices of any sector are initial special vertices, every vertical straight path through a vertex in $H$ intersects the same set of at most $12k$ sectors. Any connected components in other extended sectors intersected by these paths are either vertical components of a rectangle sector, or components of height at most $q(k)$ of a histogram or staircase sector. 

Therefore, the connected component containing $H$ consists of at most $24k$ subgrids, each of height at most $q(k)$, connected by vertical degree-two paths. Hence, this component is a subdivision of a subgrid of height at most $24k\cdot (q(k)+1)+\lambda$, where the $\lambda$ terms comes from the fact that the top-most and the bottom-most non-clean sector can be connected to clean sectors, that are contained in their extensions. Consequently, the connected component $C$ of $H$ has bounded treewidth, and its neighborhood in $G$ is contained in at most $12k$ extended sectors intersecting a common straight vertical path. Thus, there exists a bag in $(T,\chi'')$ containing $N(C)$, and we can attach a tree decomposition of $C\cup N(C)$ of width at most $24k\cdot (q(k)+1)+\lambda + |N(C)|\le 72k(q(k)+1)+\lambda$. 
 
 If a connected component does not contain part of any rectangle sector, the analysis is simpler, as the height or width of such components is already bounded by $(k+1)(\lambda+1)< q(k)$. Likewise, the tree decomposition for a connected component consisting of a degree-2 path connecting two special vertices in the same sector $S$ of width at most $3$ can be attached to the bag containing the special vertices of $S$.

 Finally, attaching all these tree decompositions to $(T,\chi'')$ does not increase the overall treewidth, since every attached bag has less than $7^{28k+5}\cdot q^3(k)$ vertices, which is the upper bound on the treewidth of $(T, \chi'')$. 
\end{proof}

\section{Putting it Together}
\label{sec:mainproof}
In this section, we combine the insights obtained above to prove Theorem~\ref{thm:main}. To this end, we recall that \cmpl\ is fixed-parameter tractable w.r.t.\ the treewidth of the input graph $G$ plus the number $k$ of robots~\cite[Theorem 2]{icalp}. 

\thmmain*
 
\begin{proof}
    Let $\I= (G, \R, k, \ell)$ be an instance of \cmpl{} on discretized polygons. We begin by applying Theorems~\ref{thm:fptapx}~and~\ref{thm:boundedturnsdistance} to obtain $g(k)$-many instances of \dcmp{} that satisfy Assumption~\ref{ass:canonical}. We solve each instance independently and return \YES~if and only if at least one of them is a \YES-instance. 
    
    In order to solve each one of these instances, we proceed as follows.
         We fix an arbitrary ordering of the ports, defined by the terminals of the instance, and compute the bend vectors of all vertices in $G$. This can be carried out easily in polynomial time by computing bend distances to each port via an iterative procedure:
    \begin{itemize}
        \item first, we compute all vertices reachable with $0$ bends; 
        \item then, for each $i\in [n]$, we compute the vertices reachable by $i+1$ bends by identifying all previously unreachable vertices that are reachable by a straight path from some vertex reachable with $i$ bends.
    \end{itemize}
     Using the bend vectors, we compute all sectors of $G$. This is achieved by partitioning the vertices according to their bend vectors. For each bend vector $\vec{v}$ that is realized by at least one vertex, we add to the set of sectors all connected components of the subgraph induced by the vertices with bend vector $\vec{v}$.
    We then exhaustively apply Reduction Rules~\ref{red:hist}-\ref{red:stair} to all sectors; this takes polynomial time. We follow that by the application of Reduction Rule~\ref{red:rect}.  
    Note that we do not recompute sectors after applying the reduction rules. 
    By Theorem~\ref{thm:treewidth}, after these reductions, the resulting graph has treewidth $k^{\Oh(k^2)}$, allowing us to apply Theorem 2 from~\cite{icalp} to the reduced instance to solve it. Since the algorithms in Theorem~\ref{thm:fptapx} and the aforementioned treewidth-based algorithm~\cite[Theorem 2]{icalp} run in \FPT{}-time, it follows that our algorithm runs in \FPT{}-time as well. 
\end{proof}

\fi
\section{Conclusion}
\label{sec:conclusion}
In this paper, we establish the fixed-parameter tractability of \textsc{Coordinated Motion Planning} on graphs arising from discretizations of simple polygons. Our result extends previously-known tractability results beyond full rectangular grids to a broader and more realistic class of graphs.

The central contribution of our work is a reduction of the input graph to one of bounded treewidth. We achieve this by introducing a novel notion of equivalence among vertices, defined in terms of having equivalent ``roadmaps'' to the set of designated terminal vertices. This notion enables us to partition the graph into regions, called sectors, analyze their structure, and derive appropriate reduction rules based on a characterization of these sectors. Notably, our reduction rules do not follow the standard paradigm of iteratively removing vertices. Instead, they go beyond vertex deletion by introducing auxiliary gadgets, resulting in a reduced graph that may fall outside the original graph class (e.g., violating planarity), while still having bounded treewidth. This contrasts with classical approaches, which typically obtain bounded treewidth by successively deleting vertices from the original graph.

It is not too difficult to extend the \FPT{} result presented here to full grids containing a single hole, which can be viewed as a “flipped” discretized polygon. This naturally leads to the following open questions:
\begin{itemize}
\item[(i)] Can the \FPT{} techniques developed in this paper be extended to full grids with a constant number of holes, or even a parameter-bounded number of holes?
\item[(ii)] Is the \gcmp{} problem \FPT{} on subgrids, or even on planar graphs?
 \end{itemize}

We strongly believe that the answer to the first question is affirmative and that a fixed-parameter algorithm is attainable by refining the approach introduced here. For question~(ii), we conjecture that the answer remains affirmative as well, and that reducing the graph to one of bounded treewidth remains the correct high-level strategy. Such a result, however, would likely require new insights, including more sophisticated vertex classifications and novel treewidth reduction rules. In our view, this constitutes the most pressing open problem arising from the study of the \gcmp{} problem. Finally, we remark that fixed-parameter (in)tractability for \gcmp\ w.r.t.\ the number of agents is not settled even for general graphs.

\bibliography{ref}

@ARTICLE{grid2,
  author={Solis, Irving and Motes, James and Sandström, Read and Amato, Nancy M.},
  journal={IEEE Robotics and Automation Letters}, 
  title={Representation-Optimal Multi-Robot Motion Planning Using Conflict-Based Search}, 
  year={2021},
  volume={6},
  number={3},
  pages={4608-4615}}

@article{grid1,
title = {An empirical evaluation of learning-based multi-agent path finding algorithms in warehouse environments},
author={Rivas, Alberto and Alshamsi, Mohamed and Garc{\'\i}a, Sergio and Pelliccione, Patrizio and Chicano, Francisco},
journal = {Robotics and Autonomous Systems},
volume = {194},
pages = {105149},
year = {2025}
}

@inproceedings{0001TKDKK21,
  author       = {Jiaoyang Li and
                  Andrew Tinka and
                  Scott Kiesel and
                  Joseph W. Durham and
                  T. K. Satish Kumar and
                  Sven Koenig},
  title        = {Lifelong Multi-Agent Path Finding in Large-Scale Warehouses},
  booktitle    = {Thirty-Fifth {AAAI} Conference on Artificial Intelligence, {AAAI}
                  2021, Thirty-Third Conference on Innovative Applications of Artificial
                  Intelligence, {IAAI} 2021, The Eleventh Symposium on Educational Advances
                  in Artificial Intelligence, {EAAI} 2021, Virtual Event, February 2-9,
                  2021},
  pages        = {11272--11281},
  publisher    = {{AAAI} Press},
  year         = {2021},
  url          = {https://doi.org/10.1609/aaai.v35i13.17344},
  doi          = {10.1609/AAAI.V35I13.17344},
  bibsource    = {dblp computer science bibliography, https://dblp.org}
}

@inproceedings{MorrisPLMMKK16,
  author       = {Robert Morris and
                  Corina S. Pasareanu and
                  Kasper S{\o}e Luckow and
                  Waqar Malik and
                  Hang Ma and
                  T. K. Satish Kumar and
                  Sven Koenig},
  editor       = {Daniele Magazzeni and
                  Scott Sanner and
                  Sylvie Thi{\'{e}}baux},
  title        = {Planning, Scheduling and Monitoring for Airport Surface Operations},
  booktitle    = {Planning for Hybrid Systems, Papers from the 2016 {AAAI} Workshop,
                  Phoenix, Arizona, USA, February 13, 2016},
  series       = {{AAAI} Technical Report},
  volume       = {{WS-16-12}},
  publisher    = {{AAAI} Press},
  year         = {2016},
  url          = {http://www.aaai.org/ocs/index.php/WS/AAAIW16/paper/view/12611},
  bibsource    = {dblp computer science bibliography, https://dblp.org}
}

@inproceedings{VelosoBCR15,
  author       = {Manuela M. Veloso and
                  Joydeep Biswas and
                  Brian Coltin and
                  Stephanie Rosenthal},
  editor       = {Qiang Yang and
                  Michael J. Wooldridge},
  title        = {CoBots: Robust Symbiotic Autonomous Mobile Service Robots},
  booktitle    = {Proceedings of the Twenty-Fourth International Joint Conference on
                  Artificial Intelligence, {IJCAI} 2015, Buenos Aires, Argentina, July
                  25-31, 2015},
  pages        = {4423},
  publisher    = {{AAAI} Press},
  year         = {2015},
  url          = {http://ijcai.org/Abstract/15/656},
  bibsource    = {dblp computer science bibliography, https://dblp.org}
}

@inproceedings{FKMMO25,
  title={Exact algorithms for multiagent path finding with communication constraints on tree-like structures},
  author={Fioravantes, Foivos and Knop, Du{\v{s}}an and K{\v{r}}i{\v{s}}t'an, Jan Maty{\'a}{\v{s}} and Melissinos, Nikolaos and Opler, Michal},
  booktitle={Proceedings of the AAAI Conference on Artificial Intelligence},
  volume={39},
  number={22},
  pages={23177--23185},
  year={2025}
}

@inproceedings{FioravantesKKMO24,
  author       = {Foivos Fioravantes and
                  Dušan Knop and
                  Jan Matyáš Křišťan and
                  Nikolaos Melissinos and
                  Michal Opler},
  editor       = {Michael J. Wooldridge and
                  Jennifer G. Dy and
                  Sriraam Natarajan},
  title        = {Exact Algorithms and Lowerbounds for Multiagent Path Finding: Power
                  of Treelike Topology},
  booktitle    = {Thirty-Eighth {AAAI} Conference on Artificial Intelligence},
  pages        = {17380--17388},
  publisher    = {{AAAI} Press},
  year         = {2024},
  url          = {https://doi.org/10.1609/aaai.v38i16.29686} 
}

@inproceedings{EGKS25SoCG,
  author       = {Eduard Eiben and
                  Robert Ganian and
                  Iyad Kanj and
                  M. S. Ramanujan},
  editor       = {Oswin Aichholzer and
                  Haitao Wang},
  title        = {A Minor-Testing Approach for Coordinated Motion Planning with Sliding
                  Robots},
  booktitle    = {41st International Symposium on Computational Geometry, SoCG 2025,
                  Kanazawa, Japan, June 23-27, 2025},
  series       = {LIPIcs},
  volume       = {332},
  pages        = {44:1--44:15},
  publisher    = {Schloss Dagstuhl - Leibniz-Zentrum f{\"{u}}r Informatik},
  year         = {2025},
  url          = {https://doi.org/10.4230/LIPIcs.SoCG.2025.44},
  doi          = {10.4230/LIPICS.SOCG.2025.44},
  bibsource    = {dblp computer science bibliography, https://dblp.org}
}

@inproceedings{icalp,
  author       = {Argyrios Deligkas and
                  Eduard Eiben and
                  Robert Ganian and
                  Iyad Kanj and
                  M. S. Ramanujan},
  editor       = {Karl Bringmann and
                  Martin Grohe and
                  Gabriele Puppis and
                  Ola Svensson},
  title        = {Parameterized Algorithms for Coordinated Motion Planning: Minimizing
                  Energy},
  booktitle    = {51st International Colloquium on Automata, Languages, and Programming,
                  {ICALP} 2024, Tallinn, Estonia, July 8-12, 2024},
  series       = {LIPIcs},
  volume       = {297},
  pages        = {53:1--53:18},
  publisher    = {Schloss Dagstuhl - Leibniz-Zentrum f{\"{u}}r Informatik},
  year         = {2024},
note = {Full version available at \url{https://arxiv.org/abs/2404.15950}}
}

@article{kiva,
author = {Wurman, Peter and D'Andrea, Raffaello and Mountz, Mick},
year = {2008},
month = {03},
pages = {9-20},
title = {Coordinating Hundreds of Cooperative, Autonomous Vehicles in Warehouses.},
volume = {29},
journal = {AI Magazine}
}

@article{sharir1,
  title={{On the “piano movers'” problem I. The case of a two-dimensional rigid polygonal body moving amidst polygonal barriers}},
  author={Schwartz, Jacob T. and Sharir, Micha},
  journal={Communications on Pure and Applied Mathematics},
  volume={36},
  number={3},
  pages={345--398},
  year={1983},
  publisher={Wiley Online Library}
}

@article{sharir2,
  title={{On the “piano movers” problem. II. General techniques for computing topological properties of real algebraic manifolds}},
  author={Schwartz, Jacob T. and Sharir, Micha},
  journal={Advances in Applied Mathematics},
  volume={4},
  number={3},
  pages={298--351},
  year={1983},
  publisher={Elsevier}
}

@INPROCEEDINGS{alagar,
  author={Ramanathan, Geetha and Alagar, Vangalur S.},
  booktitle={ICRA}, 
  title={Algorithmic motion planning in robotics: Coordinated motion of several disks amidst polygonal obstacles}, 
  year={1985},
  volume={2},
  number={},
  pages={514-522} }

@inproceedings{GeftHalperin,
author = {Geft, Tzvika and Halperin, Dan},
title = {Refined Hardness of Distance-Optimal Multi-Agent Path Finding},
year = {2022},
isbn = {9781450392136},
booktitle = {AAMAS},
pages = {481–488},
numpages = {8},
location = {Virtual Event, New Zealand}
}

@book{DowneyFellows13,
  author    = {Rodney G. Downey and
               Michael R. Fellows},
  title     = {Fundamentals of Parameterized Complexity},
  publisher = {Springer},
  series    = {Texts in Computer Science},
  year      = {2013} }

@book{CyganFKLMPPS15,
  author    = {Marek Cygan and
               Fedor V. Fomin and
               Lukasz Kowalik and
               Daniel Lokshtanov and
               D{\'{a}}niel Marx and
               Marcin Pilipczuk and
               Michal Pilipczuk and
               Saket Saurabh},
  title     = {Parameterized Algorithms},
  publisher = {Springer},
  year      = {2015}
}

@BOOK{FlumGrohe06,
  title = {Parameterized Complexity Theory},
  publisher = {Springer},
  year = {2006},
  author = {J\"{o}rg Flum and Martin Grohe},
  volume = {XIV},
  series = {Texts in Theoretical Computer Science. An EATCS Series},
  address = {Berlin}
}

@article{heuristic1,
title = {Conflict-based search for optimal multi-agent pathfinding},
journal = {Artificial Intelligence},
volume = {219},
pages = {40-66},
year = {2015},
 author = {Guni Sharon and Roni Stern and Ariel Felner and Nathan R. Sturtevant}}

@ARTICLE{heuristic2,
  author={Hart, Peter E. and Nilsson, Nils J. and Raphael, Bertram},
  journal={IEEE Transactions on Systems Science and Cybernetics}, 
  title={A Formal Basis for the Heuristic Determination of Minimum Cost Paths}, 
  year={1968},
  volume={4},
  number={2},
  pages={100-107}}

@inproceedings{heuristic3,
  author={Eli Boyarski and Ariel Felner and Roni Stern and Guni Sharon and David Tolpin and Oded Betzalel and Solomon Eyal Shimony},
  title={{ICBS}: Improved Conflict-Based Search Algorithm for Multi-Agent Pathfinding},
  year={2015},
  pages={740-746}, 
  booktitle={IJCAI}
}

@article{heuristic4,
title = {Subdimensional expansion for multirobot path planning},
journal = {Artificial Intelligence},
volume = {219},
pages = {1-24},
year = {2015}, 
author = {Glenn Wagner and Howie Choset}}

@article{sharir,
  title={On the Piano Movers' Problem: {III.} Coordinating the Motion of Several Independent Bodies: The Special Case of Circular Bodies Moving Amidst Polygonal Barriers},
  author={Jacob T. Schwartz and Micha Sharir},
  journal={The International Journal of Robotics Research},
  year={1983},
  volume={2},
  pages={46 - 75}
}

@ARTICLE{lavalle1,
  author={Yu, Jingjin and LaValle, Steven M.},
  journal={IEEE Transactions on Robotics}, 
  title={Optimal Multirobot Path Planning on Graphs: Complete Algorithms and Effective Heuristics}, 
  year={2016},
  volume={32},
  number={5},
  pages={1163-1177}
  }

@article{nphard1,
author = {Daniel Ratner and Manfred Warmuth},
title = {The ($n^2-1$)-puzzle and related relocation problems},
journal = {Journal of Symbolic Computation},
volume = {10},
number = {2},
pages = {111-137},
year = {1990} 
}

@article{nphard2,
author={Erik D. Demaine and Mikhail Rudoy},
  title={A simple proof that the ($n^2-1$)-puzzle is hard},  
  journal={Theoretical Computer Science},
  year={2018},
  volume={732},
  pages={80-84}
}

@inproceedings{lavalle,
  author    = {Jingjin Yu and
               Steven M. LaValle},
  title     = {Structure and Intractability of Optimal Multi-Robot Path Planning
               on Graphs},
  booktitle = {{AAAI}},
  pages={1443--1449},
  year      = {2013}
  
}

@article{demaine,
  author    = {Erik D. Demaine and
               S{\'{a}}ndor P. Fekete and
               Phillip Keldenich and
               Henk Meijer and
               Christian Scheffer},
  title     = {Coordinated Motion Planning: Reconfiguring a Swarm of Labeled Robots with Bounded Stretch},
  journal   = {{SIAM} Journal on Computing},
  volume    = {48},
  number    = {6},
  pages     = {1727--1762},
  year      = {2019}
}

@ARTICLE{banfi,
  author={Banfi, Jacopo and Basilico, Nicola and Amigoni, Francesco},
  journal={IEEE Robotics and Automation Letters}, 
  title={Intractability of Time-Optimal Multirobot Path Planning on {2D} Grid Graphs with Holes}, 
  year={2017},
  volume={2},
  number={4},
  pages={1941-1947}}

@inproceedings{halprinunlabeled,
  author    = {Bahareh Banyassady and
               Mark de Berg and
               Karl Bringmann and
               Kevin Buchin and
               Henning Fernau and
               Dan Halperin and
               Irina Kostitsyna and
               Yoshio Okamoto and
               Stijn Slot},
  
  title     = {Unlabeled Multi-Robot Motion Planning with Tighter Separation Bounds},
  booktitle = {{SoCG}},
  volume    = {224},
  pages     = {12:1--12:16},
  year      = {2022} 
}

@incollection{survey,
author = {Adrian Dumitrescu},
title = {Motion Planning and Reconfiguration for Systems of Multiple Objects},
booktitle = {Mobile Robots},
publisher = {IntechOpen},
address = {Rijeka},
year = {2007},
editor = {Sascha Kolski},
chapter = {24}
}

@book{Diestel12,
  author    = {Reinhard Diestel},
  title     = {Graph Theory, 4th Edition},
  series    = {Graduate texts in mathematics},
  volume    = {173},
  publisher = {Springer},
  year      = {2012}
}

@inproceedings{EibenGK23,
  author       = {Eduard Eiben and
                  Robert Ganian and
                  Iyad Kanj},
  title        = {The Parameterized Complexity of Coordinated Motion Planning},
  booktitle    = {SoCG},
  volume       = {258},
  pages        = {28:1--28:16},
  year         = {2023},
note = {full version at https://arxiv.org/abs/2312.07144}
}

@inproceedings{PapadimitriouRST94,
  author       = {Christos H. Papadimitriou and
                  Prabhakar Raghavan and
                  Madhu Sudan and
                  Hisao Tamaki},
  title        = {Motion Planning on a Graph (Extended Abstract)},
  booktitle    = {STOC},
  pages        = {511--520},
  year         = {1994}
}

@article{socg2021,
  author       = {S{\'{a}}ndor P. Fekete and
                  Phillip Keldenich and
                  Dominik Krupke and
                  Joseph S. B. Mitchell},
  title        = {Computing Coordinated Motion Plans for Robot Swarms: The {CG:} {SHOP}
                  Challenge 2021},
  journal      = {{ACM} Journal on Experimental Algorithmics},
  volume       = {27},
  pages        = {3.1:1--3.1:12},
  year         = {2022}
}

@inproceedings{BhoreGKMN23,
  author       = {Sujoy Bhore and
                  Robert Ganian and
                  Liana Khazaliya and
                  Fabrizio Montecchiani and
                  Martin N{\"{o}}llenburg},
  editor       = {Erin W. Chambers and
                  Joachim Gudmundsson},
  title        = {Extending Orthogonal Planar Graph Drawings Is Fixed-Parameter Tractable},
  booktitle    = {39th International Symposium on Computational Geometry, SoCG 2023,
                  June 12-15, 2023, Dallas, Texas, {USA}},
  series       = {LIPIcs},
  volume       = {258},
  pages        = {18:1--18:16},
  publisher    = {Schloss Dagstuhl - Leibniz-Zentrum f{\"{u}}r Informatik},
  year         = {2023},
  url          = {https://doi.org/10.4230/LIPIcs.SoCG.2023.18},
  doi          = {10.4230/LIPICS.SOCG.2023.18},
  bibsource    = {dblp computer science bibliography, https://dblp.org}
}

@article{DeligkasEGKLR26,
  author       = {Argyrios Deligkas and
                  Eduard Eiben and
                  Robert Ganian and
                  Iyad Kanj and
                  Dominik Leko and
                  M. S. Ramanujan},
  title        = {Routing few robots in a crowded network},
  journal      = {J. Comput. Syst. Sci.},
  volume       = {157},
  pages        = {103753},
  year         = {2026},
  url          = {https://doi.org/10.1016/j.jcss.2025.103753},
  doi          = {10.1016/J.JCSS.2025.103753},
  bibsource    = {dblp computer science bibliography, https://dblp.org}
}

@inproceedings{SODA26new,
  author       = {Benjamin Holmgren and Pankaj K. Agarwal and Alex Steiger},
  title        = {Near-Optimal Min-Sum Multi-Robot Motion Planning in a Planar Polygonal Environment},
  booktitle    = {Proceedings of the 2026 Annual {ACM-SIAM} Symposium on Discrete Algorithms,
                  {SODA} 2026},
  publisher    = {{SIAM}},
  year         = {2026},
note = {to appear}
}

@article{AdlerKKLST17,
  author       = {Isolde Adler and
                  Stavros G. Kolliopoulos and
                  Philipp Klaus Krause and
                  Daniel Lokshtanov and
                  Saket Saurabh and
                  Dimitrios M. Thilikos},
  title        = {Irrelevant vertices for the planar Disjoint Paths Problem},
  journal      = {J. Comb. Theory {B}},
  volume       = {122},
  pages        = {815--843},
  year         = {2017},
  url          = {https://doi.org/10.1016/j.jctb.2016.10.001},
  doi          = {10.1016/J.JCTB.2016.10.001},
  bibsource    = {dblp computer science bibliography, https://dblp.org}
}

@inproceedings{SODA26paths,
  author       = {Michał Pilipczuk and Giannos Stamoulis and Michał Włodarczyk},
  title        = {Planar Disjoint Shortest Paths is Fixed-Parameter Tractable},
  booktitle    = {Proceedings of the 2026 Annual {ACM-SIAM} Symposium on Discrete Algorithms,
                  {SODA} 2026},
  publisher    = {{SIAM}},
  year         = {2026},
note = {to appear}
}

@article{GurskiW25,
  author       = {Frank Gurski and
                  Robin Weishaupt},
  title        = {The Behavior of Tree-Width and Path-Width Under Graph Operations and
                  Graph Transformations},
  journal      = {Algorithms},
  volume       = {18},
  number       = {7},
  pages        = {386},
  year         = {2025},
  url          = {https://doi.org/10.3390/a18070386},
  doi          = {10.3390/A18070386},
}

@inproceedings{DeligkasEGK025,
  author       = {Argyrios Deligkas and
                  Eduard Eiben and
                  Robert Ganian and
                  Iyad Kanj and
                  M. S. Ramanujan},
  editor       = {Sanmay Das and
                  Ann Now{\'{e}} and
                  Yevgeniy Vorobeychik},
  title        = {Parameterized Algorithms for Multiagent Pathfinding on Trees},
  booktitle    = {Proceedings of the 24th International Conference on Autonomous Agents
                  and Multiagent Systems, {AAMAS} 2025, Detroit, MI, USA, May 19-23,
                  2025},
  pages        = {584--592},
  publisher    = {International Foundation for Autonomous Agents and Multiagent Systems
                  / {ACM}},
  year         = {2025},
  url          = {https://dl.acm.org/doi/10.5555/3709347.3743574},
  doi          = {10.5555/3709347.3743574},
  bibsource    = {dblp computer science bibliography, https://dblp.org}
}

@article{DemaineHM14,
  author       = {Erik D. Demaine and
                  Mohammad Taghi Hajiaghayi and
                  D{\'{a}}niel Marx},
  title        = {Minimizing Movement: Fixed-Parameter Tractability},
  journal      = {{ACM} Trans. Algorithms},
  volume       = {11},
  number       = {2},
  pages        = {14:1--14:29},
  year         = {2014},
  url          = {https://doi.org/10.1145/2650247},
  doi          = {10.1145/2650247},
}
 \end{document}